\documentclass[a4paper,UKenglish,cleveref, autoref, thm-restate]{lipics-v2021}

\usepackage{tikz}
\usepackage{amsmath}
\usepackage{xspace}
\usetikzlibrary{shapes.geometric}
\usepackage{forest}
\usepackage{amssymb}
\usepackage{mathrsfs}
\let\oldtextsc\textsc
\renewcommand{\textsc}[1]{\textnormal{\oldtextsc{#1}}}
\DeclareFontFamily{U}{rsfs}{}
\DeclareFontShape{U}{rsfs}{m}{n}{<-> s*[0.9] rsfs10}{}

\title{Deciding the Common Fragment of CTL with Past and LTL}

\author{Massimo Benerecetti}{Università degli Studi di Napoli ``Federico II'', Italy}{massimo.benerecetti@unina.it}{https://orcid.org/0000-0003-4664-6061}{}
\author{Dario {Della Monica}}{Università degli Studi di Udine, Italy}{dario.dellamonica@uniud.it}{https://orcid.org/0000-0001-9743-665X}{}
\author{Angelo Matteo}{Università degli Studi di Udine, Italy}{angelo.matteo@uniud.it}{https://orcid.org/0009-0009-4663-1522}{}
\author{Fabio Mogavero}{Università degli Studi di Napoli ``Federico II'', Italy}{fabio.mogavero@unina.it}{https://orcid.org/0000-0002-5140-5783}{}
\author{Gabriele Puppis}{Università degli Studi di Udine, Italy}{gabriele.puppis@uniud.it}{https://orcid.org/0000-0001-9831-3264}{}

\authorrunning{M. Benerecetti, D. Della Monica, A. Matteo, F. Mogavero and G. Puppis} 

\Copyright{Massimo Benerecetti, Dario Della Monica, Angelo Matteo, Fabio Mogavero and Gabriele Puppis} 

\ccsdesc[500]{Theory of computation~Tree languages}
\ccsdesc[500]{Theory of computation~Modal and temporal logics}
\ccsdesc[300]{Theory of computation~Logic and verification}
\ccsdesc[500]{Theory of computation~Automata over infinite objects}

\keywords{Membership problems, tree languages, tree logics, tree automata, Monadic Path Logic, CTL$^*$, CTL with past} 

\nolinenumbers 

\NewDocumentCommand\Dom{r()}{\mathop{Dom}(#1)}
\NewDocumentCommand\subtree{e{_} m}{#2_{#1}}  

\NewDocumentCommand\AP{}{\mathit{AP}}  

\NewDocumentCommand\Lang{r()}{\mathscr{L}(#1)}  

\let\oldtextsc\textsc
\renewcommand\textsc[1]{\text{\upshape\scshape #1}}

\newcommand{\MSO}{\textsc{MSO}\xspace}

\newcommand{\MTL}{\textsc{MTL}\xspace}
\newcommand{\MPL}{\textsc{MPL}\xspace}
\newcommand{\WMTL}{\textsc{WMTL}\xspace}

\newcommand{\WMPL}{\textsc{WMPL}\xspace}
\newcommand{\FO}{\textsc{FO}\xspace}
\newcommand{\WMSO}{\textsc{WMSO}\xspace}

\newcommand{\WCL}{\textsc{WMCL}\xspace}

\newcommand{\APCTL}{\textsc{APCTL}\xspace}
\newcommand{\EPCTL}{\textsc{EPCTL}\xspace}
\newcommand{\PCTLs}{\textsc{PCTL$^*$}\xspace}
\newcommand{\CTLs}{\textsc{CTL$^*$}\xspace}

\newcommand{\obbligationCTLs}{\textsc{ObligationCTL$^*$}\xspace}
\newcommand{\ACTL}{\textsc{ACTL}\xspace}

\newcommand{\PCTL}{\textsc{PCTL}\xspace}
\newcommand{\CTL}{\textsc{CTL}\xspace}
\newcommand{\LTL}{\textsc{LTL}\xspace}
\newcommand{\PLTL}{\textsc{PLTL}\xspace}

\newcommand{\safeLTL}{\textsc{SafeLTL}\xspace}
\newcommand{\cosafeLTL}{\textsc{coSafeLTL}\xspace}
\newcommand{\safePLTL}{\textsc{SafePLTL}\xspace}
\newcommand{\cosafePLTL}{\textsc{coSafePLTL}\xspace}

\newcommand{\Lmu}{\textsc{L$_\mu$}\xspace}
\newcommand{\AFLmu}{\textsc{AFL$_\mu$}\xspace}

\let\diamond\Diamond
\let\square\Box

\newcommand{\APT}{\textsc{APT}\xspace}
\newcommand{\ABT}{\textsc{ABT}\xspace}
\newcommand{\NBT}{\textsc{NBT}\xspace}
\newcommand{\AWT}{\textsc{AWT}\xspace}
\newcommand{\HWT}{\textsc{HWT}\xspace}

\newcommand{\NBW}{\textsc{NBW}\xspace}
\newcommand{\DBW}{\textsc{DBW}\xspace}
\newcommand{\UWW}{\textsc{UWW}\xspace}
\newcommand{\NBWcf}{\textsc{NBW$_{cf}$}\xspace}
\newcommand{\UBWcf}{\textsc{UBW$_{cf}$}\xspace}
\newcommand{\NCWcf}{\textsc{NCW$_{cf}$}\xspace}
\newcommand{\UCWcf}{\textsc{UCW$_{cf}$}\xspace}

\newcommand{\UBW}{\textsc{UBW}\xspace}

\newcommand{\NCW}{\textsc{NCW}\xspace}
\newcommand{\UCW}{\textsc{UCW}\xspace}

\newcommand{\HWTcf}{\textsc{HWT$_{cf}$}\xspace}
\newcommand{\HTcf}{\textsc{HT$_{cf}$}\xspace}
\newcommand{\DBWcf}{\textsc{DBW$_{cf}$}\xspace}

\newcommand{\X}{\mathsf{X}\xspace}

\newcommand{\F}{\mathsf{F}\xspace}
\newcommand{\G}{\mathsf{G}\xspace}
\newcommand{\U}{\mathsf{U}\xspace}

\newcommand{\R}{\mathsf{R}\xspace}

\newcommand{\Y}{\mathsf{Y}\xspace}

\newcommand{\s}{\mathsf{S}\xspace}

\newcommand{\PP}{\mathsf{P}\xspace}

\newcommand{\E}{\mathsf{E}\xspace}
\newcommand{\A}{\mathsf{A}\xspace}
\newcommand{\D}{\mathsf{D}\xspace}

\newcommand{\membership}{\mapsto}

\begin{document}

\maketitle


\begin{abstract}
A central goal of language theory is to compare formalisms by understanding both their expressive overlaps and their relative expressive power.
One particularly challenging question in this direction is the problem of determining the \emph{common fragment} 
of two formalisms $F_1$ and $F_2$, that is, effectively characterise the class $F_1\cap F_2$ of properties that 
can be expressed in both formalisms. This question can be equally phrased as a decision problem: given 
a property expressed in $F_1$ or $F_2$, decide whether the same property can be also expressed in $F_1\cap F_2$. 
A question closely related to this is the \emph{membership problem}, denoted $F_1 \membership F_2$,
which asks whether a property expressed in $F_1$ can be also expressed in $F_2$.
These problems become particularly difficult when \emph{branching-time} formalisms are involved,
in general due to the lack of equivalent algebraic characterizations.

In this work, we prove that $\LTL \cap \PCTL$ is decidable, where \PCTL denotes
\CTL extended with \emph{past operators}.
We do this by showing that both membership problems, $\LTL \membership \PCTL$ and 
$\PCTL \membership \LTL$, are decidable.
The direction $\PCTL \membership \LTL$ follows from suitable combinations of known results. 
The converse direction, $\LTL \membership \PCTL$, requires an automata-theoretic characterisation
of $\PCTL$. 
Specifically, we introduce a new class of automata, called \emph{counter-free hesitant weak tree automata} ($\HWTcf$)
that capture precisely the expressiveness of $\PCTL$, and that are obtained by combining two orthogonal restrictions
on alternating parity tree automata, namely, \emph{counter-free hesitancy} and \emph{weakness}. 
We then prove that, for every word language $L$ defined by an \LTL formula,
the associated tree language $\triangle[L]$ is recognisable by an \HWTcf
if and only if $L$ is recognized by a deterministic B\"uchi word automaton.
Since the latter recognisability problem is known to be decidable, so is the
former.

This result advances the longstanding open problem of deciding $\LTL \cap \CTL$. 
Indeed, that problem can now be reduced to $\PCTL \membership \CTL$, that is, the 
question of when past operators can be eliminated.

\end{abstract}

\newpage


\section{Introduction}\label{sec:intro}

The notion of a \emph{common fragment} is central to expressiveness theory: 
given two formalisms $F_1$ and $F_2$, the aim is to effectively characterise 
the class $F_1 \cap F_2$ of properties that are definable in both.
Closely related is the associated \emph{membership problem} $F_1 \membership F_2$: 
given a specification in $F_1$, determine whether it can be transformed into an 
equivalent specification in $F_2$.
Of course, deciding the common fragment $F_1 \cap F_2$ reduces to solving
two membership problems, i.e.~$F_1 \membership F_2$ and $F_2 \membership F_1$.

These questions have a long history in language theory, tracing back at least to
Sch\"utzenberger's theorem on \emph{star-free word languages} and
\emph{aperiodic monoids}~\cite{Sch65} and to Kamp's theorem connecting Prior's
\emph{tense logic}~\cite{Pri67} with \emph{first-order logic} over
\emph{linearly ordered structures}~\cite{Kam68}.
From these seminal works, a rich line of research has emerged, yielding a
remarkably complete picture for languages of finite and infinite words, where
predicate logics~\cite{Buc60,Buc62,Buc66,Lau68}, temporal
logics~\cite{Pnu77,Pnu81,Wol83}, automata~\cite{McN66,MP71,Cho74,Lad77}, and
algebraic structures~\cite{PP86} are tied together by deep and elegant
correspondences~\cite{PP04}.

%

By contrast, for regular tree languages the situation is considerably more 
difficult and far more fragmented.
A milestone is Rabin's work on the monadic second-order theory of infinite trees, 
which established a decisive connection between \emph{Monadic Second-Order Logic} ($\MSO$) 
and tree automata~\cite{Rab70}. This connection made it possible to study logical questions 
about decidability and definability by means of automata.
Within this framework, full $\MSO$ corresponds to automata with parity conditions, 
whereas weaker logics correspond to weaker acceptance mechanisms. 
In particular, \emph{Weak $\MSO$} ($\WMSO$), where second-order quantification 
ranges only over finite sets, admits an automaton-based characterisation 
in terms of \emph{alternating weak tree automata} ($\AWT$)~\cite{MSS92},
a subclass of alternating parity tree automata where each strongly connected 
component of the transition graph has a single priority.
Thanks to this characterisation, the membership problem $\MSO \membership \WMSO$ 
can be viewed automata-theoretically as the problem of deciding when a parity 
tree automaton can be replaced by an equivalent weak one.
The same parity-versus-weakness boundary also appears with bisimulation-invariant
properies, where it governs the comparison between the \emph{modal $\mu$-calculus} ($\Lmu$)~\cite{Koz83} 
and its \emph{alternation-free fragment} ($\AFLmu$).
More precisely, thanks to the celebrated characterisation theorem of Janin and Walukiewicz~\cite{JW96},
$\Lmu$ captures precisely the bisimulation-invariant fragment of $\MSO$, and 
admits a standard presentation in terms of alternating parity tree automata~\cite{Wil01}.
Its alternation-free fragment, in turn, corresponds to weak alternating automata,
as well as to the bisimulation-invariant fragment of \WMSO.
Thus, the question whether a $\Lmu$ property is definable in $\AFLmu$ can again 
be reduced to a membership problem between classes of automata, namely, $\APT \membership \AWT$; 
in other words, deciding when parity can be replaced by weakness.
This viewpoint has become central in verification: many logical formalisms 
can be analysed via their corresponding automata models, and suitable restrictions 
can often be identified, both at the logic level and at the automaton level, 
that support a number of effective verification procedures~\cite{KV98}.

Within this landscape, the two formalisms most relevant to us are \AFLmu and \CTLs. 
Over infinite binary trees, these formalisms correspond to natural fragments of \MSO.
Indeed, we have already seen that \AFLmu corresponds to \WMSO (note that the 
bisimulation relation is trivial over binary trees). As for \CTLs, this corresponds 
to the \emph{path-quantification fragment} of \MSO (\MPL)~\cite{HT87,MR99,MR03}.
Moreover, recent automata-theoretic characterisations of \CTLs have made this picture 
even sharper, by identifying \emph{counter-free hesitancy} as the structural
constraint on alternating parity tree automata that achieves the same expressive power 
as \CTLs~\cite{BBMP24}.
It is therefore natural to ask for a formalism $L$ that is as expressive as
$\WMSO \cap \MPL$, or, equally, as $\AFLmu \cap \CTLs$.
At the same time, this question appears to be surprisingly open.
Identifying such a logic would have immediate consequences.
First, analogous to Rabin's results for $\MSO \membership \WMSO$~\cite{rabin1970weakly}, this approach would yield a reduction from the membership problem $\MPL \membership \WMSO \cap \MPL$ to the problem $\APT \membership \AWT$.
Second, and more importantly for us, it would provide a new perspective on the
closely related and longstanding open problem of determining the common fragment
$\LTL \cap \CTL$~\cite{EH86}, where $\LTL$ here is interpreted over trees using 
an implicit universal path-quantification.
In other words, the problem of characterising $\LTL \cap \CTL$ amounts to 
understanding the overlap between a linear-time and a branching-time formalism.

The history of the common fragment $\LTL \cap \CTL$ spans almost four decades.
The first major result is the decidability of $\CTLs \mapsto \LTL$, due to
Clarke and Draghicescu, which also settles $\CTL \mapsto \LTL$~\cite{CD88}.
A possible route towards $\CTLs \mapsto \CTL$, conjectured in the same work, was
later ruled out by Boker and Shaulian~\cite{BS18}.
The inverse membership problem, i.e., ~$\LTL \mapsto \CTL$, proved substantially 
more difficult.
Maidl gave a necessary and sufficient condition for $\LTL \mapsto
\ACTL$~\cite{Mai00}, where $\ACTL$ is the syntactic fragment of $\CTL$ in which the existential path quantifier $\E$ is disallowed, and Boja\'nczyk later proved this problem decidable, while
also showing the strict inclusion $\LTL \cap \ACTL \subsetneq \LTL \cap \CTL$.
This is conceptually striking: although $\LTL$ expresses only universal
properties, existential path quantification is still needed on the $\CTL$ side
to capture the full common fragment~\cite{Boj08}.
A further key development is the result of Kupferman and Vardi showing that
$\LTL \mapsto \AFLmu$ is decidable, with an \textsc{ExpSpace} upper bound and a
\textsc{PSpace} lower bound~\cite{KV05}.

The present paper develops this automata-theoretic perspective. 
We introduce a new class of automata, called \emph{counter-free hesitant weak tree automata} ($\HWTcf$). 
The definition combines two orthogonal restrictions: \emph{counter-free hesitancy}, 
which comes from the automata model for $\CTLs$, and \emph{weakness}, which underlies 
the automata model for $\AFLmu$.
This combination makes $\HWTcf$ a natural candidate for capturing the common fragment 
$\CTLs \cap \AFLmu$, and it leads to the following conjecture:

\begin{conjecture}
\label{con:comfrg}
  \HWTcf are as expressive as $\CTLs \cap \AFLmu$, and, equivalently, as $\MPL \cap \WMSO$.
\end{conjecture}

This conjecture provides the conceptual starting point of the paper. 
Even without resolving it in full, the automata class $\HWTcf$ already 
reveals substantial structural information about the target intersection 
and leads to a sequence of results that we outline below.

Our first main result is a logical characterisation of $\HWTcf$ in terms of $\PCTL$, 
namely, $\CTL$ extended with \emph{past operators}. This identifies $\PCTL$ as a 
natural temporal-logic candidate for the common fragment under investigation.
The characterisation is supported by a normal form for $\PCTL$ and by the 
identification of an equivalent fragment of $\CTLs$ in which path formulae 
are restricted to define safety and co-safety properties.
The correspondence also preserves the polarity of path quantification. 
More precisely, $\HWTcf$ that only use universal non-determinism 
(we call them \emph{universal \HWTcf} for short) translate into $\APCTL$, 
i.e.~the fragment of $\PCTL$ that uses only universal path quantification.
Symmetrically, \emph{existential $\HWTcf$} translate into $\EPCTL$, 
i.e.~the fragment of $\PCTL$ that uses only existential path quantification.

\begin{restatable}{theorem}{autopctl}
\label{thm:hwtcf&pctl}
  \HWTcf and \PCTL are effectively equivalent formalisms.
  Moreover, universal \HWTcf (resp., existential \HWTcf) 
  admits translations into \APCTL (resp., \EPCTL).
\end{restatable}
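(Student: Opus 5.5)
We prove the two directions separately, passing through an intermediate logic: the fragment of \CTLs in which every path formula under a path quantifier defines a safety or a co-safety property. We write \safePLTL and \cosafePLTL for these path formulae, taken with past operators, and the intermediate logic is the one built from $\A\varphi$ with $\varphi\in\safePLTL$ and $\E\varphi$ with $\varphi\in\cosafePLTL$, closed under Boolean combinations and state-formula nesting. The plan has three steps. First, show that this fragment and \PCTL are effectively equivalent via a normal form for \PCTL. Second, translate \PCTL (in normal form) into \HWTcf. Third, translate \HWTcf back into the intermediate fragment, tracking the polarity of path quantification throughout.

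\emph{From \PCTL to \HWTcf.} We proceed by structural induction, treating maximal state subformulae as fresh atomic propositions, as in the standard construction for \CTLs automata~\cite{BBMP24}. For a basic formula $\E\psi$ with $\psi$ a co-safety \PLTL formula, we convert $\psi$ into a counter-free deterministic automaton. This is possible by Kamp's theorem and aperiodicity: past operators are absorbed into a counter-free deterministic finite-word automaton reading the prefix, followed by a reachability condition. We then run this automaton existentially along a branch. The resulting component is existential, its single SCC is counter-free, and it carries the rejecting (weak, co-B\"uchi-free) priority, since acceptance only requires eventually reaching a final sink. Dually, $\A\psi$ with $\psi$ safety becomes a universal counter-free component of a single accepting priority. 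Boolean combinations and nesting are then compositions of hesitant components arranged in a DAG of SCCs, which preserves both weakness and counter-freeness. Universal \PCTL, i.e.\ \APCTL, yields only universal components, and existential \PCTL, i.e.\ \EPCTL, only existential ones.

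\emph{From \HWTcf to \PCTL.} This is where the main difficulty lies. We take an \HWTcf, which is hesitant, so each SCC is existential, universal, or transient, and we decompose it by the SCC order. We proceed bottom-up: for each state $q$ in an existential SCC $C$, we express acceptance from $q$ as $\E\varphi_q$, where $\varphi_q$ is a path formula over the propositions and over the already-constructed formulae for the exit states of $C$. The path language of $C$ is recognised by a counter-free word automaton, and weakness makes it reachability-like, i.e.\ co-safety if $C$ is rejecting and safety if $C$ is accepting.

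The key obstacle is to show that such languages are definable in \PLTL in a way that a single path quantifier suffices. The natural idea is to write the formula as ``eventually exit $C$ via a good transition, with the prefix described by a past formula'', i.e.\ $\E\F(\theta\wedge\chi)$, where $\theta$ is a past formula capturing the counter-free prefix run. For this I would use the fact that counter-free languages of finite words are exactly those definable by pure-past \LTL formulae (Kamp's theorem in its past form). A similar argument applies if the existential SCC is accepting: then the language is the safety-type $\E\G$-shaped condition, and we would need to argue that, thanks to weakness, the set of accepting paths forms a safety language which is aperiodic. Its \PLTL definition, again a safety past-form such as $\G\theta$, then yields $\E\G\theta$, which is still \PCTL.

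Finally we flatten everything into \PCTL, using the normal form to eliminate the safety/co-safety path formulae in favour of $\E\F$, $\E\G$ and $\A$-duals of pure-past formulae. The polarity claim then follows by observing that universal automata generate only $\A$-formulae and existential automata only $\E$-formulae.
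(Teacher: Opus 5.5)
You follow the paper's overall plan: induct along the component order, linearise each non-transient component, turn lower-level obligations into fresh propositions, and use duality for universal components. Writing bodies as pure-past $\E\F\theta$ and $\E\G\theta$ instead of the paper's future-only \obbligationCTLs is a harmless variation. The genuine gap is that you never use or establish \emph{visibility}, and both directions depend on it.

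\textbf{From \HWTcf to \PCTL.} The counter-free word automaton for a component is the linearisation $\mathcal{A}^q_{\mathrm{exit}}$. It reads annotated letters $(\sigma,C)$, where the annotation $C$ is the set of exit atoms the run \emph{chooses} to commit to at that node; $C$ is not a function of the path. Acceptance from $q$ therefore means ``there exist a path and an annotation sequence, honoured by the tree, that the linearisation accepts''. That is a projection, and projection does not preserve counter-freeness: annotations that can hold simultaneously can act as a hidden parity bit. The paper recalls that counter-free but non-visible hesitant automata can even define non-\MPL languages.

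Visibility is what makes the step sound. It gives, for distinct annotations $C\neq C'$, atoms with complementary languages, so at most one annotation can be honoured at each node. The annotated word is then determined by the path, and you can substitute: replace $p_C$ by the already-built formula $\varphi_C$ for $\bigwedge C$, replace $\neg p_C$ by $\bigvee_{C'\neq C}\varphi_{C'}$, and conjoin $\G\bigvee_C\varphi_C$, as the paper does. The step you flag as the key obstacle, writing a counter-free reachability or safety path language as $\E\F$ or $\E\G$ of a past formula, is the routine part. Annotation elimination is the missing idea.

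\textbf{From \PCTL to \HWTcf.} Here you must show that the automaton you build is visible. You must also check counter-freeness of the linearisation over the annotated alphabet, not merely of the component. The naive approach guesses truth values of maximal state subformulae and verifies them at the current node by inlining the initial transitions of their automata. This yields annotations that are not pairwise complementary: for $\E\X a\vee\E\X b$ you get the two annotations $\{(\diamond,q_a)\}$ and $\{(\diamond,q_b)\}$. The paper avoids this by first passing to a simple form in which every path quantifier occurs as $\D^1\E$. A guessed value of the $i$-th subformula is then checked by $(\diamond,q_i)$ versus $(\square,\overline{q_i})$, whose languages are complementary. Counter-freeness is then transferred through a bijection between $2^{AP_B}$ and the annotated alphabet.

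\textbf{A smaller, fixable issue.} The intermediate logic you actually define, with only $\A$ of safety and $\E$ of co-safety, is narrower than the one you first describe. Its equivalence with \PCTL does not follow from the normal form, which produces $\E\G$- and $\A\F$-formulae such as $\A\F p$. Accordingly, your translation only builds existential rejecting and universal accepting components. $\E$ of safety, and dually $\A$ of co-safety, need existential accepting and universal rejecting components respectively. These do not arise by dualising your two cases and need their own, analogous, construction.
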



A first consequence of the above characterisation is conditional 
on the conjectured equivalence between $\HWTcf$ and $\CTLs \cap \AFLmu$. 
Indeed, under the assumption that $\HWTcf = \CTLs \cap \AFLmu$, 
the above characterization would make $\PCTL$ the natural logical 
presentation for the common fragment $\CTLs \cap \AFLmu$.
Moreover, the membership problem $\CTLs \membership \PCTL$ 
would admit a Rabin-style automata-theoretic reduction: 
indeed, the problem would become equivalent to asking whether a 
$\CTLs$-definable language is also definable in $\AFLmu$ ($=\CTLs \cap \AFLmu$),
which in turn is reduced to asking whether a given
parity tree automaton (capturing a specification in $\CTLs$) 
is equivalent to some weak alternating tree automaton (capturing
a specification in $\AFLmu$). This gives the following conditional result:

\begin{restatable}{theorem}{mainthm}
\label{thm:rabthm}
  Assuming Conjecture~\ref{con:comfrg}, 
  the membership problem $\CTLs \mapsto \PCTL$ (or, equally, $\MPL \mapsto \PCTL$)
  reduces to the automata-theoretic problem $\APT \mapsto \AWT$.
\end{restatable}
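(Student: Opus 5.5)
The reduction will be a short chain of equivalences at the level of tree languages. Let $\varphi$ be a $\CTLs$ formula (or an $\MPL$ sentence; the two are effectively inter-translatable over binary trees~\cite{HT87,MR99,MR03}) and let $L = \Lang(\varphi)$. We claim
\[
L \text{ is } \PCTL\text{-definable} \iff L \text{ is recognised by some } \AWT .
\]
For the forward direction, if $L$ is $\PCTL$-definable then Theorem~\ref{thm:hwtcf&pctl} gives an \HWTcf for $L$. An \HWTcf is in particular a weak alternating automaton, since weakness is one of the two defining restrictions. Hence $L$ is recognised by an \AWT.

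For the converse, suppose $L$ is \AWT-recognisable. Over binary trees bisimulation is trivial, so \AWT correspond to \AFLmu, and $L \in \AFLmu$. Since also $L \in \CTLs$, we get $L \in \CTLs \cap \AFLmu$. Conjecture~\ref{con:comfrg} then yields an \HWTcf recognising $L$, and Theorem~\ref{thm:hwtcf&pctl} converts it effectively into a \PCTL formula. Note that Conjecture~\ref{con:comfrg} is needed only in this direction, and only for the inclusion $\CTLs \cap \AFLmu \subseteq \HWTcf$. The reverse inclusion holds unconditionally: \HWTcf are weak, and \PCTL embeds into \CTLs through the safety/co-safety fragment mentioned before Theorem~\ref{thm:hwtcf&pctl}.

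The reduction itself translates $\varphi$ effectively into an alternating parity tree automaton $\mathcal A_\varphi$ with $\Lang(\mathcal A_\varphi)=L$. One can use the standard \CTLs-to-automata construction~\cite{KV98}, or the counter-free hesitant automata of~\cite{BBMP24}, which are a subclass of \APT. By the claim, $\varphi$ is \PCTL-definable if and only if $\mathcal A_\varphi$ is a positive instance of $\APT \mapsto \AWT$. This is a many-one reduction, and for positive instances it even produces the witness effectively, as long as the \AWT produced by an $\APT\mapsto\AWT$ procedure can be taken to be an \HWTcf.

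The main obstacle is not the combinatorics but the bookkeeping of which equivalences are unconditional. First, \AWT over binary trees must coincide exactly with \AFLmu (equivalently, with \WMSO~\cite{MSS92}), so that ``recognised by an \AWT'' is the same as ``in \AFLmu''. Second, the conjecture must be invoked only for languages already known to be \CTLs-definable; this is guaranteed because the input $\mathcal A_\varphi$ comes from $\varphi$. The \MPL version then follows from the effective equivalence of \MPL and \CTLs over trees.
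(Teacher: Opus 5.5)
Your proposal is correct and follows essentially the paper's argument. You translate the input into an \APT and query the $\APT \mapsto \AWT$ procedure. A positive answer is turned into a \PCTL formula via Conjecture~\ref{con:comfrg} and Theorem~\ref{thm:hwtcf&pctl}, and a negative answer rules out \PCTL-definability because $\PCTL \subseteq \HWTcf \subseteq \AWT$. The paper routes through \WMSO and $\MPL \cap \WMSO$ (via \cite{MSS92}), whereas you route through \AFLmu and $\CTLs \cap \AFLmu$, which is only a cosmetic difference since the conjecture treats the two intersections as equivalent.
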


\begin{figure}[t]
\centering
\begin{tikzpicture}[scale=0.7, transform shape]
  \def\balloonA{(-2.25,  0) ellipse (9.6em and 3em)}
  \def\balloonB{( 1.875,  0) ellipse (8.625em and 3em)}
  \def\balloonC{(-0.975,  0) ellipse (4.125em and 1.75em)}

  \begin{scope}
    \clip \balloonA;
    \fill[yellow!30] \balloonB;
  \end{scope}

  \begin{scope}
    \clip \balloonB;
    \fill[gray!30] \balloonC;
  \end{scope}

  \draw[semithick, black!80] \balloonA;
  \draw[semithick, black!80] \balloonB;
  \draw[semithick, black!80] \balloonC;

  \node at (-4.125, 0) {$\CTLs = \MPL$};
  \node at ( 3.225, 0) {$\AFLmu = \WMSO$};
  \node at (-1.725, 0) {$\LTL$};

  \coordinate (yellowarea) at (0.75,   0);
  \coordinate (grayarea)   at (-0.375, 0);

  \node[align=center] (yellowtext) at (0.75, 1.75)
    {$\PCTL \subseteq \CTLs \cap \AFLmu \stackrel{?}{\subseteq} \PCTL$};

  \node[align=center] (graytext) at (-0.75, -1.75)
    {$\LTL \cap \PCTL = \LTL \cap \APCTL = \LTL \cap \AFLmu$};

  \draw[->, thin] (yellowarea) to[out= 45, in=-135] (yellowtext.south);
  \draw[->, thin] (grayarea)   to[out=-45, in= 135] (graytext.north);
\end{tikzpicture}
\caption{Main relationships between classes of tree languages; new results concern shaded regions.}
\label{fig:venndiag}
\end{figure}

More importantly, the characterisation of \HWTcf in terms of \PCTL yields
unconditional consequences for the linear/branching-time interface.
Specifically, it enables an exact identification of the intersection of
\LTL with several branching-time formalisms: \PCTL, \APCTL, and \AFLmu.
Indeed, all these common fragments coincide.
This is particularly noteworthy because, unlike the situation for \CTL, the
presence of past operators makes existential path quantification unnecessary: 
in this respect, \PCTL behaves more like \CTLs than like \CTL.

\begin{restatable}{theorem}{comfrg}\label{thm:comfrg}
 $\LTL \cap \APCTL = \LTL \cap \PCTL = \LTL \cap \AFLmu$.
\end{restatable}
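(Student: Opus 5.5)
We will prove the chain of inclusions
\[
\LTL \cap \APCTL \subseteq \LTL \cap \PCTL \subseteq \LTL \cap \AFLmu \subseteq \LTL \cap \APCTL .
\]
The first inclusion is trivial, since \APCTL is a syntactic fragment of \PCTL. For the second, we use Theorem~\ref{thm:hwtcf&pctl}: every \PCTL formula translates effectively into an equivalent \HWTcf. An \HWTcf is in particular an alternating weak tree automaton, because weakness is one of its two defining restrictions. Over binary trees, \AWT are as expressive as \WMSO and hence as \AFLmu. So every \PCTL-definable tree language is \AFLmu-definable, and this inclusion holds even without intersecting with \LTL.

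The substantive step is the third inclusion. Let $L$ be a word language defined by an \LTL formula, so that the tree language $\triangle[L]$ consists of the trees all of whose branches lie in $L$, and suppose $\triangle[L]$ is \AFLmu-definable. We invoke the Kupferman--Vardi characterisation~\cite{KV05}: for a universal \LTL (or even $\omega$-regular) property, $\triangle[L]$ is recognisable by a weak alternating tree automaton if and only if $L$ is recognised by a deterministic B\"uchi word automaton. Hence $L$ is accepted by some \DBW $\mathcal{D}$. Since $L$ is also \LTL-definable, it is aperiodic. We then take the minimal, or a suitably canonical, \DBW for $L$ and argue that it is counter-free; this is the point where aperiodicity must be transferred to the deterministic B\"uchi automaton.

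Next we build a \emph{universal} \HWTcf for $\triangle[L]$. The automaton sends a copy along every direction, tracking $\mathcal{D}$ deterministically. To obtain weakness and hesitancy, we dualise. The complement $\overline{L}$ is recognised by a deterministic co-B\"uchi automaton, which has an equivalent weak structure on paths: each SCC of $\mathcal{D}$ is made a universal component whose acceptance is B\"uchi on states visiting accepting states. Alternatively, we encode ``infinitely often accepting'' as nested $\A\G\A\F$-style safety/co-safety components. Note that $\triangle[L]$ is a $\mathsf{AG}$ of reachability-type obligations relative to $\mathcal{D}$'s accepting states; this is the \CTL-like shape $\A\G(\cdot \to \A\F\,\mathit{acc})$, refined by past operators that recover the current state of $\mathcal{D}$ from the history. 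This suggests a more direct route that avoids automata intricacies. By the counter-free property of $\mathcal{D}$ and Kamp's theorem, the predicate ``after reading the current path prefix, $\mathcal{D}$ is in state $q$'' is definable by a pure-past \LTL formula $\pi_q$, which is then usable inside \APCTL. We then write
\[
\A\G\Bigl(\textstyle\bigwedge_{q}\bigl(\pi_q \to \A\X\A\F\bigvee_{q'\in\mathit{Acc}}\pi_{q'}\bigr)\Bigr)
\]
with a suitable base case. On each branch, this formula forces infinitely many visits to accepting states of the deterministic run, i.e.\ membership in $L$. It uses only universal quantifiers, so it is an \APCTL formula, and we check that it defines exactly $\triangle[L]$.

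The main obstacle is the definability of the state predicates $\pi_q$ by past formulas. This requires the \DBW for an \LTL-definable language to be counter-free, which is false for arbitrary \DBW. Our first attempt will take the \DBW obtained from Kupferman--Vardi and intersect it, as a product, with the minimal automaton of the syntactic (aperiodic) congruence of $L$. We then show that the accepting condition can be read off from a counter-free coordinate. Failing that, we fall back on the \HWTcf route, proving that the weak automaton can be made counter-free hesitant, and apply the universal half of Theorem~\ref{thm:hwtcf&pctl} to land in \APCTL.
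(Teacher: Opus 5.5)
\paragraph{Verdict.} There is a genuine gap. Your first two inclusions match the paper, and so does your use of Kupferman--Vardi to get a \DBW for $L$ from \AFLmu-definability of $\triangle[L]$. The gap is in the return trip from that \DBW to \APCTL, where one step is left open and another fails.

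\paragraph{The counter-free \DBW is never obtained.} You flag this as the main obstacle and offer only tentative strategies. Taking the product of $\mathcal{D}$ with the aperiodic syntactic automaton does not help by itself, because the product still contains the counters of $\mathcal{D}$. ``Reading the acceptance off the counter-free coordinate'' does not follow from the product. It is exactly the nontrivial fact that an aperiodic, \DBW-recognisable language consists of the words with infinitely many prefixes in some star-free language, i.e.\ that it is recognised by a \DBWcf. The paper takes this known result from~\cite{perrin1995semigroups}. Your fallback of making the weak automaton counter-free hesitant is not a workable plan: without a \DBWcf there is no evident way to do it, and for general \AWT it is essentially Conjecture~\ref{con:comfrg}.

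Your sketch of the weak automaton is also not weak: universal components with B\"uchi acceptance on some of their states violate the requirement that each component be entirely accepting or entirely rejecting. What works is the two-copy construction of Lemma~\ref{From words to trees}. It uses an all-accepting universal copy tracking $\mathcal{D}$, plus a rejecting copy of $Q\setminus F$ that is spawned at non-final states and discharged at final ones. Boxing this gives a universal \HWTcf, and the universal half of Theorem~\ref{thm:hwtcf&pctl} then yields \APCTL.

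\paragraph{Your explicit \APCTL formula does not define $\triangle[L]$ under the paper's semantics.} There, $(t,u)\models\A\psi$ starts a fresh path at $u$ at position $0$. So past operators never look above the node where the innermost path quantifier is evaluated; this is what makes $\E\X\alpha\equiv\E\F(\Y\top\land\neg\Y\Y\top\land\alpha)$ in Lemma~\ref{PCTL normal form} valid. Your formula would be fine if past ranged over the whole history from the root. Here, however, the $\pi_{q'}$ under $\A\X\A\F$ rerun $\mathcal{D}$ from $q_I$ starting at the child.

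A concrete failure: let $L$ be the set of words whose first letter contains $b$. Then $\A\F\bigvee_{q'\in\mathit{Acc}}\pi_{q'}$ holds at a node $c$ iff $c$ itself carries $b$. So your formula forces every non-root node to carry $b$, and it rejects trees that belong to $\triangle[L]$.

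The repair is to relativise. Use pure-past formulas $\pi^{q}_{q'}$ stating that $\mathcal{D}$, started in $q$, is in $q'$ after the labels read since the start of the current path. These are still pure-past definable, since $\mathcal{D}$ with any initial state is counter-free. Then require $\pi_q\to\A\X\A\F\bigvee_{q'\in\mathit{Acc}}\pi^{q}_{q'}$. Given a \DBWcf, this is a correct and more elementary alternative to the paper's automata route. Both routes, though, rest on the counter-free \DBW that you have not secured.
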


This equivalence can in turn be converted into decidability of
membership problems beetween \LTL and \PCTL.
More precisely, for an \LTL-definable word language $L$, we prove that $L$ is
recognisable by a deterministic Büchi automaton (\DBW) if and only if the 
associated tree language $\triangle[L]$ is recognisable by an \HWTcf automaton, 
equivalently definable in \PCTL.
Combining this correspondence with the known decidability of
\DBW-recognisability for \LTL~\cite{KV05} yields decidability of 
$\LTL \mapsto \PCTL$, with an \textsc{ExpSpace} upper bound and a \textsc{PSpace} lower bound.
The converse problem, namely $\PCTL \mapsto \LTL$, is also decidable, by means
of the equivalent \CTLs fragment identified above and the corresponding
characterisation of safety and co-safety path languages.

\begin{restatable}{theorem}{memprb}
\label{thm:memprb}
  The membership problem $\LTL \mapsto \PCTL$ is decidable in \textsc{ExpSpace}
  and is \textsc{PSpace}-hard.
  Moreover, $\PCTL \mapsto \LTL$ is decidable.
\end{restatable}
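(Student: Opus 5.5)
The plan is to split the statement into its two directions and reduce each to a known decidable problem on words.

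\emph{Direction $\LTL \mapsto \PCTL$.} The core is the characterisation announced above: for an \LTL-definable word language $L$, the tree language $\triangle[L]$ is recognisable by an \HWTcf if and only if $L$ is recognisable by a \DBW. By Theorem~\ref{thm:hwtcf&pctl}, \HWTcf and \PCTL are effectively equivalent, so deciding $\LTL \mapsto \PCTL$ amounts to deciding \DBW-recognisability of an \LTL-definable language. Kupferman and Vardi show this is decidable in \textsc{ExpSpace}, and this gives the upper bound.

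\emph{The characterisation.} I would prove its two implications as follows.
\begin{itemize}
\item[$(\Leftarrow)$] If $L$ is recognised by a \DBW, then by Theorem~\ref{thm:comfrg} it suffices to show $\triangle[L] \in \AFLmu$. This is the classical fact that a universally quantified \DBW property is recognised by an alternating weak tree automaton, as in Kupferman--Vardi's treatment of $\LTL \mapsto \AFLmu$. Since $\triangle[L]$ is also in \LTL, Theorem~\ref{thm:comfrg} places it in $\LTL \cap \PCTL$.
\item[$(\Rightarrow)$] If $\triangle[L]$ is \HWTcf-recognisable, it is in particular \AWT-recognisable. I would then use the known result that $\triangle[L]$ is in \AFLmu exactly when $L$ is in \DBW, again due to Kupferman and Vardi.
\end{itemize}
So the characterisation essentially follows from Theorem~\ref{thm:comfrg} together with that equivalence.

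\emph{Lower bound.} I would reduce from \LTL validity. Fix a formula $\psi$ with $\triangle[\psi]\notin\PCTL$, for instance $\F\G p$, which is not \DBW-recognisable. For an input formula $\varphi$ over fresh propositions, consider $\varphi \vee \psi$:
\begin{itemize}
\item if $\varphi$ is valid, then $\varphi \vee \psi$ is equivalent to $\mathit{true}$, which is in \PCTL;
\item if $\varphi$ is not valid, fix a counterexample word and restrict attention to paths agreeing with it on the fresh variables; on that restriction $\varphi \vee \psi$ behaves like $\psi$, so non-\DBW-ness is inherited.
\end{itemize}
The main obstacle is carrying out the second case carefully. A cleaner alternative is to take $\varphi \vee \psi$ with the \emph{same} variables but a disjoint "mode" proposition, and to argue through the \DBW characterisation using closure of \DBW languages under intersection with safety or \LTL-definable sets.

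\emph{Direction $\PCTL \mapsto \LTL$.} I would translate the \PCTL formula, effectively, into the \CTLs fragment mentioned above, whose path formulae define safety or co-safety properties, and then apply Clarke and Draghicescu's decision procedure for $\CTLs \mapsto \LTL$. Their criterion is that $\varphi$ is equivalent to the formula $\A\varphi'$ obtained by removing all path quantifiers, which is checkable by \CTLs equivalence. Since the translation is effective, decidability follows. Past operators cause no difficulty because the translation through \HWTcf eliminates them, and the earlier results suffice.
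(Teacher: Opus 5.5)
Your proposal is correct. For the upper bound and for $\PCTL \mapsto \LTL$ it coincides with the paper: you reduce to \DBW-recognisability of $\Lang(\varphi)$ and use Kupferman and Vardi's \textsc{ExpSpace} procedure, and you rewrite into \obbligationCTLs and run Clarke and Draghicescu's algorithm. The differences are in the characterisation and in the lower bound.

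\emph{Characterisation.} The paper simply invokes Theorem~\ref{main theorem}, which is proved via Lemma~\ref{From words to trees} and the passage from \AWT to \NBT to \DBW. You instead rederive it from Theorem~\ref{thm:comfrg} plus Kupferman and Vardi's equivalence ``$\triangle[L]\in\AFLmu$ iff $L$ is \DBW-recognisable''. This is legitimate, since Theorem~\ref{thm:comfrg} precedes the statement. It is a detour, though, because in the paper Theorem~\ref{thm:comfrg} is itself derived from Theorem~\ref{main theorem}.

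\emph{Lower bound.} The paper just cites the \textsc{PSpace}-hardness of $\LTL \mapsto \DBW$ from the same Kupferman--Vardi paper. Since the characterisation is an exact equivalence on the same input formula, the identity map is a reduction and hardness transfers at no cost. Your reduction from \LTL validity via $\varphi \vee \F\G p$ reproves that hardness from scratch, and it works. The step you flagged is easy to close: fix the component over $\varphi$'s propositions to a counterexample $w \not\models \varphi$, then replay the standard argument that $\F\G p$ is not \DBW-recognisable. Feed $p$ until the deterministic run visits a final state, insert a single $\neg p$, and repeat. The limit word projects to $w$ and has infinitely many $\neg p$, so it lies outside $\Lang(\varphi\vee\F\G p)$, yet its run is accepting. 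This needs no periodicity assumption on $w$.

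So your route is more self-contained, while the paper's lower bound is a one-line transfer.
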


These results have a direct bearing on the longstanding open problem of deciding
$\LTL \mapsto \CTL$.
Indeed, once the common fragment of \LTL with \PCTL has been identified and its
membership problem shown decidable, the original question can be reformulated in
terms of \emph{past elimination}.
More precisely, deciding whether an \LTL formula is equivalent to some \CTL
formula reduces to deciding whether an equivalent \PCTL formula exists and 
can be translated into \CTL.
In this way, the classical problem of understanding $\LTL \cap \CTL$ is reduced
to the branching-time problem $\PCTL \mapsto \CTL$, thereby opening an
alternative route towards its solution through the study of elimination of past
operators.

\begin{restatable}{corollary}{ltlctl}
\label{cor:ltl2ctl}
  Decidability of $\PCTL \mapsto \CTL$ implies decidability of $\LTL \mapsto
  \CTL$.
\end{restatable}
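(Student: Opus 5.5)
The corollary follows from Theorem~\ref{thm:memprb} and the inclusion $\CTL \subseteq \PCTL$. We give a decision procedure for $\LTL \mapsto \CTL$ that calls, as a subroutine, a hypothetical decision procedure for $\PCTL \mapsto \CTL$.

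The key observation is the identity
\[
\LTL \cap \CTL \;=\; \LTL \cap \PCTL \cap \CTL .
\]
It holds because every \CTL formula is syntactically a \PCTL formula without past operators. So the tree language $\triangle[L]$ of an \LTL formula is \CTL-definable if and only if two conditions hold: it is \PCTL-definable, and the resulting \PCTL property is \CTL-definable.

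The procedure, on input an \LTL formula $\varphi$, runs as follows.
\begin{enumerate}
\item Decide whether $\varphi \mapsto \PCTL$, using Theorem~\ref{thm:memprb} (an \textsc{ExpSpace} procedure). If the answer is negative, answer ``no'': since $\CTL \subseteq \PCTL$, no equivalent \CTL formula can exist either.
\item If the answer is positive, effectively construct a \PCTL formula $\psi$ equivalent to $\varphi$ over trees.
\item Run the assumed decision procedure for $\PCTL \mapsto \CTL$ on $\psi$, and return its answer. This is correct because $\varphi$ and $\psi$ define the same tree language, so $\varphi$ has a \CTL equivalent exactly when $\psi$ does.
\end{enumerate}

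Step~2 is the only real obstacle. Decidability of membership by itself does not hand us a witness formula, so effectiveness must be extracted from the proof of Theorem~\ref{thm:memprb}. That proof goes through the \DBW characterisation. From $\varphi$ we compute a \DBW $\mathcal{D}$ for $L$; this is effective, via the Kupferman--Vardi construction that underlies the decidability of \DBW-recognisability. We then turn $\mathcal{D}$ into an \HWTcf for $\triangle[L]$, using the constructive direction of the equivalence between \DBW-recognisability of $L$ and \HWTcf-recognisability of $\triangle[L]$. Finally, the effective translation of Theorem~\ref{thm:hwtcf&pctl} converts this \HWTcf into a \PCTL formula $\psi$.

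If this chain turns out not to be explicitly constructive somewhere, there is a fallback. Enumerate \PCTL formulae and test each for equivalence with $\varphi$ over trees. Both are \CTLs-expressible, so this equivalence is decidable via tree automata. The enumeration terminates because we already know from step~1 that some equivalent \PCTL formula exists. Either way, step~2 produces $\psi$ effectively, and the reduction goes through.
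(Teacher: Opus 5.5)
Your proof is correct and follows the paper's intended argument. The paper justifies the corollary only by noting that it follows from the first part of Theorem~\ref{thm:memprb} together with $\CTL \subseteq \PCTL$; you additionally make explicit the step it leaves implicit, namely producing a witness \PCTL formula effectively, either through the effective translations (Theorem~\ref{thm:hwtcf&pctl}) or through your enumerate-and-check fallback. One imprecision: Lemma~\ref{From words to trees} needs a counter-free \DBW, not an arbitrary one, so your primary route should pass through a \DBWcf (Theorem~\ref{main theorem} guarantees one exists, and it can be found by enumeration), but your fallback avoids this issue entirely.
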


Finally, we investigate whether \PCTL admits a matching characterisation in
terms of known monadic predicate logics.
Here the answer is negative.
On the one hand, \WMPL, which coincides with \FO, is known to be strictly less
expressive than \PCTL. 
On the other hand, the recently introduced tree fragment of \WMSO, namely \WMTL \cite{BBMP23},
turns out to be strictly more expressive, and in fact incomparable with \CTLs.
This shows that the currently known fragments of \WMSO do not capture the
expressive power of \PCTL, and therefore do not provide a purely monadic
characterisation of the target intersection. Conversely, looking beyond the expressive power of \WMSO is unnecessary; the inclusion $\PCTL \subseteq \WMSO \cap \MPL$ immediately implies that \PCTL is already a fragment of \WMSO.

\begin{restatable}{theorem}{wmtlmpl}
\label{thm:wmtl2mpl}
\WMTL is incomparable to both \CTLs and \MPL. Consequently, no known fragment of \WMSO is expressively equivalent to \PCTL.
\end{restatable}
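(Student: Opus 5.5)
We need to show two things: some \WMTL language is not \CTLs-definable, and some \CTLs-definable language is not \WMTL-definable. Since \CTLs and \MPL coincide over binary trees~\cite{MR99,MR03}, incomparability with \MPL is then immediate. The consequence about fragments of \WMSO will follow by a short case analysis over the known fragments. \WMPL (equivalently \FO) is strictly contained in \PCTL. \WMTL is incomparable with \CTLs, and therefore with \PCTL, since $\PCTL\subseteq\CTLs$ by \cref{thm:hwtcf&pctl} and the \CTLs fragment identified above. Full \WMSO strictly contains \AFLmu-definable properties that are not \CTLs-definable, for instance parity of a finite depth, so it cannot equal \PCTL either.

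\textbf{Step 1: $\WMTL\not\subseteq\CTLs$.} \WMTL is the tree fragment of \WMSO whose second-order variables range over finite subsets of a single path~\cite{BBMP23}. With this, I would define a counting property, taken from the \WMTL literature: ``there is a finite prefix of some path, of even length, ending in a node labelled $p$.'' This is \WMTL: quantify a finite set $X$ forming a prefix of a path, then quantify a subset $Y\subseteq X$ containing every second node from the root, and check that the last node of $X$ is labelled $p$ and lies in $Y$ with the appropriate parity. The language is not \CTLs-definable, because \CTLs is counter-free. Concretely, restrict to trees where $p$ holds at exactly one depth $n$, with every node at that depth labelled $p$. By the usual aperiodicity / Ehrenfeucht--Fra\"iss\'e argument for \CTLs (e.g.\ via counter-free \HTcf automata~\cite{BBMP24}), a \CTLs formula cannot distinguish $n$ from $n+1$ for large $n$. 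The language does distinguish them by parity, which gives a contradiction.

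\textbf{Step 2: $\CTLs\not\subseteq\WMTL$.} Here I would use a property that \WMSO cannot express at all, namely the \CTLs (in fact \CTL) formula $\E\G\F p$: ``some path has infinitely many $p$.'' It is classical that this is not \WMSO-definable (Rabin's weak-definability result; it is not recognisable by weak alternating automata, since its complement $\A\F\G\neg p$ is not Büchi-recognisable on trees). Since \WMTL is a fragment of \WMSO, the property is not \WMTL-definable.

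The main obstacle is Step 1. There I must check that the weak path-restricted quantification of \WMTL really suffices to count modulo $2$ along a finite path segment, which is syntactically delicate and depends on \WMTL's exact definition. I must also give a rigorous non-definability argument for \CTLs on these depth-uniform trees. For the latter, I would reduce to words: on trees where all nodes at equal depth carry equal labels, every \CTLs formula collapses to an \LTL formula over the branch word $\emptyset^n\{p\}\emptyset^\omega$. \LTL is aperiodic, so it cannot separate $n$ even from $n$ odd for large $n$.
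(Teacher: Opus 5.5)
\textbf{The gap is in Step 1, where you show that your witness is \WMTL-definable.} In this paper \WMTL quantifies over \emph{finite subtrees}: finite sets $D$ such that $E$ restricted to $D$ forms a tree, not necessarily full. Quantifying over finite subsets of a single path is weak chain logic \WCL, which is a different logic. Your set $Y$ of ``every second node'' of the prefix is disconnected, so it is not a subtree. Hence the formula you sketch is not a \WMTL formula, and as written Step 1 does not establish membership in \WMTL.

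The missing idea is that a finite subtree can record depth parity in its branching pattern. Quantify a finite subtree $X$ containing the root. Then impose a first-order condition on each internal node $z\in X$ and its parent:
\begin{itemize}
\item $z$ has two $X$-children if $z$ is the root or its parent has exactly one $X$-child;
\item $z$ has exactly one $X$-child if its parent has two.
\end{itemize}
By induction, internal nodes of $X$ have two $X$-children at even depth and one at odd depth. So a node of $X$ has even depth iff it is the root or its parent has exactly one $X$-child. Conversely, every even-depth node $z$ lies in such an $X$: take the root-to-$z$ path, and add, for each even-depth node strictly above $z$ on it, its child off the path.

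With this repair, ``some even-depth node carries $p$'' is \WMTL-definable. Your reduction of \CTLs to \LTL on depth-uniform trees then correctly shows that it is not \CTLs-definable. This is a genuinely different and much shorter route than the paper's. The paper instead uses the alternating-reachability language $U$ and a size-indexed indistinguishability induction on the trees $t_n,t'_n$. The paper's witness involves no modular counting, though, so it shows that \WMTL exceeds \CTLs through branching alone, not merely by counting.

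\textbf{Step 2 and the consequence.} Step 2 is essentially the paper's argument. The paper uses the complement, ``every path has finitely many $p$'', which is equivalent because \WMSO is closed under negation. However, $\E\G\F p$ is not a \CTL formula, and the property is not even \CTL-definable, since it lies outside \AFLmu.

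In the consequence, \WMTL is not incomparable with \PCTL: Lemma~\ref{PCTL and WMTL} gives $\PCTL\subseteq\WMTL$. What your Step 1 actually yields is $\WMTL\not\subseteq\PCTL$, and that is all the consequence needs. Your case analysis also omits \WCL, which the paper dismisses via its incomparability with \MPL~\cite{BL13}. Your original Step 1 formula is in fact a \WCL formula, so it would settle that case directly.
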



\section{Preliminaries}

\subparagraph{Trees.}
We mainly work with trees, precisely with unordered, full binary trees. 
Formally, a tree is a pair $t=(\Dom(t),E)$, where $\Dom(t)$ is the set of nodes 
and $E \subseteq \Dom(t)\times \Dom(t)$ is the directed edge relation, such that:
\begin{enumerate}
\item there is a unique node $r$, called the root of $t$, from which all nodes can be reached, i.e., $(r,u)\in E^*$ for all $u\in\Dom(t)$;
\item for every $u\neq r$, there is a unique $v$, called the parent of $u$, such that $(v,u)\in E$;
\item the root has no parent, that is, there is no $v\in\Dom(t)$ with $(v,r)\in E$;
\item for every node $u$, there are exactly two nodes $v_1,v_2$ such that $(u,v_1),(u,v_2)\in E$;
      these are the children of $u$ (note that there is no particular order imposed on $v_1,v_2$).
\end{enumerate}
Unless otherwise stated, all trees are infinite and without leaves.
We use binary trees rather than unranked trees, where nodes may have an arbitrary finite number of children, 
in order to keep a direct correspondence between temporal and monadic logics. 
As a matter of fact, on unranked trees, temporal logics correspond only to bisimulation-invariant fragments 
of monadic logics, unless extended with counting operators (see, e.g., \cite{MR03}). 
Restricting the branching degree allows us to bypass these technicalities, though it is understood 
that our results readily extend to the unranked case. 

A \emph{$\Sigma$-tree} is a tree $t$ expanded with a labelling $\lambda: \Dom(t) \to \Sigma$, 
associating a letter from the finite alphabet $\Sigma$ with each node.
An \emph{$\omega$-tree language} is a set of 
$\Sigma$-trees. 
We denote by $\epsilon_t$ the root of a tree (or a $\Sigma$-tree) $t$. 
A \emph{path} of $t$ is a finite or infinite sequence $\pi = v_1 v_2 \dots$ 
of nodes that are two-by-two connected by the child relation, i.e., $(v_i,v_{i+1})\in E$ for all $i\ge1$.
A \emph{maximal path} is a path that cannot be extended further to the right, so it must be infinite
unless the tree has some leaves. 
Note that path and maximal paths may start at arbitrary nodes of trees.
A \emph{subtree} of $t$ is a finite or infinite set $D\subseteq\Dom(t)$ such that 
$E$ restricted to $D$ defines a tree, not necessarily full or leafless.

\subparagraph{Monadic logics.}
We introduce \emph{monadic second-order logic} (\MSO) on trees. 
Fix a finite set $\AP$ of atomic propositions, and let $\Sigma = 2^{\AP}$.
MSO formulae are defined by the grammar
\[
\varphi ~::=~ 
  x = x'
  ~\bigm|~
  x \le x'
  ~\bigm|~
  x \in X
  ~\bigm|~
  p(x)
  ~\bigm|~
  \neg \varphi
  ~\bigm|~
  \varphi \lor \varphi
  ~\bigm|~
  \exists x. ~ \varphi
  ~\bigm|~
  \exists X. ~ \varphi
\]
where $p\in\AP$ and 
$x,x'$ (resp.,~$X$) are first-order (resp., monadic) variables. 
First-order variables are interpreted as single nodes of a tree, while monadic
variables are interpreted as sets of nodes.
The atom $x \le x'$ states that the node assigned to $x'$ is a descendant of the node assigned to $x$;
the atom $x \in X$ states that the node assigned to $x$ belongs to the set assigned to $X$;
and $p(x)$ states that the node assigned to $x$ carries the proposition $p$, that is, 
$p$ belongs to its label.
The rest of the semantics is standard.

We consider the usual notion of \MSO \emph{sentence}. Since an \MSO sentence has no free variables, its truth value on a $\Sigma$-tree is independent of any variable assignment.
We write $t \models \varphi$ when $t$ is a $\Sigma$-tree that satisfies the sentence $\varphi$. 
An $\omega$-tree language $T$ is \emph{\MSO-definable} if there is a sentence $\varphi$ 
such that $T = \{ t \mid t \models \varphi \}$. 
Due to the correspondence between \MSO and several classes of automata on $\Sigma$-trees
and due to the closure properties satisfied by these classes \cite{Rab69,Mos84,MS87,Tho97}, 
\MSO-definable $\omega$-tree languages are also often called \emph{regular}.

Finally, we introduce the following fragments of \MSO:
\begin{itemize}
    \item \emph{Weak monadic second-order logic} (\WMSO) is the fragment in which
          monadic variables can only be assigned \emph{finite sets} of nodes of a tree. 
    \item \emph{Monadic tree logic} (\MTL), resp., \emph{weak \MTL} (\WMTL) are the fragments in which 
          monadic variables can only be assigned \emph{subtrees}, resp., \emph{finite subtrees}. 
    \item \emph{Monadic path logic} (\MPL), resp., \emph{Weak \MPL} (\WMPL) are the fragments in which 
          monadic variables can only be assigned \emph{paths}, resp., \emph{finite paths}.
    \item \emph{First order logic} (\FO) is the fragment in which monadic variables are disallowed.
\end{itemize}
\subparagraph{Temporal Logics.}

We consider here 
\emph{full computation tree logic with past}, denoted \PCTLs. 
Its formulae are generated in negation normal form according to the following grammar:
\begin{align*}
\varphi ~::= &~
  p
  ~\bigm|~
  \neg p
  ~\bigm|~
  \varphi \lor \varphi
  ~\bigm|~
  \varphi \land \varphi
  ~\bigm|~
  \E \psi
  ~\bigm|~
  \A \psi
\\
\psi ~::= &~
  \varphi
  ~\bigm|~
  \psi \lor \psi
  ~\bigm|~
  \psi \land \psi
  ~\bigm|~
  \X \psi 
  ~\bigm|~
  \Y \psi
  ~\bigm|~
  \tilde{\Y} \psi 
  ~\bigm|~
  \psi \U \psi
  ~\bigm|~
  \psi \R \psi
  ~\bigm|~
  \psi \s \psi
\end{align*}
Formulae generated from $\varphi$ are called \emph{state formulae}, 
while those generated from $\psi$ are called \emph{path formulae}. 
The operators $\E$ and $\A$ quantify existentially and universally 
over paths. 
The operators $\X$, $\U$, and $\R$ are future temporal operators: 
$\X$ is the usual ``next'' operator, and $\U$ and $\R$ are ``until'' 
and ``release''. Dually, $\Y$, $\tilde{\Y}$, and $\s$ are past 
temporal operators: $\Y$ is ``yesterday'', $\tilde{\Y}$ is 
``weak yesterday'', and $\s$ is ``since''. 
The difference between $\Y$ and $\tilde{\Y}$ is that $\Y$ 
requires the existence of the predecessor (i.e., parent) along the current path, 
whereas $\tilde{\Y}$ is also satisfied at the beginning of the path.
Unless explicitly stated otherwise, by a ``$\PCTLs$ formula'' 
we mean a \emph{state} $\PCTLs$ formula.

State formulae are interpreted over \emph{pointed trees},
i.e., pairs $(t,u)$ consisting of a $\Sigma$-tree and a node $u$ in it, where $\Sigma=2^{\AP}$. 
Similarly, path formulae are interpreted over \emph{path-pointed trees}, i.e., triples $(t,\pi,i)$
consisting of a $\Sigma$-tree, an infinite path $\pi$ in it, and a position 
$i\ge0$ on that path. 
We give the semantics only for the most interesting cases: 
\begin{itemize}
\item $(t,u) \models \E \psi$ iff there exists an infinite path $\pi$ starting at $u$ such that $(t,\pi,0) \models \psi$,
\item $(t,\pi,i) \models \psi_1 \U \psi_2$ iff $\exists j \ge i$ such that $(t,\pi,j) \models \psi_2$ and $\forall k. (i \le k < j), (t,\pi,k) \models \psi_1$,
\item $(t,\pi,i) \models \psi_1 \s \psi_2$ iff $\exists j \le i$ such that $(t,\pi,j) \models \psi_2$ and $\forall k. (i \ge k > j), (t,\pi,k) \models \psi_1$,
\end{itemize}
\noindent 
We write $t \models \varphi$ to mean $(t,\epsilon_t) \models \varphi$ --- in this case, we also say that
$t$ is a \emph{model} of $\varphi$.
We denote by $\Lang(\varphi)$ the set of all models of $\varphi$, and 
declare two formulae $\varphi,\varphi'$ \emph{equivalent} if $\Lang(\varphi) = \Lang(\varphi')$.
We will also consider the following fragments of \PCTLs: \CTLs, which is just \PCTLs without past operators; \PCTL, i.e., the variant of \PCTLs in which every future temporal operator (i.e., $\X$, $\U$ and $\R$) must be immediately preceded by a path quantifier; \CTL, i.e., \PCTL without past operators. By \cite[Theorem 3.6]{HT87}, we have that \MPL, \PCTLs and \CTLs are equivalent formalisms. 
We spell out the syntax of $\PCTL$ formulas in negation normal form, since this is the formalism that is mostly used throughout the paper:
\begin{align*}
\varphi ~::= &~
  p
  ~\bigm|~
  \neg p
  ~\bigm|~
  \varphi \lor \varphi
  ~\bigm|~
  \varphi \land \varphi
  ~\bigm|~
  \E\X \psi
  ~\bigm|~
  \E\psi\U\psi
  ~\bigm|~
  \E\psi\R\psi
  ~\bigm|~
  \A\X \psi
  ~\bigm|~
  \A \psi \U \psi
  ~\bigm|~
  \A \psi \R\psi
\\
\psi ~::= &~
  \varphi
  ~\bigm|~
  \psi \lor \psi
  ~\bigm|~
  \psi \land \psi
  ~\bigm|~
  \Y \psi
  ~\bigm|~
  \tilde{\Y} \psi
  ~\bigm|~
  \psi \s \psi.
\end{align*}

We also consider \emph{linear-time formalisms}. 
In particular, we consider the well-known \emph{linear temporal logic with past} (\PLTL), here defined as the set of path formulae of \PCTLs in which we forbid the use of path quantifiers, and some of its fragments:
\LTL is \PLTL without past operators; \cosafeLTL (resp., \safeLTL) is \PLTL in which past operators and the future operator $\R$ (resp., $\U)$ are disallowed; \cosafePLTL (resp., \safePLTL) consists of \PLTL formulae of the form $\F\alpha$ (resp., $\G\alpha$), where $\F\alpha = \top \U \alpha$ (resp., $\G\alpha = \bot \R \alpha)$ and $\alpha$ contains only past operators. 

\begin{theorem}[{\cite[Theorems 1 and 8]{chang1992characterization}}] \label{safeLTL equivalent to safePLTL}
\safePLTL is as expressive as \safeLTL; similarly, \cosafePLTL is as expressive as \cosafeLTL.
\end{theorem}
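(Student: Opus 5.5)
The two claims are dual, so I would prove only the safety one. The co-safety one then follows by negating formulae in negation normal form. The negation of $\G\alpha$ is $\F\neg\alpha$, and $\neg\alpha$ is still pure past. The dual of a \safeLTL formula (built with $\X$, $\R$, $\land$, $\lor$ and literals) is a \cosafeLTL formula (built with $\X$, $\U$, $\land$, $\lor$ and literals). For $\safeLTL \subseteq \safePLTL$ and the converse I would use the same bridge: the word language $L$ of a safety formula is determined by its set of \emph{good prefixes} $G_L = \{u \in \Sigma^* \mid u\Sigma^\omega \cap L \neq \emptyset\}$. Concretely, $w \in L$ iff every finite prefix of $w$ lies in $G_L$, and $G_L$ is prefix-closed.

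\emph{From \safeLTL to \safePLTL.} Let $\varphi \in \safeLTL$. Its language $L$ is a safety language and is \LTL-definable, hence star-free. Then $G_L$ is a star-free language of finite words: it is the projection of $L$ onto finite prefixes, and that projection preserves aperiodicity of the syntactic monoid. By Kamp's theorem in its past-time form, every star-free language of finite words is defined by a pure-past formula $\alpha$. Here $\alpha$ is evaluated at the last position of the prefix, i.e.\ $(w,i) \models \alpha$ iff $w[0..i] \in G_L$. Then $\G\alpha$ is equivalent to $\varphi$. The only care needed is the empty prefix, which is handled by $\tilde{\Y}$ at position $0$.

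\emph{From \safePLTL to \safeLTL.} Let $\G\alpha$ be given. Its language $L$ is clearly safety, since any violation is witnessed at a finite position. It is also star-free, by Gabbay's separation theorem (or Kamp's theorem applied to $\PLTL$ at the initial position). Dually, the set $B$ of \emph{bad} prefixes is star-free and closed under extension. So I would prove the following lemma: if $B$ is star-free and extension-closed, then $B\Sigma^\omega$ is definable in \cosafeLTL. Negating then gives a \safeLTL formula for $L$. To prove the lemma I would take the minimal counter-free DFA $\mathcal{A}$ for $B$; its accepting states are absorbing, since $B$ is extension-closed. I would then argue by induction on the Krohn--Rhodes/cascade structure, or alternatively on the rank of the DAG of strongly connected components of $\mathcal{A}$. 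For each state $q$ I would build a \cosafeLTL formula $\theta_q$ saying that ``from $q$ an accepting state is eventually reached''. Since the automaton is counter-free, the loops inside a component are idempotent-like, and leaving a component can be expressed with $\U$ guarded by letter tests and $\X$, exactly as in the standard proof that counter-free automata translate into future \LTL. The difference here is that only the eventuality-type operators $\U$ and $\X$ may be used, which is possible precisely because acceptance is a reachability condition.

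The main obstacle is this last lemma. The standard translations of star-free languages into future \LTL freely use $\R$, or equivalently $\G$, together with negation. Here I must check that for reachability of an absorbing accepting set, all ``staying'' constraints can be pushed into the left argument of $\U$, and that no genuine safety component is needed. If the direct DFA induction becomes unwieldy, I would fall back on the route of~\cite{chang1992characterization}: show that for extension-closed star-free $B$, a future \LTL formula for $B\Sigma^\omega$ can be rewritten into a syntactically co-safe one, by proving that every formula defining a co-safety language is equivalent to its ``$\R$-free approximation'' obtained by unfolding bounded prefixes.
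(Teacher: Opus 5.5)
The paper does not prove this statement; it imports it from \cite{chang1992characterization}. That work characterises the \LTL-definable safety (resp.\ co-safety) properties in two ways: as the formulae $\G\alpha$ (resp.\ $\F\alpha$) with $\alpha$ pure past, and as the syntactically safe (resp.\ co-safe) future formulae. The two fragments then coincide by transitivity.

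Your direction $\safeLTL\subseteq\safePLTL$ is a sound, self-contained version of the first characterisation. Formulae built from literals with $\land,\lor,\X,\R$ define closed sets. Membership of a finite word in the prefix language of a star-free $\omega$-language depends only on its class in the aperiodic syntactic congruence, so that prefix language is star-free. The past form of Kamp's theorem then gives $\alpha$ with $\G\alpha\equiv\varphi$. The empty prefix needs no special treatment: if every non-empty prefix of $w$ is extendable, so is $\varepsilon$. Reducing the co-safety case to the safety case by dualisation is also fine.

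\textbf{The gap is in the converse direction.} You push it entirely into the lemma that $B\Sigma^\omega$ is \cosafeLTL-definable whenever $B$ is star-free and extension-closed. This is not a routine step. It is exactly the non-trivial half of the cited characterisation: every \LTL-definable co-safety property is syntactically co-safe. Neither of your two routes establishes it.

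\emph{The automaton route.} Inducting on the DAG of strongly connected components of a counter-free DFA does not reduce the problem. Inside a single non-trivial component, which exits are enabled depends on the current state. That state is in general a complicated (star-free) function of everything read inside the component. Describing it needs the full inductive machinery of a counter-free-to-\LTL translation, not one $\U$ guarded by letter tests. Counter-freeness (if $u^n$ loops at $q$ then so does $u$) does not make internal loops ``idempotent'' in any sense that yields such formulae. The known translations that do handle these components (cascade decompositions, induction on the aperiodic monoid) produce $\G$, $\R$ and negations. Showing that these can always be removed when acceptance is reachability of an absorbing state is precisely what you call ``the main obstacle'', and you leave it open.

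\emph{The fallback route.} It fails as stated. Take $\varphi=(\G b\land\F\neg b)\lor\F(a\land\G b)\lor\F(a\land\F\neg b)$, which is equivalent to $\F a$, a co-safety property. Unfolding $\G b$ for $k$ steps and closing with $\top$ gives a formula satisfied by $\{b\}^{k+1}\emptyset^\omega$, which is not in $\F a$. Closing with $\bot$ instead gives $\F(a\land\F\neg b)$, which rejects $\{a,b\}\{b\}^\omega$. So for no $k$ is either reading of the ``$\R$-free approximation'' equivalent to $\varphi$.

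As written, you have established only $\safeLTL\subseteq\safePLTL$ and, dually, $\cosafeLTL\subseteq\cosafePLTL$. For $\safePLTL\subseteq\safeLTL$ you must either invoke the syntactic characterisation of \LTL safety properties, as the paper does by citation, or give an actual proof of your lemma.
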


Theorem \ref{safeLTL equivalent to safePLTL} can be strenghtened to show that safety and co-safety properties can be
equally captured by slightly more relaxed formalisms than $\safePLTL$ and $\cosafePLTL$.

\begin{restatable}{proposition}{PsafeLTL} \label{safeLTL with past is equal to safeLTL}
The fragment of $\PLTL$ that consists of formulae in negation normal form and forbids the use of $\R$ (resp., $\U$) is
as expressive as $\cosafePLTL$ (resp., \safePLTL).
\end{restatable}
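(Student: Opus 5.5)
We prove the statement for the fragment without $\R$, i.e.\ the co-safety case; the safety case then follows by duality. Negating a formula in negation normal form that avoids $\U$ yields one that avoids $\R$, and the dual of $\F\alpha$ is $\G\neg\alpha$. The inclusion $\cosafePLTL \subseteq$ fragment is immediate, because $\F\alpha = \top\U\alpha$ uses only $\U$ and past operators. For the converse inclusion, we show by structural induction that every formula $\psi$ in negation normal form without $\R$ is equivalent, at every position of every path, to a formula of the form $\alpha_\psi \lor \F\beta_\psi$ or a similar ``bounded pure-past'' shape. A cleaner invariant suffices, since we only need equivalence at position $0$ in the end. We prove that the language $\{(w,i) \mid (w,i)\models\psi\}$ is a \emph{co-safety pointed language}: whenever $(w,i)\models\psi$, there is a finite $n\ge i$ such that every $w'$ agreeing with $w$ on the prefix $w[0..n]$ satisfies $(w',i)\models\psi$.

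The key steps are the following. First, we check by induction that this ``pointed co-safety'' property holds for every $\R$-free formula in negation normal form. Atoms and negated atoms depend only on position $i$. Past operators $\Y,\tilde\Y,\s$ only inspect positions at most $i$, and each of their subformulae is co-safety at those positions with finitely many witnesses, so taking the maximum of finitely many witnessing prefixes works. The connectives $\lor$ and $\land$ are handled by taking the witness of one disjunct or the maximum of the two witnesses. For $\X$ we shift the witness. For $\psi_1\U\psi_2$ we use the witness $j$ together with the finitely many witnesses for $\psi_1$ at positions $i..j-1$. Second, we apply this at $i=0$: the word language $\Lang(\psi)$ is an $\omega$-regular co-safety language, since $\PLTL$ is as expressive as $\LTL$ by Gabbay's separation and Kamp's theorem, hence $\Lang(\psi)$ is $\LTL$-definable (first-order definable). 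Third, we use the classical characterisation that an $\LTL$-definable co-safety language is definable in $\cosafeLTL$. This is precisely the content of \cite{chang1992characterization} behind Theorem~\ref{safeLTL equivalent to safePLTL}: the languages of $\cosafePLTL$ are exactly the first-order co-safety languages. We then conclude with Theorem~\ref{safeLTL equivalent to safePLTL}.

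An alternative and more syntactic route avoids the semantic characterisation. We rewrite $\psi$ directly as $\F\alpha$ with $\alpha$ pure past, by translating each $\U$ occurrence, read from the outermost one, into ``eventually, looking back, $\psi_2$ held at some point since which $\psi_1$ held''. This translation requires the separation theorem in order to push nested future operators inside past ones, and it is messier to carry out.

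The main obstacle is the third step: making sure the cited characterisation indeed gives \emph{all} first-order co-safety languages, rather than only those arising from a particular syntax. If that reading of \cite{chang1992characterization} is too weak, we will argue it directly. Take $\alpha$ to be a pure-past formula defining the set of good finite prefixes, i.e.\ those whose every extension lies in $\Lang(\psi)$. This set of finite words is star-free whenever $\Lang(\psi)$ is first-order definable, because it is obtained from $\Lang(\psi)$ via a first-order definable universal quantification over extensions. By Kamp's theorem for finite words with past, it is then definable by a pure-past formula evaluated at the last position. Finally, $\Lang(\psi) = \Lang(\F\alpha)$ holds precisely because $\Lang(\psi)$ is co-safety.
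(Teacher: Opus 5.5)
Your proof is correct, but it takes a different route from the paper's. The paper argues syntactically. It applies Gabbay's separation theorem and asserts, by inspection of that proof, that the separation rules preserve negation normal form and absence of $\R$. This turns $\psi$ into a positive Boolean combination of $\R$-free pure-future and pure-past formulae. The pure-past parts are then collapsed to their value at position $0$ ($\Y\psi\mapsto\bot$, $\psi_1\s\psi_2\mapsto\psi_2$), which produces a \cosafeLTL formula. Theorem~\ref{safeLTL equivalent to safePLTL} is then used only in its stated form, \cosafeLTL $=$ \cosafePLTL.

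You instead use the syntactic hypothesis only to show, by a routine induction, that $\R$-free negation-normal-form formulae define co-safety languages. You then invoke a semantic fact: every \FO-definable co-safety $\omega$-language is $\Lang(\F\alpha)$ for some pure-past $\alpha$. This is the Manna--Pnueli canonical-form characterisation behind \cite{chang1992characterization}, which does cover all temporal-logic-definable guarantee properties; alternatively, your good-prefix construction gives it directly. Your route sidesteps the delicate and only sketched claim that Gabbay's elimination steps never introduce $\R$ or non-atomic negations. It also proves something stronger: any \PLTL formula defining a co-safety language is equivalent to some $\F\alpha$. The paper's route, in exchange, gives an explicit syntactic rewriting and needs only the syntactic form of the cited theorem.

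Two details should be tightened.
\begin{itemize}
\item In your fallback, quantifying over all infinite extensions is not a first-order operation, so star-freeness of the set of good prefixes needs its own justification. One option is to note that this set is saturated by the syntactic congruence of $\Lang(\psi)$, hence recognised by an aperiodic monoid. Another is a composition argument on quantifier-rank-$r$ types.
\item The grammar has no dual of $\s$. So both your duality step and putting the pure-past formula obtained from Kamp's theorem into negation normal form need an identity such as $\neg(a\s b)\equiv((\neg b)\s(\neg a\land\neg b))\lor((\neg b)\s(\neg b\land\tilde{\Y}\bot))$.
\end{itemize}
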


Finally, we interpret linear-time formalisms, in particular $\LTL$, over trees using the 
standard universal-path semantics. Thus, for an $\LTL$ formula $\varphi$, we write 
$t \models \varphi$ iff every path of $t$ starting at the root satisfies $\varphi$, 
or equivalently iff $t \models \A\varphi$.


\subparagraph{Word automata.} 
An \emph{$\omega$-word} over $\Sigma$ is an infinite sequence $\sigma_1 \sigma_2 \dots$ of letters from $\Sigma$.
An \emph{$\omega$-word language} is a set of $\omega$-words. We assume the reader to be familiar with
the theory of regular $\omega$-word languages \cite{Buc62}.

An $\omega$-word automaton is a tuple $\mathcal{A} = \langle Q, \Sigma, \delta, q_I, F \rangle$, where 
$Q$ is a finite set of states, 
$\Sigma$ is the alphabet, 
$q_I \in Q$ is an initial state, and 
$F \subseteq Q$ is a set of final states. 
The transition function is $\delta: Q \times\Sigma\to Q$ for a deterministic automaton, and 
$\delta: Q\times\Sigma\to 2^Q$ for a non-deterministic one.
Here non-determinism is structural: a state and an input letter may have several successors. 
This branching can be interpreted either existentially ---i.e., the input is accepted if some 
initial run on it is accepting--- or universally ---i.e., the input is accepted if all initial 
runs on it are accepting.
We use three-letter acronyms for classes of $\omega$-word automata. The first letter indicates 
the branching mode: deterministic (D), existential non-deterministic (N), or universal non-deterministic (U). 
The second letter indicates the acceptance condition: Büchi (B), coBüchi (C), or weak (W), where weak means 
that every strongly connected component is either contained in $F$ or disjoint from $F$. 
The third letter is $W$ for words, to disambiguate from tree automata. 
For example, $\UCW$ denotes universal coB\"uchi 
automata.
We denote by $\Lang(\mathcal{A})$ the language recognized by the automaton $\mathcal{A}$. 

An automaton $\mathcal{A}$ is \emph{counter-free} if for all states $q$, 
all finite words $w$, and all natural numbers $n$, 
if $\mathcal{A}$ admits a path on $w^n$ that starts and ends at 
$q$, then it also admits a path on $w$ that starts and ends at $q$. 
For a given automata class \textsc{ABC}, we denote by \textsc{ABC}$_{cf}$ its counter-free restriction. 




\subparagraph{Membership problems.}
Given two formalisms $F$ and $F'$, we will denote by $F \mapsto F'$ the membership problem from $F$ to $F'$, that is,
the problem of deciding, given an expression in $F$, if there is a language-equivalent expression in $F'$. 
For example, $\MSO \mapsto \WMSO$ asks whether a given \MSO sentence $\varphi$ can be transformed to an equivalent sentence $\varphi'$ of \WMSO. 
We will also denote by $F \cap F'$ the set of languages definable by both $F$ and $F'$. 

We recall that \APT stands for \emph{alternating parity tree automata}, 
which are as expressive as \MSO and the \emph{modal $\mu$-calculus} (\Lmu)
over full binary trees. 
\AWT is the subclass of \emph{alternating weak tree automata}, 
which captures \WMSO and the \emph{alternation-free $\mu$-calculus} (\AFLmu). 
For the definitions of \APT, \Lmu, and \AFLmu, we refer the reader to, e.g., \cite{Wil01},
while for \AWT and other variants we will recall their definitions at the beginning of Section \ref{automaton and conjecture}.

Below is a rephrasing of the celebrated Rabin's reduction theorem.

\begin{restatable}[{Rabin's reduction theorem \cite[Theorem 29]{rabin1970weakly}}]{theorem}{rabin}
\label{Rabin's theorem}
    Decidability of $\APT \membership \AWT$ implies decidability of $\MSO \membership \WMSO$ and $\Lmu \membership \AFLmu$. 
\end{restatable}

Although direct attacks on the latter logical membership problems have seen limited progress, 
the corresponding automata-theoretic problem has recently witnessed major advances~\cite{colcombet2008non,colcombet2013deciding,idir2025using}.
Through Rabin's reduction, these advances feed back into the logical setting, yielding progress on $\MSO \membership \WMSO$ and $\Lmu \membership \AFLmu$.

Our goal is to apply Rabin's reduction scheme within smaller source logics. 
Rather than taking the input specification from $\MSO$ or $\Lmu$, we consider 
specifications given in $\MPL$ or $\CTLs$. Rabin's reduction then immediately 
implies that decidability of $\APT \membership \AWT$ yields decidability of
$\MPL \membership \MPL \cap \WMSO$ and $\CTLs \membership \CTLs \cap \AFLmu$.
The key question is therefore to identify the target intersection: what is 
$\MPL \cap \WMSO$, or equivalently $\CTLs \cap \AFLmu$? 
In the sequel we mostly use the former notation, leaving the equivalence 
with the latter implicit. 
As argued in the introduction, this common fragment is a natural object 
of study, and the present paper makes progress towards its characterisation.




\section{The automaton and the conjecture} \label{automaton and conjecture}

We introduce the new class of \emph{counter-free hesitant weak $\omega$-tree automata},
by imposing a series of restrictions to alternating Büchi $\omega$-tree automata.
We then explain why this model is a natural candidate for capturing the common fragment $\MPL \cap \WMSO$. 
Finally, we discuss the perhaps surprising fact that $\MPL \cap \WMSO$ is strictly more expressive than $\FO$.

\subparagraph{Alternating B\"uchi $\omega$-tree automata.}
Given a set $X$, we think of its elements as \emph{atoms} and we denote by $\mathcal{B}^+(X)$ 
the set of positive Boolean formulae over $X$.
An \emph{alternating B\"uchi $\omega$-tree automaton} (\ABT) is a tuple 
$\mathcal{A} = \langle Q, \Sigma, \delta, q_I, F \rangle$, where $Q,\Sigma,q_I,F$ are as usual
and $\delta$ is a transition function from $Q \times \Sigma$ to $\mathcal{B}^+(\{\diamond, \square\} \times Q)$. 
Let $t$ be a $\Sigma$-tree.
A run of $\mathcal{A}$ on $t$ is an \emph{unranked}, \emph{unordered} tree, i.e., a tree in which 
every node has an arbitrary but finite set of children. It is also labelled over the infinite alphabet 
$Q\times\Dom(t)$, thus associating with each node a state of $\mathcal{A}$ and a corresponding 
node in $t$. Intuitively, a labelling $(q,v)$ of a node of a run represents a copy of the automaton 
that reads an input node $v$ with state $q$. Let $r$ be an unranked tree labeled over $Q\times\Dom(t)$. 
We define the satisfaction relation between a $(q,v)$-labelled node $u$ of $r$ and a positive Boolean 
formula in $\mathcal{B}^+(\{\diamond,\square\}\times Q)$ as follows:
\begin{itemize}
    \item $u \models (\diamond, q')$ if $v$ has a child $v'$ in $t$ and $u$ has a child $u'$ labelled by $(q',v')$;
    \item $u \models (\square, q')$ if for every child $v'$ of $v$ in $t$, there is a child $u'$ of $u$ labelled by $(q',v')$.
\end{itemize}
The satisfaction relation $\models$ is naturally extended to the Boolean constants $\top$ and $\bot$ and
to every positive Boolean combination of atoms. We then say that $r$ is a run on $t$ if:
\begin{enumerate}
    \item the root of $r$ is labelled by $(q_I, \epsilon_t)$, namely, by the initial state of $\mathcal{A}$ and by the root of $t$;
    \item for every $(q,v)$-labelled node $u$ of $r$, $u \models \delta(q, \sigma)$, where $\sigma$ is the label of $v$ in $t$.
\end{enumerate}
Concerning the above condition (2), we observe that the definition of the satisfaction
relation $\models$ between nodes and positive Boolean formulae requires the
existence
of a minimal set of children of each node $u$ of a run. On the other hand, it also
allows a node $u$ to contain spurious children that are not strictly needed to
satisfy condition (2). In particular, the property of being
a run of $t$ is preserved under the addition of subtrees that are themselves runs on
$t$, possibly starting at different nodes.
This turns out to be useful in order to avoid dealing with runs with leaves: indeed, even in the presence
of a transition of the form $\delta(q,\sigma)=\top$, 
we can still assume that
there
is an infinite path from $u$, e.g., ~one that repeats copies of $u$ itself
(note that every such copy of $u$ would still satisfy condition (2) above).
Finally, a run $r$ is \emph{accepting} if every maximal path in it 
satisfies the B\"uchi condition, i.e., every infinite path has infinitely 
many nodes labeled by some state in $F$. We say that a $\Sigma$-tree $t$ 
is accepted by $\mathcal{A}$ if $\mathcal{A}$ admits an accepting run on $t$. 

\subparagraph{Weakness.} \label{Weak condition}
An \ABT $\mathcal{A} = \langle Q, \Sigma, \delta, q_I, F \rangle$ is \emph{weak} (\AWT) 
if its state space can be partitioned into a sequence of components $Q_1, ..., Q_k$
in such a way that (1) for every $q\in Q_i$ and every $\sigma \in \Sigma$, 
$\delta(q, \sigma)$ can only contain atoms with states in $\bigcup_{j\le i} Q_j$,
and (2) for every $i$, either $Q_i\subseteq F$ or $Q_i\cap F=\emptyset$.
We call \emph{accepting} (resp., \emph{rejecting}) the components that
are contained in $F$ (resp., disjoint from $F$).


\subparagraph{Hesitancy.}
An \AWT $\mathcal{A} = \langle Q, \Sigma, \delta, q_I, F \rangle$ is \emph{hesitant} (\HWT) 
if every component $Q_i$ of the partition witnessing weakness satisfies one of the following 
additional conditions, for every $q,q'\in Q_i$ and $\sigma \in \Sigma$:
\begin{itemize}
    \item \emph{transient component}: $q'$ does not occur in any atom of $\delta(q, \sigma)$;
    \item \emph{existential component}: $q'$ can occur in the disjunctive normal form of $\delta (q, \sigma)$,
          but only in atoms of the form ($\diamond, q')$ and only disjunctively w.r.t.~to atoms with other states in $Q_i$;
    \item \emph{universal component}: $q'$ can occur in the conjunctive normal form of $\delta (q, \sigma)$,
          but only in atoms of the form ($\square, q')$ and only conjunctively w.r.t.~to atoms with other states in $Q_i$.
\end{itemize}

\subparagraph{Linearisation.}
Towards introducing the remaining restrictions, we need to describe a linearisation
operation that transforms any non-transient component of an \HWT $\mathcal{A}$ into an 
$\omega$-word automaton that reads annotated inputs. 
The idea is to simulate the path of a run of $\mathcal{A}$ that stays inside a given
component by a suitable $\omega$-word automaton, with either Büchi or coBüchi acceptance conditions
and non-determinism used either existentially or universally depending on the type of component, 
and at the same time encode the subruns of $\mathcal{A}$ that branch off the path by a suitable 
annotation of the input.

Let $\mathcal{A} = \langle Q, \Sigma, \delta, q_I, F\rangle$ be an \HWT
and let $q$ be a state that belongs to a non-transient component $Q_i$, 
according to some partition witnessing hesitancy of $\mathcal{A}$.
Further let $B$ be the set of atoms from $\{\diamond, \square\} \times (Q\setminus Q_i)$, 
namely, with states outside the component $Q_i$.
The \emph{linearisation} of $\mathcal{A}$ from $q$ is defined as the 
$\omega$-word automaton $\mathcal{A}^q_{\mathrm{exit}}$ obtained from $\mathcal{A}$ 
by restricting the state set to $Q_i$, adding a fresh state $q_{\mathrm{exit}}$,
declaring $q$ to be the new initial state,
annotating the input letters with subsets $C$ of $B$,
and redefining the transition function and the acceptance condition
by a case distinction, as follows.
\begin{itemize}
    \item If $Q_i$ is an existential component, then
          \begin{enumerate}
            \item for all $q'\in Q_i\cup\{q_{\mathrm{exit}}\}$, 
                  $\delta'(q',(\sigma,C))$ contains $q_{\mathrm{exit}}$ 
                  iff $q' = q_{\mathrm{exit}}$, or $\delta(q',\sigma) = \bot$, or
                  the disjunctive normal form of $\delta(q',\sigma)$ contains 
                  a clause of the form $\bigwedge C$;
            \item for all $q',q''\in Q_i$,
                  $\delta'(q',(\sigma,C))$ contains $q''$ iff
                  the disjunctive normal form of $\delta(q',\sigma)$ contains
                  a clause of the form $(\diamond, q'') \land \bigwedge C$.
          \end{enumerate}
          In addition, if $Q_i$ is accepting (resp., rejecting), then
          $\mathcal{A}^q_{\mathrm{exit}}$ is used as an $\NBW$ (resp., $\NCW$),
          with $Q_i$ as set of final states.
    \item If $Q_i$ is a universal component, then
          \begin{enumerate}
            \item for all $q'\in Q_i\cup\{q_{\mathrm{exit}}\}$, 
                  $\delta'(q',(\sigma,C))$ contains $q_{\mathrm{exit}}$ 
                  iff $q' = q_{\mathrm{exit}}$, or $\delta(q',\sigma) = \top$, or
                  the conjunctive normal form of $\delta(q',\sigma)$ contains 
                  a clause of the form $\bigvee C$;
            \item for all $q',q''\in Q_i$,
                  $\delta'(q',(\sigma,C))$ contains $q''$ iff
                  the conjunctive normal form of $\delta(q',\sigma)$ contains
                  a clause of the form $(\square, q'') \vee \bigvee C$.
          \end{enumerate}
          In addition, if $Q_i$ is accepting (resp., rejecting), then
          $\mathcal{A}^q_{\mathrm{exit}}$ is used as an $\UBW$ (resp., $\UCW$),
          with $Q_i$ as set of final states.
\end{itemize}

\subparagraph{Visibility.}
Let $\mathcal{A} = \langle Q,\Sigma,\delta,q_I,F\rangle$ be an \HWT, 
and let $\theta\in\{\diamond,\square\} \times Q$.
We write $\mathcal{A}^\theta$ for the automaton obtained from $\mathcal{A}$ 
by introducing a fresh initial state and by forcing the first transition to be $\theta$, 
independently of the label of the root of the input tree.
We say that a non-transient component $Q_i$ of $\mathcal{A}$ is \emph{visible} 
if for every pair of distinct annotations $C,C'\subseteq B$
that appear in some transitions of the linearisation $\mathcal{A}^q_{\mathrm{exit}}$, for any $q\in Q_i$, 
there are atoms $\theta\in C$ and $\theta'\in C'$ for which the $\omega$-tree languages
$\Lang(\mathcal{A}^\theta)$ and $\Lang(\mathcal{A}^{\theta'})$ are complement of each other. 
Accordingly, we say that $\mathcal{A}$ is \emph{visible} if every non-transient component is visible.

\subparagraph{Counter-freeness.} 
Recall the definition of counter-free $\omega$-word automaton given 
in the preliminaries.
The last restriction that we enforce requires the linearisation of every
non-transient component of a visible \HWT to be counter-free.
Accordingly, we call \emph{counter-free hesitant weak $\omega$-tree automata}
(\HWTcf for short) a Büchi $\omega$-tree automaton that is weak, hesitant, visible, and counter-free.

We remark that counter-freeness only makes sense when the automaton is visible.
Intuitively, this is because visibility guarantees that the annotations used in the linearisation 
of a component reflect genuine branching alternatives. In other words, distinct annotations are 
distinguishable by incompatible behaviours outside the component; hence they cannot serve as hidden 
markers that encode counting information while leaving the linearised $\omega$-word language counter-free.
On the other hand, it can be shown that a counter-free \HWT which is not visible can recognize 
regular $\omega$-tree languages that are not even definable in \MPL, see, e.g., \cite[Example 5.1]{BBMP24}. 


\subparagraph{The conjecture.} 
In \cite[Theorem 1.1]{MSS92} \WMSO was shown to be as expressive as \AWT,
and in \cite[Theorems 5.8 and 5.9]{BBMP24} 
\MPL was shown to be as expressive as counter-free, visible, hesitant $\omega$-tree automata
(\HTcf for short). Because \HTcf employ an acceptance condition that is stronger than B\"uchi but 
weaker than Parity, we know that 
$\HWTcf \subseteq \HTcf \cap \AWT \subseteq \MPL \cap \WMSO$.
Our Conjecture \ref{con:comfrg} claims that $\HWTcf$ are in fact as expressive as $\MPL \cap \WMSO$. 

The findings in the subsequent sections should provide evidence for the conjecture. 
The first intuition for its validity relies on the fact that 
\MPL restricts the \emph{structure} of quantified sets, while 
\WMSO restricts their \emph{cardinality},
and the two constraints seem to be orthogonal. 
More precisely, orthogonality should be witnessed at the syntactic level,
when we translate \MPL and \WMSO to their automata counterparts, \HTcf and \AWT.
If our conjecture were false, there would exist, e.g., an $\AWT$ that recognises an 
\MPL-definable language, yet structurally violates the constraints required by \HTcf. 
Given the independent nature of the syntactic constraints posed by the two classes of automata,
i.e., hesitancy and counter-freeness on one side and weakness on the other side,
we belive this is unlikely to happen.

\subparagraph{\FO is \WMPL but not $\MPL \cap \WMSO$.} \label{FO diverso}
We conclude the section by showing that the common fragment $\MPL \cap \WMSO$ is non-trivial. 
We first note that \WMPL, the fragment of \MSO in which monadic quantification is restricted 
to finite paths, is as expressive as \FO: indeed, every finite path in a tree is uniquely 
determined by its endpoints.
One might also expect \WMPL to coincide with $\MPL \cap \WMSO$, since, as argued earlier, 
\MPL and \WMSO impose orthogonal restrictions on monadic quantification. 
However, these restrictions interact semantically in a subtler way, 
and the expectation is in fact false:

\begin{restatable}{lemma}{fodiverso} \label{MPL, WMSO but not FO}
    There is an $\omega$-tree language $T$ definable in $\MPL \cap \WMSO$ but not in \FO.
\end{restatable}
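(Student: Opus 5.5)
We will exhibit a concrete language that any FO sentence fails to define, yet is definable both in MPL and in WMSO. The natural candidate is a \emph{modular-free but unbounded-depth} property that FO cannot see because of locality, while still being decided by a weak automaton along paths. We take $\AP=\{p\}$ and
\[
T \;=\; \{\, t \mid \text{every infinite path from the root contains a node labelled } p \,\},
\]
that is, the CTL property $\A\F p$. Membership in MPL is immediate: $\A\F p$ is a CTL formula, hence in $\CTLs$, and $\CTLs=\MPL$ by \cite[Theorem 3.6]{HT87}. Membership in WMSO follows from \cite[Theorem 1.1]{MSS92} once we give an \AWT. The automaton has a single rejecting universal component $\{q\}$ with $\delta(q,\sigma)=\top$ if $p\in\sigma$ and $\delta(q,\sigma)=(\square,q)$ otherwise. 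An accepting run cannot stay forever in the rejecting state $q$, so every path must eventually hit $p$. Equivalently, in WMSO we can say directly that there is a finite set $X$ containing the root, closed under taking both children of nodes not labelled $p$. By K\"onig's lemma such a finite $X$ exists iff $T$ holds, since the tree of $\neg p$-prefixes is finitely branching and has no infinite branch.

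The substance is proving that $T$ is not FO-definable. We use an Ehrenfeucht--Fra\"iss\'e / Hanf locality argument on the structure $(\Dom(t),\le,p)$. Fix quantifier rank $n$ and choose two trees.
\begin{itemize}
\item $t_1$: the full binary tree in which $p$ holds exactly at the nodes of depth $N$, for $N\gg 2^n$. Then $t_1\in T$.
\item $t_2$: the same tree, except that along one infinite branch $\beta$ the node of depth $N$ is left unlabelled, and every node of $\beta$ at every depth is unlabelled. Then $t_2\notin T$.
\end{itemize}
A cleaner variant is to keep the $p$-labels at depth $N$ everywhere except on $\beta$, so that the trees differ only on one infinite branch. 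In both trees every root-to-leaf path has long $\neg p$ stretches, and Duplicator must hide the missing label on $\beta$ beyond depth $N$.

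The main obstacle is that the order $\le$ is the descendant relation, not the edge relation. FO with $\le$ can therefore express ``there is a node $x$ such that no $y\ge x$ has $p$'', which distinguishes the two trees immediately. The naive choice is thus wrong, and $T$ must be chosen so that its failure is not witnessed by a single subtree lacking $p$. Note that $\A\F p$ is equivalent to a CTL formula, and ``no $y\ge x$ has $p$'' is FO. So we switch to a language that is genuinely counting-free yet not FO over descendant order, in the style of \cite{Boj08} and \cite[Example]{BBMP24}. The example we would take is
\[
\A\F p \text{ restricted to trees where } p \text{ marks a frontier,}
\]
more precisely the WMSO/CTL property ``every path from the root eventually visits $p$'' on trees where each $\neg p$ subtree also contains $p$ densely. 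Here the FO shortcut fails, because every subtree contains $p$ while some path avoids $p$ forever. Concretely, $t_2$ has $p$ on exactly the nodes that are right-children-off the branch $\beta$ (with the two children identified non-orderedly, $\beta$ follows one child and the other child is labelled $p$). The tree $t_1$ is a finite-depth truncation variant in which $\beta$ ends after $N$ steps at a $p$-node. We then play the $n$-round EF game on $(\Dom,\le,p)$: Duplicator maps the infinite $\neg p$ spine of $t_2$ to the length-$N$ spine of $t_1$ using the standard strategy for linear orders of length $\ge 2^n$ versus $\omega$, extended to off-spine subtrees by isomorphism. Verifying that this strategy respects $\le$ between off-spine nodes and spine nodes is the delicate step we expect to require the most care.
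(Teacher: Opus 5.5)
Your choice of $T$ is exactly the paper's. Your \MPL argument (via $\A\F p$ and \cite[Theorem 3.6]{HT87}) and your \WMSO argument (a finite set containing the root and closed under children of non-$p$ nodes, plus K\"onig's lemma) are correct and match the paper.

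The gap is the non-\FO part. The paper takes this from \cite[Theorem 3]{BBMP23}, and your Ehrenfeucht--Fra\"{\i}ss\'e argument does not establish it. Your final pair $t_1,t_2$ is separated by a fixed sentence of quantifier rank $3$, whatever $N$ is. Put $R(x):=\forall z\,(z\le x\to\neg p(z))$; by K\"onig's lemma, $t\in T$ iff $t$ has finitely many $R$-nodes. Consider the sentence
\[
\exists x\,\bigl(R(x)\land\forall y\,\bigl((x<y\land\neg\exists w\,(x<w<y))\to\neg R(y)\bigr)\bigr),
\]
which says ``the region $R$ has a leaf''.
\begin{itemize}
\item It holds in $t_1$: there the $R$-nodes form the finite chain of spine nodes above depth $N$, and the last of them has both children labelled $p$.
\item It fails in $t_2$: there the $R$-nodes are exactly the nodes of $\beta$, and each has its $\beta$-child again in $R$.
\end{itemize}
So $R$ is a finite chain in one tree and an $\omega$-chain in the other, and \FO relativised to the definable set $R$ sees a last element. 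The ``standard strategy for linear orders of length $\ge 2^n$ versus $\omega$'' does not exist: no finite linear order is even $2$-equivalent to $\omega$, as $\exists x\,\forall y\,(y\le x)$ shows.

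Your first pair fails too, but not for the reason you give, since both trees have nodes with no $p$ below them. It is separated instead by $\exists x\,\forall y\,((y\le x\lor x\le y)\to\neg p(y))$. Two further corrections:
\begin{itemize}
\item Hanf/Gaifman locality is vacuous for $(\Dom(t),\le)$, because its Gaifman graph has diameter at most $2$.
\item There is no need to ``switch language'': $T$ is fine, only your witnesses are wrong.
\end{itemize}

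A working pair needs $R$ finite in the tree of $T$ and containing an infinite path in the other tree. In the latter tree it must still hold, as in every tree of $T$ whose root lacks $p$, that every $R$-node $x$ has an $R$-leaf $x'\ge x$ (a node of $R$ whose two children carry $p$). Otherwise the sentence above, or a relativised variant of it, separates the two trees. Concretely, the infinite branch must be hidden inside large branching parts of $R$ that agree up to rank $n$ with the finite region, e.g.\ complete binary portions of depth far beyond $2^n$. You would then still need an explicit Duplicator strategy for the descendant order that handles comparabilities between the hidden branch and the surrounding subtrees. That is exactly the step you flagged as delicate, and it is the whole content of the non-definability claim. Either carry it out, or cite \cite[Theorem 3]{BBMP23} as the paper does.
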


\begin{proof} 
    Consider the language $T$ of $\omega$-trees where every infinite path from the root contains a node carrying the monadic predicate $p$.
    The underlying alphabet is $\Sigma=2^{\{p\}}$. This language is definable in \MPL, by the formula 
    $\neg \exists X \forall y (y \in X \rightarrow \neg p(y))$. It is equally definable in \WMSO, by the formula 
    $\exists X (\mathrm{root} \in X \land \forall y (y \in X \rightarrow (p(y) \lor \forall z(\mathrm{child}(z, y) \rightarrow z \in X))))$, where 
    $\mathrm{root} \in X = \exists x \forall y (x \leq y \land x \in X)$ and 
    $\mathrm{child}(z, y) = z > y \land \neg \exists w (z > w > y)$. 
    Finally, it was shown in \cite[Theorem 3]{BBMP23}, that this language 
    is not definable in \FO. 
\end{proof}

This shows that $\MPL \cap \WMSO$ is a genuine common fragment, not another presentation of $\FO$. 
It also suggests that the membership problem 
$\MPL \membership \FO$ cannot be settled by a Rabin-style reduction to $\APT \membership \AWT$. 
Although $\MPL = \HTcf \subseteq \APT$, reducing a given $\HTcf$ to an equivalent $\AWT$ 
would only show that the corresponding $\MPL$ formula is equivalent to some $\WMSO$ formula;
it would not give \FO-definability. 
Thus, even for a logic close to $\FO$ such as $\MPL$, \FO-definability remains an elusive question, 
not even reducible to the (still open) problem $\APT \membership \AWT$.

\section{\PCTL is equivalent to \HWTcf} \label{PCTL = automaton}

Recall that the witness language of Lemma~\ref{MPL, WMSO but not FO} is
over the alphabet $\Sigma=2^{\{p\}}$ and of the form 
$T = \{ t \mid \text{every infinite path from the root of $t$ carries $p$}\}$.
The following lemma shows that $T$ is recognised by an $\HWTcf$,
and also serves as a worked example of an $\HWTcf$.
Knowing that \FO is equivalent to a subclass of \HWTcf, as proved in \cite{BDMMP25},
this also implies that \FO is strictly less expressive than \HWTcf.
In particular, $T$ cannot be used to disprove our conjecture. 

\begin{lemma} \label{T expressible in HWT}
    The language $T$ from Lemma \ref{MPL, WMSO but not FO}
    is recognized by an \HWTcf.
\end{lemma}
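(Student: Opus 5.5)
The plan is to exhibit an explicit automaton and check each defining condition of \HWTcf in turn. The language $T$ is $\A\F p$, so the natural candidate is a single universal, rejecting component that waits for $p$. Concretely, take $\mathcal{A} = \langle Q, \Sigma, \delta, q_I, F\rangle$ with $Q=\{q_I, q_\top\}$ and $F=\{q_\top\}$. Set $\delta(q_I,\{p\}) = \top$ and $\delta(q_I,\emptyset) = (\square,q_I)$. The state $q_\top$ has $\delta(q_\top,\sigma)=\top$; it is included only for convenience, since the run convention already allows $\top$ transitions to be padded with copies. The partition is $Q_1=\{q_\top\}$, which is accepting and transient (or simply omitted), and $Q_2=\{q_I\}$, which is rejecting and universal.

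Correctness is the first step. A run that stays in $q_I$ forever along some branch must do so on an infinite path from the root with no $p$, and such a branch violates the Büchi condition because $q_I\notin F$. Conversely, if every path eventually meets $p$, then every copy of $q_I$ along each branch reaches a $p$-node. There the transition is $\top$ and the branch is padded with accepting copies, for example by moving into $q_\top$, which is the reason for including $F=\{q_\top\}$ and a $\square q_\top$ loop. For cleanliness, set $\delta(q_I,\{p\})=(\square,q_\top)$ and $\delta(q_\top,\sigma)=(\square,q_\top)$, and make $\{q_\top\}$ an accepting universal component. By König's lemma the set of $q_I$-labelled nodes is then finite, so the run is accepting.

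Next come the structural checks. Weakness holds by the ordering $Q_1$ before $Q_2$, since transitions only go downward. Hesitancy holds because each component contains only $\square$-atoms to its own state, used conjunctively. For the linearisations, in the component of $q_I$ the set $B$ of exit atoms is $\{(\square,q_\top)\}$. Only two annotations appear, namely $C=\emptyset$ on letter $\emptyset$, which gives the self-loop, and $C=\{(\square,q_\top)\}$ on letter $\{p\}$, which exits. For the component of $q_\top$ the set $B$ is empty, so only the annotation $\emptyset$ occurs.

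I expect visibility to be the main obstacle, because the annotations $\emptyset$ and $\{(\square,q_\top)\}$ are distinct, and the empty annotation has no atom $\theta$ to compare. If the definition is read literally, this requires care. My first attempt will be to avoid the problem by replacing the $\top$-exit with a direct transition: set $\delta(q_I,\{p\})=\top$ and drop $q_\top$ entirely. Then the only transitions of the linearisation carry the annotation $\emptyset$, with the exit triggered by $\delta=\top$. Visibility then holds vacuously, since only one annotation appears, and the component $\{q_I\}$ is the sole non-transient one. Acceptance is still correct by the run convention on $\top$, where padded copies repeat a node, under the reading that finite branches ending in $\top$ are accepted.

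Finally, counter-freeness. The linearisation is a \UCW over letters $(\emptyset,\emptyset)$ and $(\{p\},\emptyset)$, with a self-loop on $q_I$ for the former and a move to $q_{\mathrm{exit}}$, which then self-loops, for the latter. Any cycle on $w^n$ at $q_I$ uses only the letter $(\emptyset,\emptyset)$, so $w$ itself loops at $q_I$. Any cycle at $q_{\mathrm{exit}}$ loops on every word. Hence the linearisation is counter-free, and $\mathcal{A}$ is an \HWTcf recognising $T$.
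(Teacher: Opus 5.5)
Your final construction, the single state $q_I$ with $\delta(q_I,\{p\})=\top$, $\delta(q_I,\emptyset)=(\square,q_I)$ and $F=\emptyset$, is exactly the paper's automaton, and your checks match its argument. Those checks are: one rejecting universal component whose linearisation carries only the annotation $\emptyset$, hence vacuous visibility, and counter-freeness from the self-loops plus the single exit edge. The detour through $q_\top$ is unnecessary, though you rightly noticed it would break visibility read literally (the empty annotation has no atom to compare), and, as in the paper, acceptance relies on a $\top$-transition ending a branch as a finite maximal path rather than being padded with $q_I$-copies, which would be rejecting since $F=\emptyset$.
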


\begin{proof}
    Let $\mathcal{A}_T = \langle \{q_I\}, 2^{\{p\}}, \delta, q_I, \emptyset \rangle$, 
    where 
    $\delta(q_I, \{p\}) = \top$ and 
    $\delta(q_I, \emptyset) = (\square, q_I)$. 
    This is an \APT that clearly recognises $T$, 
    since it accepts a tree iff it finds a $\{p\}$-labelled node along 
    every infinite path from the root. 
    We claim that $\mathcal{A}_T$ is also an \HWTcf.
    Observe that 
    $\mathcal{A}_T$ is weak, with only one 
    component $Q_1 = \{q_I\}$ that is rejecting and universal.
    The linearisation of $\mathcal{A}_T$ from $q_I$ is an \UCW
    of the form 
    $\mathcal{A}^{q_I}_{\mathrm{exit}} = \langle \{q_I, q_{exit}\}, \Sigma', \delta', q_I, \{q_I\}\rangle$. 
    Note that we can assume the annotated alphabet to be $\Sigma' = \Sigma \times \{\emptyset\}$, 
    since no atom outside the component occurs in the image of $\delta$.
    The transition function $\delta': \{q_I,q_{\mathrm{exit}}\} \times \Sigma' \rightarrow 2^{\{q_I,q_{\mathrm{exit}}\}}$ is such that 
    $\delta'(q_I,(\emptyset,\emptyset))=\{q_I\}$ and 
    $\delta'(q,(\sigma,\emptyset))=\{q_{\mathrm{exit}}\}$ in all remaining cases,
    i.e., when $q=q_{\mathrm{exit}}$ or $\sigma=\{p\}$.
   It is easy to see that this \UCW is visible and counter-free: indeed,
   there is only one possible annotation that appears in the transitions, i.e., $C=\emptyset$, 
   and the transition graph consists solely of self-loops and a transition from $q_I$ to $q_{\mathrm{exit}}$. 
   $\mathcal{A}_T$ is therefore an \HWTcf.
\end{proof}

We now turn to a logical characterisation of $\HWTcf$ in terms of $\PCTL$. 
Towards the end of the section, we will see that, conditional on Conjecture~\ref{con:comfrg}, 
this characterisation yields a Rabin-style reduction from the membership problem $\CTLs \membership \PCTL$ 
to the automata-theoretic problem $\APT \membership \AWT$.
To prove the characterisation, we introduce two intermediate results: 
a normal form for $\PCTL$ and a syntactic fragment of $\CTLs$ that is equivalent to $\PCTL$.


\begin{restatable}{lemma}{pctlnorm} \label{PCTL normal form}
For every \PCTL formula, there is an equivalent one generated by:
\begin{align*}
\varphi ~::= &~
  p
  ~\bigm|~
  \neg p
  ~\bigm|~
  \varphi \lor \varphi
  ~\bigm|~
  \varphi \land \varphi
  ~\bigm|~
  \E\F \psi
  ~\bigm|~
  \E\G \psi
  ~\bigm|~
  \A\F \psi
  ~\bigm|~
  \A\G \psi
\\
\psi ~::= &~
  \varphi
  ~\bigm|~
  \psi \lor \psi
  ~\bigm|~
  \psi \land \psi
  ~\bigm|~
  \Y \psi
  ~\bigm|~
  \tilde{\Y} \psi
  ~\bigm|~
  \psi \s \psi
\end{align*}
\end{restatable}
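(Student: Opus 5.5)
The plan is a structural induction on the \PCTL formula, rewriting each future operator $\E\X$, $\E\U$, $\E\R$, $\A\X$, $\A\U$, $\A\R$ (whose arguments are past formulae over state subformulae) into $\E\F$, $\E\G$, $\A\F$, $\A\G$ applied to pure-past path formulae. The key device is that past operators can "look back" from a later position to express what the future operators demanded. By induction we may assume that the state subformulae occurring as arguments are already in normal form, so only the outermost quantified operator has to be rewritten. In each case the new path formula has the shape $\F\alpha$ or $\G\alpha$ with $\alpha$ built from state formulae, Boolean connectives, $\Y$, $\tilde{\Y}$ and $\s$.

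For $\X$ we introduce a marker for the initial position. The formula $\tilde{\Y}\bot$ holds exactly at position $0$; since $\bot$ is expressible as $p\land\neg p$, it is a legitimate state formula. The first step is to handle a past formula $\psi$ that is evaluated at position $1$ instead of $0$. Here $\E\X\psi \equiv \E\F(\psi\land\Y\tilde{\Y}\bot)$ and $\A\X\psi\equiv\A\G(\neg\Y\tilde{\Y}\bot\lor\psi)$. Negation of a past formula can be pushed inwards, since the negation of $\Y$ is $\tilde{\Y}$ of the negation, and $\s$ dualises to a "weak back-to" operator. That weak operator must itself be expressed with $\s$ and $\tilde{\Y}$; this follows from the identity $\neg(a\s b)\equiv (\neg b)\s(\neg a\land\neg b)\lor \mathsf{H}\neg b$, where $\mathsf{H}\gamma$ is $\neg(\top\s\neg\gamma)$. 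To avoid circularity, $\mathsf{H}\gamma$ is written in negation normal form as $\gamma\s(\gamma\land\tilde{\Y}\bot)$. A careful check is needed that negation normal form is preserved throughout.

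For the until and release cases, the witness position is reached with $\F$, and the obligation on earlier positions is expressed with past operators. One writes $\E(\psi_1\U\psi_2)\equiv\E\F\bigl(\psi_2\land\tilde{\Y}\mathsf{H}\psi_1\bigr)$, with $\mathsf{H}$ as above. The point is that the past formulae $\psi_i$ are evaluated at the correct positions, because their truth at a position $k$ depends only on the prefix up to $k$, which is fixed along the chosen path. Dually, $\A(\psi_1\R\psi_2)\equiv\A\G\bigl(\psi_2\lor\tilde{\Y}\neg\mathsf{H}\neg\psi_1\bigr)$, that is, at every position either $\psi_2$ holds or $\psi_1$ held strictly earlier. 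The negation is again pushed to negation normal form. The mixed cases $\A\U$ and $\E\R$ are the main obstacle. They cannot simply be dualised, because negation swaps $\A$ with $\E$ as well as $\U$ with $\R$.

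For $\A(\psi_1\U\psi_2)$, the first decomposition to try is $\A\F\psi_2\land\A\G(\mathsf{O}\psi_2\lor\psi_1)$, where $\mathsf{O}\gamma$ ("once $\gamma$") is $\top\s\gamma$. This says that every path eventually meets $\psi_2$, and that at every position either $\psi_2$ has already occurred or $\psi_1$ holds. Together these imply the until on each path, because the first occurrence of $\psi_2$ is preceded only by $\psi_1$-positions. Conversely, the until implies both conjuncts. Symmetrically, $\E(\psi_1\R\psi_2)$ should equal $\E\G\psi_2\lor\E\F(\psi_1\land\mathsf{H}\psi_2)$. The hardest part is to verify that splitting the existential quantifier in this way is sound. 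The argument is that a release holds on a path iff either $\G\psi_2$ holds on it, or $\psi_1\land\psi_2$ is reached with $\psi_2$ holding throughout the prefix before it. Each disjunct needs only one existential path, and since $\psi_i$ are past formulae, prefixes are determined. Finally, collecting all cases and applying the distributive laws of the path-formula grammar yields the normal form claimed in Lemma~\ref{PCTL normal form}.
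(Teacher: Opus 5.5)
Your argument is correct up to one wrong identity, and it follows a genuinely different route from the paper's. For $\X$ you both use an explicit formula. For $\U$ and $\R$, the paper abstracts the state subformulae into fresh atoms and treats $\psi_1'\U\psi_2'$ as a negation-normal-form \PLTL formula without $\R$. It then invokes the separation-based proposition and the Chang--Manna--Pnueli characterisation, which yields \emph{some} $\F\eta$ with $\eta$ pure past. That proposition rests on Gabbay's theorem plus a ``close inspection'' showing that separation preserves negation normal form and absence of $\R$. You instead write the target formula down directly, exploiting that the arguments of $\U$ and $\R$ in \PCTL are already pure past. Your version is elementary, gives linear-size output and needs none of that imported machinery. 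The paper's version is uniform, but it leans on heavy theorems and an informally justified invariant.

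The error is in the release case. As written, $\A(\psi_1\R\psi_2)\equiv\A\G\bigl(\psi_2\lor\tilde{\Y}\neg\mathsf{H}\neg\psi_1\bigr)$ is false: $\tilde{\Y}$ makes the second disjunct true at position $0$, so $\psi_2$ is never required at the starting node. For example, take $\psi_1=\bot$ and $\psi_2=p$. Your right-hand side then demands $p$ only at positions $\geq 1$, whereas $\A(\bot\R p)\equiv\A\G p$. You need the strong yesterday, $\A\G\bigl(\psi_2\lor\Y(\top\s\psi_1)\bigr)$, which is exactly what your gloss ``$\psi_1$ held strictly earlier'' says.

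The ``main obstacle'' for the mixed cases is also illusory. Your identities are initial equivalences of path formulae on a single fixed path, namely $\psi_1\U\psi_2\equiv\F(\psi_2\land\tilde{\Y}\mathsf{H}\psi_1)$ and $\psi_1\R\psi_2\equiv\G\bigl(\psi_2\lor\Y(\top\s\psi_1)\bigr)$ at position $0$. They can therefore be prefixed by $\E$ or $\A$ indifferently. Your separate decompositions of $\A\U$ and $\E\R$ are correct but unnecessary, and splitting $\E$ over a disjunction is just $\E(\alpha\lor\beta)\equiv\E\alpha\lor\E\beta$.

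Similarly, your negation-pushing identity for $\s$ is correct but never needed. Write $\neg\Y\tilde{\Y}\bot$ directly as $\tilde{\Y}\Y\top$ and $\neg\mathsf{H}\neg\psi_1$ as $\top\s\psi_1$, and no path formula ever has to be negated.
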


\begin{proof} [Proof sketch]
    The proof is by structural induction. 
    The path formulae of $\PCTL$ belong to the fragments of $\PLTL$ in negation normal form that forbid the use of either $\R$ or $\U$.
    Thus one can use 
    Proposition \ref{safeLTL with past is equal to safeLTL} to turn those formulae into 
    equivalent ones from \safePLTL and \cosafePLTL. This is
    consistent with the above grammar. 
\end{proof}

The above normal form
clearly shows that \PCTL expressive power is deeply linked to the ability to express 
both \emph{safety} and \emph{co-safety} properties along paths. Moreover, it gives us a way to identify a syntactic fragment of \CTLs equivalent to \PCTL. This is useful for two reasons: first, it shows how to retain \PCTL expressive power while dropping past operators; second, it will be instrumental in the proof that \PCTL is expressively equivalent to \HWTcf.

\begin{restatable}{lemma}{cosafeCTL} \label{PCTL as CTLs fragment}
    \PCTL is as expressive as \obbligationCTLs, generated by the grammar
 \begin{align*}
\varphi ~::= &~
  p
  ~\bigm|~
  \neg p
  ~\bigm|~
  \varphi \lor \varphi
  ~\bigm|~
  \varphi \land \varphi
  ~\bigm|~
  \E \psi
  ~\bigm|~
  \E \chi
  ~\bigm|~
  \A \psi
  ~\bigm|~
  \A \chi
\\
\psi ~::= &~
  \varphi
  ~\bigm|~
  \psi \lor \psi
  ~\bigm|~
  \psi \land \psi
  ~\bigm|~
  \X \psi
  ~\bigm|~
  \psi \U \psi
\\
\chi ~::= &~
  \varphi
  ~\bigm|~
  \chi \lor \chi
  ~\bigm|~
  \chi \land \chi
  ~\bigm|~
  \X \chi
  ~\bigm|~
  \chi \R \chi
\end{align*}
\end{restatable}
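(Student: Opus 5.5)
We prove the two inclusions separately, in both directions by structural induction on state formulae, replacing path formulae whose maximal state subformulae have already been translated.

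\emph{From \PCTL to \obbligationCTLs.} By Lemma~\ref{PCTL normal form} we may assume the \PCTL formula is in normal form, so every path quantifier occurs as $\E\F\psi$, $\E\G\psi$, $\A\F\psi$ or $\A\G\psi$ with $\psi$ a pure-past path formula. By induction, the maximal state subformulae of $\psi$ are already equivalent to \obbligationCTLs formulae; we treat them as fresh atomic propositions. Then $\F\psi$ is a \cosafePLTL formula and $\G\psi$ is a \safePLTL formula over the extended alphabet. By Theorem~\ref{safeLTL equivalent to safePLTL} there are equivalent \cosafeLTL and \safeLTL formulae, which can be taken in negation normal form. These use only $\X$, $\U$ and Boolean connectives (respectively $\X$, $\R$ and Boolean connectives), i.e.\ they are generated by $\psi$ (respectively $\chi$) of the \obbligationCTLs grammar. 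Substituting back the translated state subformulae yields $\E\psi'$, $\E\chi'$, $\A\psi'$ or $\A\chi'$.

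One subtlety must be checked. The equivalence of Theorem~\ref{safeLTL equivalent to safePLTL} is over $\omega$-words read from position $0$, whereas a path quantifier in \PCTL starts its path at the current node $u$, and past operators must not look above $u$. Since the \PCTLs semantics evaluates $\E\psi$ at $(t,\pi,0)$ with $\pi$ starting at $u$, the past operators see only the path from $u$. The word semantics therefore matches exactly, and no issue arises.

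\emph{From \obbligationCTLs to \PCTL.} Again by induction, treat the maximal state subformulae as atoms. A path formula generated by $\psi$ is a \cosafeLTL formula, so by Theorem~\ref{safeLTL equivalent to safePLTL} it is equivalent to $\F\alpha$ with $\alpha$ pure past; dually, a $\chi$ formula is equivalent to $\G\alpha$. Hence $\E\psi\equiv\E\F\alpha = \E(\top\U\alpha)$ and $\A\chi\equiv\A\G\alpha=\A(\bot\R\alpha)$, and similarly for the other two quantifier/formula combinations. These are legal \PCTL formulae, because $\alpha$ contains only past operators and state formulae, and the future operator is immediately preceded by a path quantifier. To fit the negation normal form grammar, we push negations inside $\alpha$ using the dualities $\neg\Y\beta\equiv\tilde{\Y}\neg\beta$ and the dual of $\s$. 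For the latter we either express it with $\s$, $\Y$ and $\tilde{\Y}$ plus Boolean connectives, or note that Theorem~\ref{safeLTL equivalent to safePLTL} yields past formulae of the needed shape directly.

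\emph{Main obstacle.} The essential content lies in Theorem~\ref{safeLTL equivalent to safePLTL} (and in Proposition~\ref{safeLTL with past is equal to safeLTL}, via the normal form). What remains is to verify that treating state subformulae as atoms is sound. This holds because the truth of a state subformula at a node of the path depends only on that node and its subtree, not on the path, so path formulae over the extended alphabet have ordinary $\omega$-word semantics. The only genuinely delicate point is the negation normal form for past formulae in the second direction. There I would first try the standard identity expressing the dual of since as a positive combination: $\neg(\beta\s\gamma) \equiv (\neg\gamma)\s(\neg\beta\wedge\neg\gamma) \vee \text{``}\neg\gamma\text{ has held since the start''}$, where the latter is expressible as $(\neg\gamma)\s(\neg\gamma\wedge\neg\Y\top)$, with $\neg\Y\top\equiv\tilde{\Y}\bot$.
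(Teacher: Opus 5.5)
Your proof is correct and follows essentially the same route as the paper: structural induction over the normal form of Lemma~\ref{PCTL normal form}, replacing maximal state subformulae by fresh atoms, applying Theorem~\ref{safeLTL equivalent to safePLTL} in each direction, and substituting the translated state formulae back. Your two extra checks are sound and are left implicit in the paper: that past operators cannot see above the quantified node, and that the dual of $\s$ has a positive expression so the negation normal form survives. The appendix version of the paper avoids the second issue altogether by allowing negation of state formulae.
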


\begin{proof} [Proof sketch]
    The proof is again by induction in both directions. 
    For the non-trivial cases, the idea is to exploit the fact that path formulae of the \PCTL normal form and those of \obbligationCTLs are expressively equivalent, thanks to Theorem \ref{safeLTL equivalent to safePLTL}. 
    \qedhere 

\end{proof}

Now, we are ready to establish the equivalence between \PCTL and \HWTcf. 
We say that an \HWTcf is \emph{universal} (resp., \emph{existential}) if
its transition function only produces conjunctions (resp., disjunctions)
of atoms of the form $(\square, q)$ (resp., $(\diamond, q)$).
Similarly, we define \APCTL (resp., \EPCTL) as the fragment of \PCTL 
in which only universal (resp., existential) path quantifiers are allowed.

\autopctl*

\begin{proof} [Proof sketch]
The proof relies on using linearisations of non-transient components 
of \HWTcf in one direction, and \obbligationCTLs in the other direction. 
To translate an \HWTcf to an equivalent \PCTL formula, we adapt the proof 
of \cite[Theorem 12]{BDMMP25}, proceeding by induction on the components' order. 
The transient case is straightforward; the case of a universal (resp., existential) 
component is addressed by linearising the component and translating it into an
equivalent \safeLTL (resp., \cosafeLTL) formula. Recombining the latter formulas 
yields an \obbligationCTLs formula equivalent to the initial \HWTcf. 
Finally, Lemma~\ref{PCTL as CTLs fragment} provides an equivalent \PCTL formula. 
This also proves the second statement of the theorem, since translating a universal 
(resp., existential) \HWTcf never requires existential (resp., universal) path quantification. 

For the converse direction, we apply again Lemma~\ref{PCTL as CTLs fragment} 
to transform the input \PCTL formula into an equivalent \obbligationCTLs formula. 
To enforce visibility in the target automaton, we rely on an expressively complete 
syntactic variant of \obbligationCTLs~\cite[Section 5.1]{BBMP24} designed to isolate 
path quantifiers, ensuring that every pair of consecutive path quantifiers is 
explicitly separated. Then, the translation is fairly routine: the only non-trivial 
case is the path quantifier, which is handled as in, e.g., \cite[Theorem 5.9]{BBMP24}.
\end{proof}



We can now state a Rabin-style reduction theorem, conditional on Conjecture~\ref{con:comfrg}: 

\mainthm*



\section{The common fragment of \PCTL and \LTL is decidable} \label{Common fragment}

In this section we show another major application of the logical characterization 
of \HWTcf given in Theorem \ref{thm:hwtcf&pctl}: specifically, we establish the 
decidability of the common fragment of \LTL and \PCTL. 
More precisely, we prove that both membership problems 
$\PCTL \membership \LTL$ and $\LTL \membership \PCTL$ are decidable, and therefore, 
$\MSO \membership \LTL \cap \PCTL$ is decidable too. 
This result is interesting for at least two reasons: first, it provides
decidability results for membership problems of tree logics, which are
notoriously among the hardest problems in language theory; second, it offers a
new way to attack the almost 40-year-old open problem of deciding the common
fragment of \LTL and \CTL.

Given a regular $\omega$-word language $L$, let $\triangle [L]$ be the regular $\omega$-tree 
language of all trees such that every infinite path starting at the root is in $L$.

\begin{restatable}{lemma}{Wordtotree} \label{From words to trees}
    Let $\mathcal{A}$ be a \DBWcf. One can construct a universal \HWTcf recognizing $\triangle[\Lang(\mathcal{A})]$.
\end{restatable}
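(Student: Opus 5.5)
The plan is to turn the B\"uchi condition ``every path from the root visits $F$ infinitely often'' into a weak condition by splitting it into a universal safety layer and a family of universal co-safety checks. Let $\mathcal{A}=\langle Q,\Sigma,\delta,q_I,F\rangle$ be the given \DBWcf.

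\textbf{Step 1: reformulation.} First I establish the following fact on full binary trees. A tree $t$ belongs to $\triangle[\Lang(\mathcal{A})]$ iff, for every node $v$ of $t$ with $\mathcal{A}$ in state $q_v$ after reading the root-to-$v$ prefix, every infinite path from $v$ reaches a node whose $\mathcal{A}$-state lies in $F$. The ``only if'' direction is immediate. For ``if'', suppose an infinite path from the root visits $F$ only finitely often. Pick a node on it after its last $F$-visit; then the suffix of the path from that node never reaches $F$, a contradiction. Note that no appeal to K\"onig's lemma is needed here, since the condition is required at every node.

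\textbf{Step 2: the automaton.} I build $\mathcal{B}$ with states $Q\times\{g,f\}$ and initial state $(q_I,g)$. The transitions are
$\delta_{\mathcal{B}}((q,g),\sigma)=(\square,(\delta(q,\sigma),g))\wedge\gamma(q,\sigma)$, and
$\delta_{\mathcal{B}}((q,f),\sigma)=\top$ if $\delta(q,\sigma)\in F$, and $(\square,(\delta(q,\sigma),f))$ otherwise.
Here $\gamma(q,\sigma)$ is $\top$ if $\delta(q,\sigma)\in F$ and $(\square,(\delta(q,\sigma),f))$ otherwise. More precisely, $\gamma$ launches a checker that must hit $F$ on every path below. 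The final states are $Q\times\{g\}$.

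$\mathcal{B}$ is universal, since it only uses conjunctions of $\square$-atoms. It is weak with two universal components, ordered checker below guard. The guard component $Q\times\{g\}$ is accepting, and the checker component $Q\times\{f\}$ is rejecting. If one prefers, the checker component can be split further along the SCCs of $\mathcal{A}$, which keeps it weak.

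Correctness follows from Step 1. Since the $g$-copies visit every node with the correct $\mathcal{A}$-state, the accepting guard layer enforces the check at every node. The rejecting checker layer forces every path below to reach $F$ within finitely many steps.

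\textbf{Step 3: counter-freeness and visibility.} Both linearisations are, up to annotations, copies of the transition graph of $\mathcal{A}$ together with $q_{\mathrm{exit}}$. Hence any loop on $w^n$ in a linearisation projects to a loop on $w^n$ in $\mathcal{A}$. Counter-freeness of $\mathcal{A}$ then gives a loop on $w$, and because the annotation is a function of the source state and the letter, this loop lifts back.

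Visibility is the step I expect to be the main obstacle. The checker component carries no external annotation, so there is nothing to check there. In the guard component, however, the annotation is $\{(\square,(q',f))\}$ with $q'=\delta(q,\sigma)$. Distinct annotations then need not be complementary: they are not, since all of them are ``universal reachability of $F$'' from different states.

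My first attempt at a fix is to move the launched check into the component-entry structure. Instead of annotating, I would let the $g$-copy launch the checker \emph{at the same state}, through a transient intermediate state. This makes every guard transition carry the same annotation, or no annotation at all, so that visibility holds vacuously.

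If that fails, the fallback is to prove the statement through Theorem~\ref{thm:hwtcf&pctl}. I would exhibit an \APCTL (or \obbligationCTLs) formula for $\triangle[\Lang(\mathcal{A})]$ built from the same decomposition, and then translate it into a universal \HWTcf.
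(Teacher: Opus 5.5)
Your automaton is, up to renaming, the paper's construction: an accepting copy of $\mathcal{A}$ that, whenever it enters a non-final state $q'$, spawns $(\square,(q',f))$ into a rejecting copy, which is discharged on reaching $F$. Your Step~1 is a clean tree-level replacement for the paper's route, which checks $\Lang(\mathcal{A})=\Lang(\mathcal{B})$ on words and then lifts to $\triangle$. Your weakness, hesitancy and counter-freeness arguments are fine. For the checker copy, note that it is determinism of $\mathcal{A}$ that makes the loop on $w$ a prefix of the run on $w^n$, and hence $F$-free.

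The gap is visibility. It is part of the definition of \HWTcf, and your proposal never establishes it: you conclude that it fails, and neither of your remedies works.

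\begin{itemize}
\item The transient intermediate state cannot help. If it lies on the guard's cycles, the order witnessing weakness forces it into the guard component, so nothing changes. If it lies off the cycle, it is itself the side atom. A single shared state cannot carry the obligation ``every path reaches $F$ from $q'$'', because that obligation depends on $q'=\delta(q,\sigma)$.
\item The fallback does not give what the lemma asks for. Theorem~\ref{thm:hwtcf&pctl} only translates universal \HWTcf into \APCTL, not conversely. The \PCTL-to-\HWTcf direction (simple form with $\D^1\E$, dual automata, side atoms $(\diamond,q_i)$ and $(\square,\overline{q_i})$) produces $\diamond$-atoms and disjunctions. So you would not obtain the \emph{universal} automaton the lemma claims, and you would still owe the \APCTL formula itself.
\end{itemize}

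The paper closes this step by noting that $\mathcal{B}$ contains no disjunction at all. Every transition is a conjunction of $\square$-atoms, so the automaton never chooses between alternative annotations. The side obligation attached to a guard step is forced by the source state and the letter, so it cannot act as a hidden counter, which is what visibility is meant to exclude. The paper therefore treats the constraint as trivially met.

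Your objection does match the literal pairwise-complementarity wording of the definition. The continuation clause $(\square,(q',g))$ carries the empty annotation. Moreover, two $\square$-atoms have complementary languages only when one is empty and the other universal. Under that literal reading, the objection applies verbatim to the paper's automaton too, and for the same reason to essentially any universal automaton with non-trivial side obligations. So it cannot be cured by changing $\mathcal{B}$ while keeping it universal. What your proof is missing is the argument that forced, choice-free annotations meet the intended visibility requirement, not a different construction.
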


\begin{proof}[Proof sketch]
    Given a \DBWcf $\mathcal{A} = \langle Q, \Sigma, \delta, q_I, F \rangle$, 
    we first construct an equivalent \UWW $\mathcal{B}$ by taking two disjoint copies of $\mathcal{A}$.
    Intuitively, the first copy follows the unique run of $\mathcal{A}$ forever, and it is accepting. 
    Whenever this run enters a non-final state, the universal automaton $\mathcal{B}$ spawns a branch 
    in the second copy, which checks the local obligation that a final state is reached later. 
    Since the second copy is rejecting, such a branch cannot remain there forever; 
    it is discharged exactly when the run reaches a final state. 
    Formally, we let 
    $\mathcal{B} = \langle (Q\times \{1\}) \cup ((Q \setminus F) \times\{2\}), \Sigma, \delta', (q_I, 1), (Q\times\{1\}) \rangle$,
    where $\delta'$ is defined as follows:   
    \begin{itemize}
    \item if $\delta(q, \sigma) = q' \notin F$, then $\delta'((q, 1), \sigma$) = $\{(q',1), (q', 2)\}$ and $\delta'((q,2), \sigma) = \{(q', 2)\}$;

    \item if $\delta(q, \sigma) = q' \in F$, then $\delta' ((q,1), \sigma) = \{(q', 1)\}$ and $\delta'((q,2), \sigma) = \emptyset$.
    \end{itemize}
    Moreover, $\mathcal{B}$ is weak: the first component consists only of accepting states, 
    the second one only of rejecting states, and there are no transitions from the second 
    component back to the first. 
    Counter-freeness is preserved as well, since every cycle in $\mathcal{B}$ projects 
    to a cycle of $\mathcal{A}$.

    The second step turns $\mathcal{B}$ into a tree automaton. 
    For each transition of $\mathcal{B}$, every successor state $(q,i)$, for $i\in\{1,2\}$,
    is replaced by the atom $(\square,(q,i))$. 
    In this way, the word automaton is simulated uniformly along all branches of the input tree. 
    The resulting automaton is a universal $\HWTcf$ and recognises precisely $\triangle[\Lang(\mathcal{A})]$.
\end{proof}

The following theorem gives an automaton-theoretic characterization of a robust class of LTL-definable
$\omega$-tree languages, and is crucial for the main decidability result presented in this section
(Theorem \ref{thm:memprb}).

\begin{theorem} \label{main theorem}
    For an \LTL-definable $\omega$-word language $L$, the following are equivalent:
    (i) $L$ is recognized by a \DBWcf,
    (ii) $\triangle[L]$ is recognized by a universal \HWTcf, 
    (iii) $\triangle[L]$ is recognized by an \HWTcf, 
    (iv) $\triangle[L]$ is recognized by an \AWT.
\end{theorem}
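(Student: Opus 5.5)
We will prove the cycle of implications (i)$\Rightarrow$(ii)$\Rightarrow$(iii)$\Rightarrow$(iv)$\Rightarrow$(i). The first three are essentially free from what precedes. (i)$\Rightarrow$(ii) is exactly Lemma~\ref{From words to trees}. (ii)$\Rightarrow$(iii) is trivial, since a universal \HWTcf is in particular an \HWTcf. (iii)$\Rightarrow$(iv) holds because an \HWTcf is by definition a weak B\"uchi tree automaton, i.e.\ an \AWT (hesitancy, visibility and counter-freeness are only extra restrictions).

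The substantive direction is (iv)$\Rightarrow$(i), and we split it in two. First we show that if $\triangle[L]$ is recognised by an \AWT, then $L$ is recognised by some \DBW. For this we invoke the result of Kupferman and Vardi~\cite{KV05} underlying the decidability of $\LTL\mapsto\AFLmu$: for an $\omega$-regular $L$, $\triangle[L]$ is \AWT-recognisable (equivalently \AFLmu-definable) iff $L$ is \DBW-recognisable. Should a self-contained argument be preferred, the plan is the classical one. Suppose $L$ is not \DBW-recognisable. By Landweber's characterisation, a deterministic parity automaton for $L$ then contains a reachable configuration of the Wagner hierarchy witnessing the failure of the \DBW condition. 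Concretely, there are finite words $u,v,w$ such that $u(v^*w)^\omega\subseteq L$-style alternations force, in the tree, an unbounded nesting of ``eventually leave the bad loop'' obligations, and these cannot be realised with a weak acceptance condition. One builds trees $t_n\in\triangle[L]$ and $t'_n\notin\triangle[L]$ that an \AWT with fewer than $n$ components cannot distinguish, via a pumping/game argument on weak automata, as in the proof that $\triangle[(a+b)^*b^\omega]$-type languages are not weak.

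Second, and this is specific to \LTL, we must upgrade \DBW to \DBWcf. Since $L$ is \LTL-definable, it is aperiodic. Take the minimal-style deterministic Büchi automaton for $L$, or more robustly apply a determinisation that preserves counter-freeness. We expect to use the known fact that an \LTL-definable language recognisable by a \DBW is also recognisable by a counter-free \DBW; this follows e.g.\ from the characterisation of \DBW-recognisable \LTL languages as those expressible in the fragment $\G\F$(past), i.e.\ formulae $\G\F\alpha$ with $\alpha$ pure past, combined with Boolean closure. Then $L=\G\F\alpha$ with $\alpha$ a past formula, so $\alpha$ defines an aperiodic regular language of finite words. Its minimal DFA $\mathcal{D}$ is counter-free, and $\mathcal{D}$ read as a Büchi automaton with accepting states those where $\alpha$ holds recognises $L$ and is a \DBWcf.

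The main obstacle is this second step: justifying that every \LTL-definable \DBW-language has the form $\G\F\alpha$ with $\alpha$ pure past (Manna--Pnueli's temporal hierarchy, ``recurrence'' class). An alternative to try first is direct: take any \DBW $\mathcal{A}$ for $L$ and the syntactic congruence of the aperiodic $L$, and build the product with the aperiodic monoid. Then, using the Landweber-style argument of the first step on the product, show that the \DBW acceptance can be defined on the counter-free transition structure of the monoid automaton.
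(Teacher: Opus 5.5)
Your proof is correct. For (i)$\Rightarrow$(ii)$\Rightarrow$(iii)$\Rightarrow$(iv) and for the first half of (iv)$\Rightarrow$(i) it is essentially the paper's argument. The paper turns the \AWT into a nondeterministic B\"uchi tree automaton via \cite{MSS92} and then applies \cite{kupferman2006relating} to obtain a \DBW for $L$. That is exactly what underlies the \cite{KV05} characterisation you quote. Your ``self-contained'' Landweber/pumping sketch is not a proof as written and is not needed.

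The genuine difference is the upgrade from \DBW to \DBWcf. The paper says only that, since $L$ is star-free, minimising the \DBW yields a \DBWcf, citing \cite{perrin1995semigroups}. You instead use the Manna--Pnueli recurrence characterisation, from the same line of work as \cite{chang1992characterization}, which the paper already uses for the safety and co-safety cases:
\begin{enumerate}
\item An \LTL-definable \DBW-language equals $\G\F\alpha$ for some pure-past $\alpha$.
\item The set $K$ of finite words at whose last position $\alpha$ holds is star-free.
\item Hence the minimal DFA of $K$ is aperiodic, and therefore counter-free.
\item Read as a B\"uchi automaton with its final states as accepting states, it recognises $L$. Indeed, the run is in a final state after reading $w_0\cdots w_i$ exactly when $\alpha$ holds at $i$.
\end{enumerate}

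What your route buys is an explicit, canonical counter-free automaton. This matters, because \DBW minimisation is not canonical, and state-minimality does not imply counter-freeness. Consider $\G\F a$ over $\{a,b\}$ and the two-state \DBW with initial non-final state $x$ and final state $y$, where $a$ swaps $x$ and $y$ and $b$ always leads to $x$. It recognises $\G\F a$ and is minimal, yet $a^2$ is a counter at $x$. So the paper's one-line step needs either a precise result from Perrin--Pin or a construction like yours.

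The price of your route is importing the recurrence theorem. The ``main obstacle'' you flag is exactly that known theorem, so you should cite it rather than reprove it. Your fallback via a product with the syntactic monoid can be dropped.
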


\begin{proof}
Assume first that $L$ is recognised by a $\DBWcf$. 
By Lemma~\ref{From words to trees}, $\triangle[L]$ is recognised by a universal $\HWTcf$, so (i) implies (ii). 
The implications from (ii) to (iii) and from (iii) to (iv) are immediate, since universal $\HWTcf$ are $\HWTcf$, and $\HWTcf$ are $\AWT$.
It remains to prove the implication from (iv) to (i). Suppose that $\triangle[L]$ is recognised by an $\AWT$. 
By~\cite[Theorem~1.1]{MSS92}, it is also recognised by a \emph{non-deterministic B\"uchi tree automaton}. 
Then, by~\cite[Theorem~3.1]{kupferman2006relating}, $L$ is recognised by a $\DBW$. Since $L$ is $\LTL$-definable, 
it is star-free; hence, after minimising the $\DBW$, we obtain an equivalent $\DBWcf$~\cite{perrin1995semigroups}.
\end{proof}

Theorem \ref{main theorem}
also yields the following logical equivalences (see again Figure \ref{fig:venndiag}):

\comfrg*

\begin{proof}
Let $L$ be an $\LTL$-definable $\omega$-word language. 
It is enough to show that definability of $\triangle[L]$ in $\APCTL$, in $\PCTL$, 
and in $\AFLmu$ are all equivalent to recognisability of $L$ by a $\DBWcf$.
If $\triangle[L]$ is definable in $\PCTL$, then by Theorem~\ref{thm:hwtcf&pctl} it 
is recognised by an $\HWTcf$, and Theorem~\ref{main theorem} yields a $\DBWcf$ for $L$. 
The case of $\APCTL$ follows since $\APCTL\subseteq\PCTL$. 
If $\triangle[L]$ is definable in $\AFLmu$, then it is recognised by an $\AWT$, thanks to \cite{KV05}; 
again Theorem~\ref{main theorem} yields a $\DBWcf$ for $L$.
Conversely, if $L$ is recognised by a $\DBWcf$, then Theorem~\ref{main theorem} gives a universal $\HWTcf$ recognising $\triangle[L]$. 
By Theorem~\ref{thm:hwtcf&pctl}, this yields an equivalent $\APCTL$ formula. 
Definability in $\PCTL$ follows from $\APCTL\subseteq\PCTL$, and definability in $\AFLmu$ follows since $\HWTcf$ are $\AWT$, which characterise $\AFLmu$~\cite{KV05}.
%
\end{proof}

The above result might appear surprising, since 
Boja{\'n}czyk proved in \cite{bojanczyk2008common}  that \LTL $\cap$ \ACTL $\subsetneq$ \LTL $\cap$ \CTL. Indeed, Theorem \ref{thm:comfrg} implies that it is possible to avoid existential path quantification and capture the common fragment of \LTL and \PCTL, employing past operators instead. To exemplify this idea, we show how to define in \APCTL the language used by Boja{\'n}czyk to show the strict inclusion already mentioned.

\begin{proposition}
    The regular $\omega$-tree language $\triangle[(ab)^*a(ab)^*c^\omega]$ is definable in \APCTL.
\end{proposition}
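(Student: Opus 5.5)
Two routes are available: an indirect one via the machinery already developed, and a direct one that writes the formula explicitly. I would do both, using the direct one as the real proof.

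\textbf{Indirect route.} Let $L=(ab)^*a(ab)^*c^\omega$. Since $L$ is \LTL-definable, Theorem~\ref{thm:comfrg} combined with Theorem~\ref{main theorem} reduces the claim to showing that $L$ is recognised by a \DBWcf. A deterministic automaton for $L$ reads $(ab)^*$, then the single extra $a$, then $(ab)^*$, then $c^\omega$. Concretely, it tracks two things: whether the extra $a$ has already been seen, and the parity position inside the $ab$ blocks. Once the first $c$ is read, it must check that the prefix was well formed and then loop on $c$ in a final state; any other letter goes to a sink.

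The catch is that the extra $a$ is not determined locally. After reading $a$, the next letter $b$ means the $a$ opened an $ab$ block, whereas $a$ or $c$ means it was the extra one. So determinism needs one letter of lookahead, stored in the state. This is still finite and deterministic. Counter-freeness then follows because $L$ is star-free and the minimal \DBW is used, as in the proof of Theorem~\ref{main theorem}. This route yields existence of an \APCTL formula but not the formula itself.

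\textbf{Direct route.} I would exhibit an \APCTL formula, whose shape is suggested by the normal form of Lemma~\ref{PCTL normal form}. Every path from the root must satisfy two conditions:
\begin{enumerate}
\item a safety part $\A\G\alpha$, where $\alpha$ is a past formula checking that the prefix read so far is a prefix of some word in $L$;
\item a co-safety part $\A\F\beta$, where $\beta$ is a past formula saying that the prefix read so far lies in $(ab)^*a(ab)^*c$.
\end{enumerate}
Once a $c$ has been seen, the safety part forces the remainder of the path to be $c^\omega$, so the conjunction of the two conditions says exactly that every path lies in $L$.

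\textbf{The past formulas.} For $\alpha$ I would combine three clauses:
\begin{enumerate}
\item $c\to(\Y c\lor \beta_0)$, so that the first $c$ is preceded by a word in $(ab)^*a(ab)^*$;
\item $\neg c\to \neg\PP c$, so that no non-$c$ letter follows a $c$;
\item a clause forbidding the factors $bb$ and $aaa$, forbidding two occurrences of $aa$, and forbidding a start with $b$.
\end{enumerate}
All three are expressible with $\Y$, $\tilde{\Y}$ and $\s$, with $\PP=\top\s$. For $\beta_0$ (the prefix belongs to $(ab)^*a(ab)^*$) I would use: the letter at the current position is $a$ or $b$; there is no $bb$; and exactly one $a$ is not immediately followed by $b$, with that $a$ counted either as a past $aa$ or as the final letter. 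All of this is star-free and expressible in pure past.

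\textbf{Main obstacle.} The key step is to check that $\A\G\alpha\wedge\A\F\beta$ defines $\triangle[L]$ and not something weaker. The reason it works is that the operators are evaluated along the whole path from the root, so past modalities see the entire prefix. This is exactly where past operators replace the existential quantification that \CTL needs. The step I expect to be hardest, and would verify carefully, is the unique-$aa$ counting in the past formula.
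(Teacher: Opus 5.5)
Your direct route is essentially the paper's proof. The paper also exhibits a formula of the shape $a\wedge\A\F c\wedge\A\G(\ldots)$, whose pure-past safety part does four things:
\begin{itemize}
\item it forbids $bb$;
\item it forbids a second $aa$, via $\neg(a\wedge\Y a\wedge\Y\PP(a\wedge\Y a))$;
\item it makes $c$ absorbing;
\item at the first $c$ it checks the prefix by a case split on the preceding letter: $a$ with no earlier $aa$, or $b$ with an earlier $aa$.
\end{itemize}
The only real difference is that you check the prefix by counting the $a$'s not followed by $b$ instead of that case split.

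There is one slip to fix. Since $\beta_0$ demands that the \emph{current} letter be $a$ or $b$, clause~1 must read $c\to(\Y c\vee\Y\beta_0)$. As written, it fails at the first $c$ of every path, and $\A\F\beta$ forces such a $c$ to exist. So $\A\G\alpha\wedge\A\F\beta$ would be unsatisfiable. With $\Y\beta_0$ the formula is correct, and $\beta$ can then simply be $c$.

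Your indirect route is a valid alternative and does not suffer from the slip: build a \DBW for $L$, then apply Lemma~\ref{From words to trees} and the second statement of Theorem~\ref{thm:hwtcf&pctl}. For this $L$ you need neither lookahead nor an appeal to minimisation. The obvious \DBW records whether the extra $a$ may already have been read and the parity inside the $ab$-blocks, and adds a final $c$-loop and a sink. Its only non-sink cycles read powers of $ab$, $ba$ or $c$, so it is visibly counter-free. This route only gives existence, however. The explicit formula is what shows how past operators take over the role of $\E$ in Boja\'nczyk's example, and that is the purpose of the proposition.
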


\begin{proof}
    The formula that defines the $\{a,b,c\}$-tree language $\triangle[(ab)^*a(ab)^*c^\omega]$ is as follows, 
    where $\PP \psi$ is a shorthand for $\top\s\psi$:
     \[
    \begin{array}[b]{c}
        a \land \A\F c \land \A\G (\neg(b \land \Y b) \land
        \neg (a \land \Y a \land \Y\PP(a \land \Y a)) \land (\Y c \rightarrow c) \land \\
        ((c \land \Y\neg c) \rightarrow \Y((a \land \neg \PP (a \land \Y a)) \lor (b \land 
        \PP(a \land \Y a)))
    \end{array} \qedhere
    \]
\end{proof}

In the formula above, past 
operators 
are 
necessary to capture the complex 
alternation of $a$, $b$, and $c$ without employing the existential 
path quantifier $\E$ while 
respecting the syntactic 
constraints of $\CTL$.
This unveils an unexpected power of past operators in the context of tree temporal logics. We finally conclude with our main decidability results.

\memprb*

\begin{proof}
  By Theorem \ref{main theorem}, given an \LTL formula $\varphi$, it suffices to
  decide if $\Lang(\varphi)$ is recognizable by a \DBW (the counter-free
  requirement is trivially fulfilled, since 
  any \LTL definable language must be recognizable by a counter-free
  automaton).
  By \cite[Theorem 4.1, Theorem 5.1] {KV05}, deciding \LTL
  $\mapsto$ \DBW is in EXPSPACE and PSPACE-hard.

  For \PCTL $\mapsto$ \LTL, we can always rewrite a \PCTL formula into \obbligationCTLs by Lemma \ref{PCTL as CTLs fragment}, and then apply the algorithm from \cite{CD88} to decide if there is an equivalent \LTL formula.
\end{proof}

The first statement of the above theorem implies the following. 

\ltlctl*

Thanks to the above corollary, the 40-year-old open problem can be attacked along a new direction: determine which \PCTL formulae are equivalent to some \CTL formula. This is a promising new approach, since some subcases are already solved (see, e.g., \cite{laroussinie1995hierarchy}). To conclude, we can settle that the common fragment \LTL $\cap$ \PCTL is decidable.

\begin{restatable}{corollary}{msocmn} \label{cor: msocmn}
The membership problem \MSO $\mapsto$ \LTL $\cap$ \PCTL is decidable. 
\end{restatable}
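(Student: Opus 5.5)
Given an \MSO sentence $\varphi$, equivalently a parity tree automaton $\mathcal{A}$ obtained by Rabin's translation, we must decide whether $\Lang(\varphi)$ is definable both in \LTL (under the universal-path semantics) and in \PCTL. The plan has two stages: first decide whether $\Lang(\varphi)\in\LTL$, and if so compute an \LTL formula $\psi$ with $\Lang(\varphi)=\triangle[\Lang(\psi)]$; then apply the first statement of Theorem~\ref{thm:memprb} to $\psi$.

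\textbf{Stage 1: deciding $\MSO \mapsto \LTL$.} A tree language defined by an \LTL formula has the form $\triangle[L]$, so it is determined by $L$. We extract the candidate word language $L_\varphi$, namely the set of words $w$ such that every tree all of whose root paths lie in the set of $\omega$-words labelling root paths of some model of $\varphi$\dots This is awkward. A cleaner choice is the \emph{path closure}: let $L_\varphi$ be the set of $\omega$-words labelling some root path of some tree in $\Lang(\varphi)$. Then $\Lang(\varphi)=\triangle[L]$ for some $L$ if and only if $\Lang(\varphi)=\triangle[L_\varphi]$. Indeed, if $\Lang(\varphi)=\triangle[L]$, then $L_\varphi\subseteq L$; conversely, every $w\in L$ is realised by the tree whose every root path reads $w$, so $L\subseteq L_\varphi$ (the case $\Lang(\varphi)=\emptyset$ is handled separately).

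The language $L_\varphi$ is $\omega$-regular and effectively computable: guess a tree together with a distinguished path, which is an \MSO-definable projection. An \NBW for $L_\varphi$ can therefore be built from $\mathcal{A}$. From it we construct an automaton for $\triangle[L_\varphi]$ via a deterministic parity automaton, and check equivalence with $\Lang(\varphi)$ using decidability of \MSO over binary trees. Next we decide whether $L_\varphi$ is star-free, i.e. \LTL-definable, which is decidable via aperiodicity of its syntactic monoid. If it is, we effectively obtain an \LTL formula $\psi$ with $\Lang(\psi)=L_\varphi$. Otherwise we answer ``no''.

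\textbf{Stage 2.} With $\psi$ in hand, the first statement of Theorem~\ref{thm:memprb} (equivalently, Theorem~\ref{main theorem} together with decidability of \DBW-recognisability of $\Lang(\psi)$) decides whether $\triangle[\Lang(\psi)]$ is \PCTL-definable. The answer is positive exactly when $\Lang(\varphi)\in\LTL\cap\PCTL$.

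The main obstacle is Stage 1, specifically justifying that the canonical candidate $L_\varphi$ suffices. This rests on the fact that $\triangle[\cdot]$ is injective on nonempty word languages via the constant-path trees, together with the effectivity of the equivalence test. Everything else is a routine composition of known decision procedures.
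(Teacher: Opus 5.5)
Your proposal is correct and follows essentially the paper's route: test whether $\Lang(\varphi)$ has the form $\triangle[L]$, and if so decide \PCTL-definability of $\triangle[L]$ through \DBW-recognisability of the (star-free) word language $L$. The differences are only presentational. You make the $\triangle$-form test self-contained via the path closure $L_\varphi$ and constant-path trees, where the paper simply cites Boja\'nczyk's Theorem~8, and you separate the star-freeness test from the \DBW test, which the paper folds into a single check for a \DBWcf.
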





%




\section{No known fragment of \WMSO is equivalent to \PCTL} \label{no current}

In this section we will show that 
no known fragment of \WMSO is equivalent to \PCTL. In other words, to understand \PCTL expressive power with respect to \WMSO, one needs to define a new fragment of \WMSO. Indeed, given Theorem \ref{thm:rabthm}, one could wonder what is the exact fragment of \WMSO equivalent to \MPL $\cap$ \WMSO. We have shown that, assuming Conjecture \ref{con:comfrg} to be true, \MPL $\cap$ \WMSO is \PCTL. Nevertheless, it is a natural question to ask for a fragment of \WMSO, rather than for a temporal logic. In this section we initiate this work, by isolating the lower and upper bounds of \PCTL expressive power with regard to \WMSO fragments, to pave the way for future work on this matter.

To the best of our knowledge, there are three defined semantic fragments of 
\WMSO: \FO (which is equivalent to \WMPL, as 
already argued), \emph{Weak Chain Logic} (\WCL) 
\cite{Tho84} and \emph{Weak Tree Logic} (\WMTL) 
\cite{BBMP23}. While \FO and \WMTL have already been defined, we briefly recall the definition of \WCL. In \WCL, second order quantification is restricted to \emph{finite chains}, that is, finite sets of nodes that are pairwise comparable under the transitive closure of the tree's edge relation.

First, while it is already established that \FO is strictly less expressive than \PCTL \cite{BDMMP25}, this fact also emerges as a direct consequence of Lemma \ref{T expressible in HWT} and Theorem \ref{thm:hwtcf&pctl}. Furthermore, \WCL is known to be incomparable with \MPL \cite{BL13}. Because \MPL strictly subsumes \PCTL, this incomparability necessarily implies that \PCTL cannot be equivalent to \WCL. 
Thus, \WMTL remains the only reasonable fragment of \WMSO to investigate. Recall that in \WMTL set quantification ranges over sets of nodes that are finite subtrees. Although in \cite{BBMP23} the authors explored the expressive power of this logic, the precise relationship between \WMTL and \MPL in terms of expressiveness remained an open problem. Here, we show that these two logics are incomparable, implying also that \PCTL is strictly less expressive than \WMTL. The following is obtained by an easy translation by structural induction on \PCTL formulae. 

\begin{restatable}{lemma}{pctlwmtl} \label{PCTL and WMTL}
    \PCTL is at most as expressive as \WMTL.
\end{restatable}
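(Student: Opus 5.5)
We prove the statement by a translation of \PCTL into \WMTL that is defined by structural induction. To fix the target of the translation, we first replace the input formula by an equivalent one in the normal form of Lemma~\ref{PCTL normal form}. Every state formula $\varphi$ will then be translated into a \WMTL formula $\widehat{\varphi}(x)$ with one free first-order variable, holding exactly at the nodes $u$ with $(t,u)\models\varphi$. The key observation is that past path formulae are evaluated only on the finite prefix of the current path, that is, on the segment between the node where the enclosing quantifier was evaluated and the current node. In a tree this segment is determined by its endpoints. Consequently, a past path formula $\psi$ occurring under a quantifier evaluated at $x$ becomes a first-order formula $\widehat{\psi}(x,y)$ with $x\le y$, stating that $\psi$ holds at position $y$ of any path from $x$ passing through $y$. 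For instance, $\widehat{\Y\psi}(x,y)$ is $x<y\land\exists z(\mathrm{child}(y,z)\land\widehat{\psi}(x,z))$, using the child relation as in the proof of Lemma~\ref{MPL, WMSO but not FO} with the parent as first argument. The translations of $\tilde{\Y}$ and $\s$ are analogous, with quantification over $z$ restricted to $x\le z\le y$. The base cases and the Boolean connectives are immediate.

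The future modalities are then handled case by case. The operators $\E\F\psi$ and $\A\G\psi$ are first-order: we take $\exists y(x\le y\land\widehat{\psi}(x,y))$ and $\forall y(x\le y\to\widehat{\psi}(x,y))$, respectively. The operator $\A\F\psi$ states that every infinite path from $x$ hits a node where $\psi$ holds. Adapting the \WMSO formula from the proof of Lemma~\ref{MPL, WMSO but not FO}, we express it as the existence of a \emph{finite subtree} $X$ rooted at $x$ with the following property: every node of $X$ either satisfies $\widehat{\psi}(x,\cdot)$, or has both of its children in $X$. Finiteness of $X$ is built into the semantics of \WMTL, and a set witnessing the closure condition is a subtree by construction. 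Correctness follows from König's lemma. If every path from $x$ meets a $\psi$-node, then the set of nodes strictly above the first such node on each path, together with these frontier nodes, is a finitely branching tree without infinite paths, hence finite. Conversely, an infinite path from $x$ that avoids $\psi$-nodes must stay inside $X$ forever, which contradicts the finiteness of $X$. The operator $\E\G\psi$ is the negation of $\A\F$ applied to the negation of $\psi$. Since \WMTL is closed under negation, and past formulae in negation normal form can be dualised (using the duality between $\Y$ and $\tilde{\Y}$, and between $\s$ and its weak dual, expressed at the first-order level), it suffices to negate the translation of $\A\F\neg\psi$.

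The main obstacle is the treatment of the \emph{anchor} of the past operators. In \PCTL semantics the path starts at the node where the quantifier is evaluated, so past operators cannot look above that node. The translation must therefore keep the parameter $x$ threaded through $\widehat{\psi}(x,y)$. This creates no difficulty for nested state formulae, which are re-anchored at their own evaluation point, so the state formulae occurring inside $\psi$ are translated as $\widehat{\varphi}(z)$. We also have to check that the order $\le$ suffices to define both the child relation and the interval $[x,y]$; both are first-order definable, as shown in the proof of Lemma~\ref{MPL, WMSO but not FO}. With these points in place, the induction goes through. The lemma follows by taking $\exists x(\mathrm{root}(x)\land\widehat{\varphi}(x))$, where $\mathrm{root}(x)$ is $\forall y\,(x\le y)$.
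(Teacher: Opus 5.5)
Your proof is correct, and its core is the same as the paper's. Everything is first-order except the universal eventuality ($\A\F$ for you, $\A\varphi_1\U\varphi_2$ in the paper). That operator is witnessed by a finite subtree rooted at $x$ in which every node either satisfies the target or has both children in the set. The routes differ in two respects.

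\emph{The normal form.} The paper does not use Lemma~\ref{PCTL normal form}. It translates $\E\X$, $\E\U$ and $\A\U$ directly and obtains the remaining operators by negation, which \WMTL provides for free. Your detour is harmless but unnecessary, and it imports the separation-theorem machinery behind Proposition~\ref{safeLTL with past is equal to safeLTL}. The same technique applied directly to $\E\psi_1\U\psi_2$ and $\A\psi_1\U\psi_2$, with $\R$ obtained via negation, would suffice. For the same reason you never need to dualise past formulae in negation normal form: $\neg\widehat{\psi}(x,y)$ is already a formula.

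\emph{The treatment of past operators.} The paper switches to a syntax in which $\Y$ and $\s$ are state operators, translated through the actual ancestors of $x$ in the tree. The past can then reach above the node where the enclosing quantifier is evaluated, and the paper merely asserts that this syntax is equivalent to the official one. In the semantics the paper actually fixes, the path of $\E\psi$ starts at $u$; this is also what makes $\E\X\alpha\equiv\E\F(\Y\top\land\neg\Y\Y\top\land\alpha)$ valid in the proof of Lemma~\ref{PCTL normal form}. Under that semantics, your threading of the anchor $x$ through $\widehat{\psi}(x,y)$ is the faithful translation, so your version does not depend on the unproved equivalence. You also supply the K\"onig's lemma argument that the paper leaves as ``easy to check''.

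One slip: in Lemma~\ref{MPL, WMSO but not FO}, $\mathrm{child}(z,y)$ states that $z$ is a child of $y$, so the \emph{child} is the first argument. With that convention your $\widehat{\Y\psi}(x,y)$ correctly selects the parent of $y$. Under the ``parent as first argument'' convention you announce, $\mathrm{child}(y,z)$ would instead select a child of $y$.
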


Since there is no defined intermediate \WMSO fragment between \FO and \WMTL, one could believe \PCTL to be equivalent to \WMTL. However, the following theorem shows this is not the case, while also closing the open question by \cite{BBMP23}.

\wmtlmpl*

\begin{proof}[Proof sketch]

    The difficult direction is \WMTL $\not\subseteq$ \MPL. Consider the alphabet $2^{\{p,q\}}$. The separating language will be $U =$ ``there is a finite subtree containing the root such that: its leaves are labeled $\{p,q\}$, its internal nodes labeled $\{q\}$ have at least one child and its internal nodes not labeled $\{q\}$ have exactly two children''. This language is definable in \WMTL:
    \[
        \exists X. root \in X \land \forall z \in X \rightarrow \land
        \begin{cases}
           (\neg p(z) \rightarrow \exists y(child(z,y) \land y \in X)) \\
          (\neg q(z) \rightarrow \forall y(child(z,y) \rightarrow y \in X))
        \end{cases}
    \]
    Our claim is that this language is not \MPL definable. We show it exploiting two families of trees.
    First, let a \emph{$\Sigma$-multicontext} be a finite or infinite $\Sigma \cup \{*_1, ..., *_k\}$-tree, for $k \in \mathbb{N}$, in which every internal node is labeled by a symbol of $\Sigma$ and every leaf (if any) is labeled by $*_i$, for some $i \in [1,k]$. Consider the following $2^{\{p,q\}}$-multicontext $c$: 
   
  \begin{figure}[h!]
  \centering
  \begin{forest}
    for tree={
      draw,
      circle,         
      minimum size=1.5em, 
      inner sep=1pt,      
      l=2em,              
      s sep=1em,          
      font=\small         
    },
    where n children=0{shape=rectangle}{}
    [$\{q\}$
      [$\{p\}$ 
        [$\ast_1$] 
        [$\ast_2$] 
      ]
      [$\{p\}$ 
        [$\ast_3$] 
        [$\ast_4$] 
      ]
    ]
  \end{forest}
\end{figure}
Let $l$ (resp., $r$) be the $2^{\{p,q\}}$-tree whose root is labeled by $\{q\}$ (resp., $\{p,q\}$) and all its other nodes are labeled by $\emptyset$. We define $t_1$ to be $c$ in which the leaf labeled $*_1$ is substituted by $l$ and leaves labeled $*_2, *_3$ and $*_4$ are substituted by $r$, while $t'_1$ is defined as $c$ in which leaves labeled $*_1,*_3$ are substituted by $l$ while nodes labeled $*_2, *_4$ are substituted by $r$. Then, $t_1 \in U$ and $t'_1 \notin U$. We then define infinite such trees as follows: for every $i \ge 2$, $t_i$ is $c$ in which the leaf labeled $*_1$ is substituted by $t'_{i-1}$ and leaves labeled $*_2, *_3$ and $*_4$ are substituted by $t_{i-1}$, while $t'_i$ is $c$ in which leaves labeled $*_1$ and $*_3$ are substituted by $t'_{i-1}$ while leaves labeled $*_2$ and $*_4$ are substituted by $t_{i-1}$. Again, for every $i \ge 2$, $t_i \in U$ and $t'_i \notin U$. 
We show by induction on \CTLs formulae that, for every $n \in \mathbb{N}$, \CTLs formulae of size at most $n$, where their size is defined as the length of the strings for representing the formulae, cannot distinguish $t_n$ and $t'_n$. This implies that \CTLs cannot define $U$, yielding the theorem.
\end{proof}

This theorem shows that the branching power of \WMTL is in a strong sense more expressive than just path quantification. 
Our result also shows that \WMTL $\neq$ \MPL $\cap$ \WMSO, further solidifying our conjecture. Indeed, we can now see that if our conjecture is false, there would be not one but even \emph{two} ``meaningful'' logics strictly in between \FO and \WMTL, which are already fairly close formalisms with regard to expressiveness. This seems unlikely, since it is arguably already surprising the existence of one such a logic.


\section{Conclusions}

We have proved a Rabin-style theorem for the \MSO fragment called \MPL and the temporal logic \CTLs (Theorem
\ref{thm:rabthm}), assuming a conjecture (Conjecture \ref{con:comfrg})
regarding the expressive power of the class of \HWTcf. We
have shown that this class of automata is equivalent to the temporal logic \PCTL
(Theorem \ref{thm:hwtcf&pctl}) and to the syntactic fragment of \CTLs called
\obbligationCTLs (Lemma \ref{PCTL as CTLs fragment}). Moreover, we have shown that
the membership problems \LTL $\mapsto$ \PCTL and \PCTL $\mapsto$ \LTL are
decidable (Theorem \ref{thm:memprb}), as well as \MSO $\mapsto$ \LTL $\cap$ \PCTL (Corollary \ref{cor: msocmn}),
implying that the almost 40-year-old open question regarding decidability of
\LTL $\mapsto$ \CTL can be reduced to the arguably easier problem of showing
decidability of \PCTL $\mapsto$ \CTL (Corollary \ref{cor:ltl2ctl}). Given
the progress already made in, e.g., \cite{laroussinie1995hierarchy}, this seems
promising. Finally, we have shown that the expressive power of \PCTL is strictly in
between that of \FO and \WMTL (\cite{BDMMP25}
and Theorem \ref{thm:wmtl2mpl}), implying that no known \WMSO fragment is
equivalent to \PCTL, leaving the definition of such a fragment as a natural open question. 
The obvious final open question regards the truth of
Conjecture~\ref{con:comfrg}.

\bibliography{references,referencesFM}
\appendix

\section{Further preliminaries}

\subparagraph{Temporal logics.}

\begin{definition} [Full \PCTLs semantics]
    The satisfaction relations $\models_s$ and $\models_p$, respectively between pointed trees and
state formulae and between path-pointed trees and path formulae, are defined by a mutual induction as follows:
\begin{itemize}
\item $(t,u) \models_s p$ iff $p \in t(u)$,
\item $(t,u) \models_s \neg p$ iff $p \notin t(u)$,
\item $(t,u) \models_s \neg\varphi$ iff $(t, u) \nvDash_s \varphi$,
\item $(t,u) \models_s \varphi_1 \lor \varphi_2$ iff $(t,u) \models_s \varphi_1$ or $(t,u) \models_s \varphi_2$,
\item $(t,u) \models_s \varphi_1 \land \varphi_2$ iff $(t,u) \models_s \varphi_1$ and $(t,u) \models_s \varphi_2$,
\item $(t,u) \models_s \E \psi$ iff there exist an infinite path $\pi$ starting at $u$ such that $(t,\pi,0) \models_p \psi$,
\item $(t,u) \models_s \A \psi$ iff for every infinite path $\pi$ starting at $u$, $(t,\pi,0) \models_p \psi$
\item $(t,\pi,i) \models_p \varphi$ iff $(t,\pi(i)) \models_s \varphi$,
\item $(t,\pi, i) \models_p \neg\psi$ iff $(t, \pi, i) \nvDash_p \psi$,
\item $(t,\pi, i) \models_p \psi_1 \lor \psi_2$ iff $(t,\pi, i) \models_p \psi_1$ or $(t,\pi, i) \models_p \psi_2$,
\item $(t,\pi, i) \models_p \psi_1 \land \psi_2$ iff $(t,\pi, i) \models_p \psi_1$ and $(t,\pi, i) \models_p \psi_2$,
\item $(t,\pi,i) \models_p \X \psi$ iff $(t,\pi,i+1) \models_p \psi$,
\item $(t,\pi,i) \models_p \Y \psi$ iff $i > 0$ and $(t,\pi,i-1) \models_p \psi$,
\item $(t,\pi,i) \models_p \tilde{\Y} \psi$ iff $i = 0$ or $(t,\pi,i-1) \models_p \psi$,
\item $(t,\pi,i) \models_p \F \psi$ iff there is $j \ge i$ such that $(t,\pi,j) \models_p \psi$,
\item $(t,\pi,i) \models_p \G \psi$ iff for every $j \ge i$, $(t, \pi, j) \models \psi$,
\item $(t,\pi,i) \models_p \psi_1 \U \psi_2$ iff there is $j \ge i$ such that $(t,\pi,j) \models_p \psi_2$ and for all $k=i,\dots,j-1$, $(t,\pi,k) \models_p \psi_1$,
\item $(t,\pi,i) \models_p \psi_1 \R \psi_2$ iff $(t,\pi,j) \models_p \psi_2$ holds for all $j \geq i$, or there is there is $j \ge i$ such that $(t,\pi,j) \models_p \psi_1 \land \psi_2$ and for all $k=i,\dots,j-1$, $(t,\pi,k) \models_p \psi_2$,
\item $(t,\pi,i) \models_p \psi_1 \s \psi_2$ iff there is $j \le i$ such that $(t,\pi,j) \models_p \psi_2$ and for all $k=j+1,\dots,i$, $(t,\pi,k) \models_p \psi_1$.
\end{itemize}
\end{definition}

We will use $\models$, since there is no possibilty of confusion.
\subparagraph{Linear time formalisms.}
By seeing $\omega$-words over $\Sigma$ as degenerate cases of $\Sigma$-trees, where every node has exactly
one successor, one can interpret \MSO sentences and $\PCTLs$ formulae directly over $\omega$-words.  Clearly, path quantifiers become redundant, and one can consider a logic consisting only of path formulae.
However, because in an $\omega$-word there is only one maximal path, the path quantifiers are of no use, and similarly for the Boolean connectives appearing in the grammar of a state formula.
In particular, one can define a fragment of $\PCTLs$ where the only state formulae allowed are the atomic propositions,
giving rise to a simpler grammar for the so-called \emph{Linear Temporal Logic with Past} (\PLTL for short):
\[
\psi ~::=~
  p
  ~\bigm|~
  \neg p
  ~\bigm|~
  \psi \lor \psi
  ~\bigm|~
  \psi \land \psi
  ~\bigm|~
  \X \psi
  ~\bigm|~
  \Y \psi
  ~\bigm|~
  \tilde{\Y} \psi
  ~\bigm|~
  \psi \U \psi
  ~\bigm|~
  \psi \R \psi
  ~\bigm|~
  \psi \s \psi
\]
We consider the following fragments of \PLTL:
\begin{itemize}
    \item \cosafeLTL (resp., \safeLTL) is \PLTL in which past operators and the future operator $\R$ (resp., $\U)$ are disallowed;
    \item \cosafePLTL (resp., \safePLTL) consists of \PLTL formulae of the form $\F\alpha$ (resp., $\G\alpha$), where $\alpha$ contains only past operators.
\end{itemize}
The pure future fragment of $\PLTL$, denoted $\LTL$, is obtained by further disallowing the past operators
$\tilde{\Y}, \Y$ and $\s$. The semantics of both $\PLTL$ and $\LTL$ is inherited from $\PCTLs$, and the definitions of equivalence and expressiveness are as before, with the only difference that in this setting
the models are restricted to be $\omega$-words.

\PsafeLTL*

\begin{proof}
We prove the first statement characterizing cosafety properties; the characterization of safety properties follows
similar arguments, with the only difference that one manipulates the operator $\R$ instead of $\U$.
Since $\cosafeLTL$ is clearly included in
the fragment of $\PLTL$ that consists of formulae in negation normal form and that forbids the use of $\R$, the only interesting direction is the translation from the latter fragment to the former one.
%
%
For this direction, it suffices to apply Gabbay's Separation Theorem \cite[Chapter 10]{gabbay2000temporal} (in particular, \cite[Lemma 10.2.3]{gabbay2000temporal}),
which allows one to transform any given \PLTL formula into a
\emph{strongly equivalent} Boolean combination of pure past and pure future formulae.
By ``strong equivalence'' here we mean full state equivalence, namely, equivalence w.r.t. every structure and every position in it.
A close inspection to the proof of the Separation Theorem shows that the transformation
is performed inductively on the syntax of the input formula. Moreover, by design, the
translation rules mantain the following invariant: if the input formula is in negation normal form
and does not use $\R$, then the translated formula is also in negation normal form and does not use
$\R$ either.
As a consequence, from an input \PLTL formula $\psi$ in negation normal form that does not use $\R$ we obtain a strongly equivalent
formula $\psi'$ in negation normal form, that does not use $\R$, and that is a positive Boolean combination of pure future and pure past formulae.
Also recall that a pure past formula of the form $\Y \psi$, evaluated at the
initial position of a structure, evaluates to false; instead, a pure past
formula of the form $\psi_1 \s \psi_2$, evaluated at the initial position of a
structure, evaluates to $\psi_2$.
So, by subtituting in $\psi'$ all pure past formulae of the form  $\Y \psi$ (resp., $\psi_1 \s \psi_2$) with $\bot$ (resp., $\psi_2$), with $\bot=p \land \neg p$, one obtains
a formula $\psi''$ of $\cosafeLTL$ that is initially equivalent to $\psi$. Hence, by Theorem \ref{safeLTL equivalent to safePLTL}, it follows that there is a \cosafePLTL formula equivalent to $\psi''$.
\end{proof}

\subparagraph{Word automata}
A \emph{non-deterministic B\"uchi word automaton} is a tuple
$\mathcal{A} = (Q, \Sigma, \delta, q_I, F)$,
where $Q$ is a finite set of states,
$\Sigma$ is a finite alphabet,
$\delta:  Q\times\Sigma\rightarrow 2^Q$
is a transition function,
$q_I\in Q$ is an initial state, and
$F \subseteq Q$ is a set of \emph{accepting states}.
A non-deterministic B\"uchi word automaton is \emph{deterministic} if
$\delta(q,a)$ is a singleton for every $q\in Q$ and $a\in\Sigma$.

The definition of a run $\rho$ is as usual. The B\"uchi condition states that a run $\rho$ is accepting if it visits \emph{infinitely often} states in $F$. The dual \emph{coB\"uchi} condition instead states that a run $\rho$ is accepting if it visits \emph{finitely often} states in $F$. The \emph{weak} condition was defined in Subsection \ref{Weak condition} for tree automata, and has an equivalent definition for word automata. 

In the case of non-deterministic automata, an infinite word $w$ is \emph{accepted} by $\mathcal{A}$ if there is an accepting run
of $\mathcal{A}$ on $w$. We also consider \emph{universal non-determinism} and call \emph{universal} an automaton employing it. Given a universal automaton $\mathcal{A}$, we say that an infinite word $w$ is \emph{accepted} by $\mathcal{A}$ if every run of $\mathcal{A}$ on $w$ is accepting. 

An automaton $\mathcal{A}$ is \emph{looping} if it is a tuple $\langle Q \uplus \{q_{sink}\}, \Sigma, \delta, q_I, Q \rangle$, such that $\delta(q_{sink}, \sigma) = q_{sink}$, for every $\sigma \in \Sigma$. Notice that the definition holds both for B\"uchi and coB\"uchi automata.

\begin{proposition} \cite[Lemma 11]{BDMMP25} \cite[Theorem 1]{alpern1987recognizing}  \label{looping automata and results}
    Looping \NBWcf, looping \UBWcf, \safeLTL and \safePLTL are equivalent formalisms. The same holds for looping \NCWcf, looping \UCWcf, \cosafeLTL and \cosafePLTL.
\end{proposition}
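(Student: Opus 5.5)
The plan is to show that all four formalisms in each group capture exactly the \emph{star-free safety} (resp.\ \emph{star-free co-safety}) $\omega$-word languages. By Theorem~\ref{safeLTL equivalent to safePLTL}, \safeLTL and \safePLTL already coincide, so it suffices to relate looping \NBWcf and looping \UBWcf to \safePLTL. The co-safety statement should then follow by duality.

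\emph{From automata to \safePLTL.} Let $\mathcal{A}$ be a looping \NBWcf. Its only non-final state is $q_{sink}$, which is absorbing. Since $\mathcal{A}$ is finitely branching, König's lemma gives that $w$ is accepted iff every finite prefix of $w$ has a run avoiding $q_{sink}$.
\begin{itemize}
\item Let $P_{\mathcal{A}}$ be the set of such good prefixes. It is recognised by $\mathcal{A}$ read as a nondeterministic finite-word automaton with final states $Q$.
\item Counter-freeness carries over to this finite-word automaton, since the transition graph is unchanged. I will then argue that the language of a counter-free nondeterministic automaton is aperiodic; I would take this from \cite{BDMMP25}, where it is used in the same setting.
\item Hence $P_{\mathcal{A}}$ is star-free. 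By Kamp's theorem read over finite prefixes, there is a pure-past formula $\alpha$ that holds at position $i$ iff the prefix up to $i$ lies in $P_{\mathcal{A}}$.
\item Therefore $\G\alpha$ defines $\Lang(\mathcal{A})$.
\end{itemize}
For a looping \UBWcf the argument is the same, except that acceptance means no run ever reaches $q_{sink}$. The relevant prefix set is now the set of bad prefixes, i.e.\ those having some run into $q_{sink}$. This set is again recognised by a counter-free nondeterministic automaton, so it is star-free, and the defining formula is $\G\neg\beta$ with $\beta$ pure past.

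\emph{From \safePLTL to automata.} Take $\G\alpha$ with $\alpha$ pure past. The set of finite words whose last position satisfies $\alpha$ is star-free, as is its prefix-closure complement. So the minimal DFA for the good prefixes is counter-free by Schützenberger and McNaughton--Papert. Next I redirect all transitions that leave the good prefixes into a single sink $q_{sink}$. This adds only one absorbing state, so no new cycles arise and counter-freeness is preserved. The resulting deterministic looping automaton is simultaneously a looping \NBWcf and a looping \UBWcf, since existential and universal branching agree on a deterministic automaton.

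\emph{Co-safety case and main obstacle.} For the co-safety half I dualise. A looping \NCWcf accepts iff some run eventually enters $q_{sink}$. This is exactly the complement of the looping \UBWcf with the same transition structure, which has the same cycles and hence is still counter-free. Likewise, $\neg\G\alpha\equiv\F\neg\alpha$ turns \safePLTL into \cosafePLTL and vice versa, so the co-safety statement follows from the safety one. The step I expect to be the real obstacle is the claim that counter-freeness of a \emph{nondeterministic} automaton, in the diagonal sense used in the paper, implies aperiodicity of the recognised language. Concretely, one must show that the Boolean transition matrices $M_w$ satisfy $M_w^{n}=M_w^{n+1}$ for large $n$, and not merely that their diagonals stabilise. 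My first attempt would be to argue path-wise: a path on $w^{n+1}$ with $n\ge |Q|$ must repeat a state at some $w$-boundary, which gives a loop on $w^k$. Counter-freeness replaces that loop by one on $w$, and then I would shorten or lengthen the path by one copy of $w$.
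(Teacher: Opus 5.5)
Your proof is correct, but it follows a different route from the paper. The paper gives no argument of its own for this proposition: it simply defers to \cite[Lemma~11]{BDMMP25} and \cite[Theorem~1]{alpern1987recognizing}, i.e.\ to existing results on counter-free looping automata and on automata recognising safety properties. You instead give a self-contained algebraic argument. Its key lemma is that diagonal counter-freeness of a nondeterministic automaton with $N$ states forces $M_w^{n}=M_w^{n+1}$ for all $n\ge N$.

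Your pigeonhole sketch proves this lemma, provided the shortening step is done carefully. Suppose the $w$-boundary states of a path on $w^{n+1}$ satisfy $q_i=q_j$ with $i<j$. Counter-freeness then gives a path on $w$ from $q_i$ to itself. To shorten by exactly one copy of $w$, replace the whole loop on $w^{j-i}$ with $j-i-1$ traversals of that $w$-loop, not with a single one. Lengthening is just inserting one more traversal.

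From there, the good prefixes (for \NBWcf, via K\"onig's lemma) and the bad prefixes (for \UBWcf) are star-free. The past version of Kamp's theorem then puts you directly in the normal form $\G\alpha$. So the only external temporal-logic ingredient you need is Theorem~\ref{safeLTL equivalent to safePLTL}.

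The converse direction is also fine, and simpler than you make it. The good-prefix language of $\G\alpha$ is prefix-closed, so its minimal DFA already has at most one rejecting state, and that state is absorbing; no redirection is needed.

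What your route buys is twofold. The step ``a counter-free NFA recognises an aperiodic language'', which the paper leaves inside the citations, becomes explicit. You also get the extra fact that every language in the class is recognised by a \emph{deterministic} looping counter-free automaton, which explains at once why the existential and universal variants coincide. The citation route is shorter and, through Alpern and Schneider, ties looping acceptance to the semantic notion of safety.

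For completeness, also spell out the second duality you use implicitly. A looping \UCWcf recognises the complement of the looping \NBWcf with the same transition structure, just as a looping \NCWcf does for the looping \UBWcf.
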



\subparagraph{Membership problems.}

\rabin*

\begin{remark}
    Rabin's Theorem actually states that an regular $\omega$-tree language $T$ is \WMSO definable iff both $T$ and the complement of $T$ are recognized by a \emph{non-deterministic B\"uchi tree automaton} (\NBT). Hence, decidability of \APT $\mapsto$ \NBT implies decidability of \MSO $\mapsto$ \WMSO. Nevertheless, by \cite[Theorem 1.1]{MSS92}, $T$ and its complement are recognized by an \NBT iff $T$ is recognized by an \AWT. Combining the two results one gets the form of Rabin's theorem we stated. The \Lmu final part follows from \cite{JW96} and \cite{AN92}.
\end{remark}


\section{Proofs from Section \ref{automaton and conjecture}}

\begin{proposition}
    \FO and \WMPL are equivalent formalisms.
\end{proposition}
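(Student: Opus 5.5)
We prove the two inclusions separately, both by translations that proceed by structural induction on formulae.

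\emph{From \FO to \WMPL.} Every \FO formula is, syntactically, an \MSO formula without monadic variables. Hence it is already a \WMPL formula, and its semantics does not change when we restrict the range of the (absent) set quantifiers. This direction is immediate.

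\emph{From \WMPL to \FO.} The main observation is that a finite path $v_1\dots v_n$ in a tree is completely determined by its endpoints $x=v_1$ and $y=v_n$. These must satisfy $x\le y$, and the path is exactly the set $\{z \mid x\le z\le y\}$, since ancestors of a node are totally ordered by $\le$. We also treat the empty set explicitly, so that quantification over finite paths, possibly empty, is covered.

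We replace each monadic variable $X$ by a pair of fresh first-order variables $x_X, y_X$ together with a Boolean flag marking emptiness. In practice we split each quantifier $\exists X.\varphi$ into two disjuncts:
\begin{itemize}
\item $\varphi[X:=\emptyset]$, in which every atom $z\in X$ becomes $\bot$;
\item $\exists x_X\exists y_X.\,(x_X\le y_X \land \varphi')$, in which every atom $z\in X$ becomes $x_X\le z\land z\le y_X$.
\end{itemize}
The remaining atoms ($x=x'$, $x\le x'$, $p(x)$) and the connectives $\neg,\lor,\exists x$ are translated homomorphically.

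Correctness follows by induction on the formula, using the invariant that an assignment of a nonempty finite path $P$ to $X$ corresponds to the assignment of its minimal and maximal elements to $(x_X,y_X)$. This correspondence is a bijection between nonempty finite paths and pairs $x\le y$, so the existential quantifiers match in both directions.

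The only delicate point is checking that the interval $\{z\mid x\le z\le y\}$ is indeed a path in the sense of the preliminaries, i.e.\ consecutive nodes are linked by $E$. This holds because the unique parent condition makes the ancestors of $y$ between $x$ and $y$ form the edge-connected chain from $x$ down to $y$. We must also fix whether paths of length one and the empty path are admitted as values of monadic variables; the two-case split above handles either convention.
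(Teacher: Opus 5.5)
Your proof is correct and uses the same idea as the paper: replace each finite-path variable by first-order variables for its endpoints, and turn $z\in X$ into an order constraint. The paper uses only the lower endpoint, reading $z\in X$ as $z\le x$, which tacitly assumes paths are nonempty and start at the root; your two-endpoint encoding $x_X\le z\le y_X$ together with the separate empty-set disjunct also covers nonempty paths starting at arbitrary nodes, which the preliminaries explicitly allow, so your version is the more faithful one.
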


\begin{proof}
   Given a \WMPL formula $\psi$, one can obtain an equivalent \FO formula $\psi'$ by applying the following translation inductively. Since \WMPL restricts set quantification to finite paths, and every finite path starting from the root in a tree is uniquely identified by its finite endpoint $x$, we can represent the path $X$ as the set of all ancestors of $x$. Thus, we translate the quantifiers as follows: 
   \begin{itemize}
       \item first order quantification remains unchanged: $(\exists x. \varphi)' = \exists x. \varphi'$;
       \item second order quantification is replaced by first order quantifiers over the path's endpoint: $(\exists X. \varphi)' = \exists x. \varphi'$, where every atomic formula of the form $z \in X$ within $\varphi'$ is replaced by $z \leq x$.
   \end{itemize}
   By structural induction, it follows that $\psi$ and $\psi'$ are equivalent, demonstrating that \WMPL collapses into \FO over this class of structures.
\end{proof}


\begin{proposition}
    Decidability of \APT $\mapsto$ \AWT does not imply decidability of \MPL $\mapsto$ \FO.
\end{proposition}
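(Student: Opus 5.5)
The plan is to read the statement the way the paper uses it: a Rabin-style reduction (as in Theorem~\ref{Rabin's theorem}) cannot turn a decision procedure for $\APT \membership \AWT$ into one for $\MPL \membership \FO$. Strictly, ``decidability of $P$ does not imply decidability of $Q$'' would require a model of computation in which $P$ is decidable and $Q$ is not, and I do not expect to prove that. Instead I would show that the natural reduction, which sends an \MPL sentence to an automaton and asks whether it is equivalent to an \AWT, gives wrong answers for \FO-definability. The key input is Lemma~\ref{MPL, WMSO but not FO}: the language $T$ (every infinite path from the root carries $p$) is in $\MPL \cap \WMSO$ but not in \FO.

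The steps are as follows. First, recall that an \MPL sentence $\varphi$ translates effectively into an \HTcf, hence into an \APT $\mathcal{A}_\varphi$ with $\Lang(\mathcal{A}_\varphi) = \Lang(\varphi)$ \cite{BBMP24}. By \cite[Theorem 1.1]{MSS92}, $\mathcal{A}_\varphi$ is equivalent to an \AWT iff $\Lang(\varphi)$ is \WMSO-definable. So an oracle for $\APT \membership \AWT$ decides exactly $\MPL \membership \MPL\cap\WMSO$. Second, since \FO $=$ \WMPL $\subseteq \WMSO$ (by the preceding proposition), a negative oracle answer certifies non-\FO-definability. Third, take $\varphi_T = \neg \exists X \forall y (y \in X \rightarrow \neg p(y))$. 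By Lemma~\ref{MPL, WMSO but not FO} (or Lemma~\ref{T expressible in HWT}), the oracle answers ``yes'' on $\mathcal{A}_{\varphi_T}$, yet $T \notin \FO$ by \cite[Theorem 3]{BBMP23}. Hence the positive answers of the oracle do not determine \FO-membership, and the inclusion $\FO \subsetneq \MPL\cap\WMSO$ is strict.

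To make the non-implication as strong as possible, I would then argue that the only information obtainable from the oracle along this scheme is the $\AWT$/non-$\AWT$ dichotomy. Consequently $\MPL \membership \FO$ remains open even relative to that oracle, because one would still need to decide \FO-definability inside $\MPL\cap\WMSO$, which is a genuinely nontrivial class by $T$.

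The main obstacle is the formal status of ``does not imply''. A literal proof would need a relativisation argument: an oracle world where $\APT \membership \AWT$ is decidable but $\MPL \membership \FO$ is not. Both problems are concrete, presumably decidable or undecidable outright, so such a separation cannot be proved in that form. I would therefore state explicitly that the claim concerns Rabin-style reductions, and rest the proof on the strict inclusion $\FO \subsetneq \MPL \cap \WMSO$ witnessed by $T$.
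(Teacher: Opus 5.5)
Your proposal is correct and follows the paper's own argument. You take the \MPL sentence for the language $T$ of Lemma~\ref{MPL, WMSO but not FO} and convert it to an \APT; any procedure for $\APT \membership \AWT$ must then answer ``yes'', since $T$ is \WMSO-definable, yet $T$ is not \FO-definable, so a Rabin-style reduction cannot settle $\MPL \membership \FO$.

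Your explicit caveat that the statement can only mean ``Rabin-style reductions fail'' is a useful clarification that the paper leaves implicit, as is your remark that negative answers do certify non-\FO-definability. However, the claim that $\MPL \membership \FO$ ``remains open relative to the oracle'' is informal, and the proof does not need it.
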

\begin{proof}
     Take as input the \MPL formula for the language $T$ of Lemma \ref{MPL, WMSO but not FO}. Obtain an equivalent \APT, recognizing $T$ (this is always possible, since \APT is equivalent to \MSO and \MPL is just a fragment of \MSO). Recall that \AWT are equivalent in expressive power to \WMSO. Suppose to have an algorithm to decide \APT $\mapsto$ \AWT. By the proof of Lemma \ref{MPL, WMSO but not FO} and Rabin's theorem, the algorithm would output ``Yes'', but the language is not definable in \FO.
 \end{proof}

 \section{Proofs from Section \ref{PCTL = automaton}}

 \begin{proposition}
     \HWTcf are strictly more expressive than \FO.
 \end{proposition}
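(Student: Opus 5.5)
The proposition asks for two things: every \FO-definable language is recognised by some \HWTcf, and some \HWTcf-recognisable language is not \FO-definable. I will establish these separately.

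For the inclusion $\FO \subseteq \HWTcf$, I will invoke the result of \cite{BDMMP25} recalled at the beginning of Section~\ref{PCTL = automaton}. It states that \FO is expressively equivalent to a subclass of \HWTcf: a class of counter-free hesitant automata whose non-transient components have looping linearisations, with safety/co-safety path conditions. The only thing to check is that the automata of that subclass meet every clause of our definition, namely weakness, hesitancy, visibility and counter-freeness of the linearisations. I expect this to be a syntactic inspection rather than a real difficulty.

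\begin{itemize}
\item \emph{Weakness.} Looping components are either entirely accepting (safety, universal or existential looping Büchi) or entirely rejecting (co-safety). A looping word automaton, in the sense of Proposition~\ref{looping automata and results}, has all non-sink states final, with the sink as the only exit. Hence each component can be declared accepting or rejecting as a whole.
\item \emph{Hesitancy, visibility, counter-freeness.} These are imposed directly in \cite{BDMMP25}.
\end{itemize}

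As a fallback, I could argue purely logically. By \cite{BDMMP25}, \FO is at most as expressive as \PCTL, and by Theorem~\ref{thm:hwtcf&pctl} every \PCTL formula translates effectively into an \HWTcf. This route avoids any automaton inspection, and it is the one I would actually write down.

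For strictness, I take the language $T$ of Lemma~\ref{MPL, WMSO but not FO}, i.e.\ the trees in which every infinite path from the root meets a $p$-labelled node.
\begin{itemize}
\item Lemma~\ref{T expressible in HWT} shows that $T$ is recognised by the one-state \HWTcf $\mathcal{A}_T$.
\item \cite[Theorem 3]{BBMP23}, quoted in the proof of Lemma~\ref{MPL, WMSO but not FO}, shows that $T$ is not \FO-definable.
\end{itemize}
Hence $T$ separates \HWTcf from \FO.

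The only potentially delicate point is the first step, if one insists on the automaton-level comparison. The subclass in \cite{BDMMP25} might be defined with a slightly different acceptance or linearisation convention, for instance a sink state in place of $q_{\mathrm{exit}}$. I expect the conventions to match up to renaming. In any case the detour through \PCTL and Theorem~\ref{thm:hwtcf&pctl} sidesteps the issue entirely.
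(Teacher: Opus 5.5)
Your proposal is correct and matches the paper's proof. The inclusion comes from \cite{BDMMP25}, where \FO is shown equivalent to a subclass of \HWTcf, and strictness is witnessed by $T$ via Lemma~\ref{T expressible in HWT} together with the non-\FO-definability result of \cite{BBMP23}. Your preferred detour through \PCTL and Theorem~\ref{thm:hwtcf&pctl} is also sound and not circular, since it uses only the \PCTL-to-\HWTcf direction, but it is a minor variant of the same argument.
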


\begin{proof}
    In \cite[Theorem 12,14]{BDMMP25} it was shown that \FO is effectively equivalent to a subclass of \HWTcf. Together with Lemma \ref{T expressible in HWT}, this yields the Proposition.
\end{proof}

\pctlnorm*

\begin{proof}
    The proof is by structural induction on \PCTL formulae. The atomic and the boolean cases are trivial. Here we show the existential path quantifier cases (the universal ones are analogous):
    \begin{itemize}
        \item $\E\X \alpha$. It is easy to see that $\E\X \alpha$ is equivalent to the formula $\E\F(\Y\top \land \neg \Y\Y\top \land \alpha$).
 \item $\E\psi_1 \U \psi_2$. We focus on the path formula $\psi_1 \U \psi_2$.
    The intuitive idea is to see the latter formula as a \PLTL formula, by encoding
    the problematic state subformulae as fresh propositional letters, then rewrite it
    into an equivalent form $\F\eta$, and finally restore the original semantics of
    the problematic state subformulae (properly normalized according to the above grammar).

    We begin by removing the problematic state subformulae: we replace every formula
    $\varphi$ of the form $\E\alpha$ or $\A\alpha$ that appears inside $\psi_1$ or $\psi_2$ by a fresh
    atomic proposition $p_{\varphi}$.
    Call the formulae thus obtained $\psi_1'$ and $\psi_2'$,
    and note that these are pure past \PLTL formulae
    (i.e.~they do not use the operators $\X$ and $\U$). Moreover, notice that $\psi_1'\U\psi_2'$ is in negation normal form, since in \PCTL syntax we disallowed arbitrary negations.
    By \autoref{safeLTL with past is equal to safeLTL}, there is a
    \cosafeLTL formula equivalent to $\psi_1' \U \psi_2'$.
    Moreover, by Theorem \ref{safeLTL equivalent to safePLTL}, the latter
    \cosafeLTL formula is also equivalent to a formula $\F\eta$, where $\eta$ is a pure past \PLTL formula.

    Towards a conclusion, we replace inside $\F\eta$ each atomic proposition $p_{\varphi}$,
    introduced earlier, by a corresponding formula $\tilde \varphi$ that is equivalent to $\varphi$
    and that satisfies the desired grammar: this formula $\tilde \varphi$ can be obtained from $\varphi$
    by inductive hypothesis.
    In this way, we obtain a formula $\E\F\tilde\eta$ that is equivalent to $\E \psi_1 \U \psi_2$ and that
    satisfies the desired grammar.
        \item $\E\alpha \R \beta$. The proof is analogous to the case above, with the only difference that we use Proposition \ref{safeLTL equivalent to safePLTL} and Theorem \ref{safeLTL equivalent to safePLTL} to obtain a \safePLTL formula rather than a \cosafePLTL one.
     \end{itemize}
\end{proof}

\begin{lemma} [Lemma \ref{PCTL as CTLs fragment}, restated]
    The following logic, called \obbligationCTLs, is effectively equivalent to \PCTL:
    \begin{align*}
\varphi ~::= &~
  p
  ~\bigm|~
  \neg \varphi
  ~\bigm|~
  \varphi \lor \varphi
  ~\bigm|~
  \E \psi
  ~\bigm|~
  \E \chi
\\
\psi ~::= &~
  \varphi
  ~\bigm|~
  \psi \land \psi
  ~\bigm|~
  \psi \lor \psi
  ~\bigm|~
  \X \psi
  ~\bigm|~
  \psi \U \psi
  \\
  \chi ~::= &~
  \varphi
  ~\bigm|~
  \chi \land \chi
  ~\bigm|~
  \chi \lor \chi
  ~\bigm|~
  \X \chi
  ~\bigm|~
  \chi \R \chi
\end{align*}

\end{lemma}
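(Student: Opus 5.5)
We prove the two inclusions separately, each by structural induction on state formulae, with Lemma~\ref{PCTL normal form} and Theorem~\ref{safeLTL equivalent to safePLTL} doing the real work. This restated grammar uses full negation on state formulae and only existential quantifiers. Since $\A\psi \equiv \neg\E\neg\psi$, and the negation of a $\U$-only positive path formula is an $\R$-only positive one (and vice versa) once state subformulae are negated, this grammar has the same expressive power as the negation-normal-form version of Lemma~\ref{PCTL as CTLs fragment}. So we may freely switch between the two presentations. Atomic and Boolean cases are trivial in both directions, so only the path-quantifier cases need attention.

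\emph{From \PCTL to \obbligationCTLs.} First put the \PCTL formula into the normal form of Lemma~\ref{PCTL normal form}. Every quantified subformula is then $\E\F\eta$, $\E\G\eta$, $\A\F\eta$ or $\A\G\eta$, where $\eta$ is a pure-past path formula whose maximal state subformulae are, by induction, already translated into \obbligationCTLs. Replace each such maximal state subformula $\varphi$ by a fresh proposition $p_\varphi$. Then $\F\eta'$ is a \cosafePLTL formula and $\G\eta'$ a \safePLTL formula, over the extended alphabet. By Theorem~\ref{safeLTL equivalent to safePLTL}, they have equivalent \cosafeLTL (resp.\ \safeLTL) formulae $\psi'$ (resp.\ $\chi'$). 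Substituting the translated $\varphi$ back for $p_\varphi$ yields a $\psi$- (resp.\ $\chi$-) path formula of the grammar. Prefixing it with the original quantifier gives the result; for $\A$, we rewrite it as $\neg\E$ of the dual formula.

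The key semantic point is the following. A path formula of \PCTLs is evaluated along a path $\pi$ starting at the quantified node $u$, so past operators look back only to $u$, never above it. Hence position $0$ of the path is the beginning of the word. Moreover, the substituted state formulae are evaluated at the nodes of $\pi$ independently of the path. So the path-pointed structure $(t,\pi,0)$ behaves exactly like the $\omega$-word of labels along $\pi$ enriched with the truth values of the $p_\varphi$'s. Therefore word-level equivalence, which is \emph{initial} equivalence, transfers to the tree.

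\emph{From \obbligationCTLs to \PCTL.} This direction is symmetric. Given $\E\psi$ with $\psi$ built from $\U$, $\X$, $\wedge$, $\vee$ over translated state subformulae, abstract the maximal state subformulae as fresh propositions. This makes $\psi$ a \cosafeLTL formula, since negations sit only inside state subformulae, which become atoms. By Theorem~\ref{safeLTL equivalent to safePLTL}, it is initially equivalent to some $\F\eta$ with $\eta$ pure past. Substitute back and obtain $\E\F\eta$, which is a \PCTL formula because $\F = \top\U\cdot$ is immediately preceded by the quantifier. The $\chi$ case gives $\E\G\eta$ in the same way. Negation is then pushed to negation normal form using the duality of $\E/\A$, $\F/\G$, $\Y/\tilde{\Y}$ and the dual of $\s$. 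The last of these needs a weak-since, expressible as $\alpha\s\beta \vee \mathsf{H}\alpha$ with $\mathsf{H}\alpha \equiv \neg(\top\s\neg\alpha)$, and this must itself be put into negation normal form. Alternatively, one keeps \PCTL closed under negation, as implicitly done in the paper.

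\emph{Expected obstacle.} The main difficulty is justifying the substitution step rigorously. The fresh-proposition abstraction is sound only because state subformulae are path-independent and past operators cannot see above the quantified node. Both facts must be checked against the semantics of $\Y$ and $\tilde{\Y}$ at position $0$. A secondary point is negation normal form in the reverse direction, that is, ensuring the dual of $\s$ stays within the \PCTL grammar. Effectiveness follows because the translations of Theorem~\ref{safeLTL equivalent to safePLTL} and Lemma~\ref{PCTL normal form} are constructive.
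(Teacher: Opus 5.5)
Your proposal is correct and takes essentially the same route as the paper. You put the \PCTL formula in the normal form of Lemma~\ref{PCTL normal form}, abstract the maximal state subformulae under each quantifier by fresh propositions, apply Theorem~\ref{safeLTL equivalent to safePLTL}, substitute back, and treat the converse direction symmetrically, which the paper only calls ``virtually identical''. Your justification of the abstraction step (paths start at the quantified node, and state subformulae do not depend on the path) and your negation-normal-form remark fill in details the paper leaves implicit; the latter is settled because $\mathsf{H}\alpha \equiv \alpha \s (\alpha \land \tilde{\Y}\bot)$, so the dual of $\s$ stays inside the \PCTL path grammar.
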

 \begin{proof}
    The idea of the proof is the same of \cite[Theorem 3.5]{HT87}. While Hafer and Thomas' Theorem was a corollary of Gabbay's Separation Theorem, the above lemma is a corollary of the characterisations of \cosafeLTL and \safeLTL provided in \cite{chang1992characterization}. We prove that for every \PCTL formula in the normal form of Lemma \ref{PCTL normal form}, which is expressively complete w.r.t. \PCTL, there is an equivalent \obbligationCTLs formula. The other direction is virtually identical. We use the following grammar for \PCTL normal form:
    \begin{align*}
\varphi ~::= &~
  p
  ~\bigm|~
  \neg \varphi
  ~\bigm|~
  \varphi \lor \varphi
  ~\bigm|~
  \E\F \psi
  ~\bigm|~
  \E\G \psi
\\
\psi ~::= &~
  \varphi
  ~\bigm|~
  \psi \lor \psi
  ~\bigm|~
  \psi \land \psi
  ~\bigm|~
  \Y \psi
  ~\bigm|~
  \tilde{\Y} \psi
  ~\bigm|~
  \psi \s \psi
\end{align*}

    The proof is again by structural induction on state formulae. The atomic and boolean cases are trivial. We discuss the existential path quantifier cases:
    \begin{itemize}
        \item $\E\F\alpha$. Consider $\alpha$. We replace every formula
    $\varphi$ of the form $\E\F\beta$ or $\E\G\beta$ that appears inside $\alpha$ by a fresh
    atomic proposition $p_{\varphi}$. Call the formula thus obtained $\alpha'$. Since $\alpha'$ is a formula containing only past temporal operators, by Theorem \ref{safeLTL equivalent to safePLTL} there is a \cosafeLTL formula $\beta$ equivalent to $\F\alpha'$. The grammar for $\psi$ above is indeed \cosafeLTL, and so $\E\beta$ is a formula of \obbligationCTLs. Finally, we replace inside $\beta$ each atomic proposition $p_{\varphi}$,
    introduced earlier, by a corresponding formula $\tilde \varphi$ that is equivalent to $\varphi$
    and that satisfies the desired grammar: this formula $\tilde \varphi$ can be obtained from $\varphi$
    by inductive hypothesis. Hence, the resulting formula is still a \obbligationCTLs formula.
        \item $\E\G\alpha$. Apply the same reasoning as above, this time wrt \safeLTL. The resulting formula $\E\beta$, where $\beta$ is a \safeLTL formula, is a \obbligationCTLs formula, since the grammar for $\chi$ in the statement of the lemma is \cosafeLTL.
    \end{itemize}
\end{proof}

\begin{theorem}[One direction of Theorem \ref{thm:hwtcf&pctl}] \label{one direction}
    For every \HWTcf $\mathcal{A}$, there is an equivalent \PCTL formula $\varphi_\mathcal{A}$.
\end{theorem}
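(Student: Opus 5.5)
We proceed by induction on the order $Q_1,\dots,Q_k$ of the components witnessing weakness, adapting \cite[Theorem 12]{BDMMP25}. For every state $q\in Q_i$ we build an \obbligationCTLs formula $\varphi_q$ such that, for every pointed tree $(t,u)$, $\mathcal{A}$ started in $q$ accepts the subtree rooted at $u$ iff $(t,u)\models\varphi_q$. At the same time, for every atom $\theta=(\diamond,q')$ (resp.\ $(\square,q')$) we set $\varphi_\theta=\E\X\varphi_{q'}$ (resp.\ $\A\X\varphi_{q'}$). This is legitimate since $\E\X$ and $\A\X$ are available in \obbligationCTLs; note $\X\varphi$ lies in both the $\psi$ and the $\chi$ grammars. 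Finally, Lemma~\ref{PCTL as CTLs fragment} converts $\varphi_{q_I}$ into an equivalent \PCTL formula $\varphi_\mathcal{A}$. Labels are handled by the propositional formulae $\sigma$, i.e., the conjunction of $p$ for $p\in\sigma$ and $\neg p$ for $p\notin\sigma$.

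\emph{Transient components.} If $q$ lies in a transient component, every atom in $\delta(q,\sigma)$ refers to earlier components. We therefore set $\varphi_q=\bigvee_\sigma(\sigma\land\delta(q,\sigma)[\theta\mapsto\varphi_\theta])$.

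\emph{Universal components.} Suppose $Q_i$ is universal and consider the linearisation $\mathcal{A}^q_{\mathrm{exit}}$, a counter-free \UBW or \UCW over $\Sigma\times 2^B$.
\begin{itemize}
\item Each maximal path of a run of $\mathcal{A}$ that stays in $Q_i$ corresponds to a tree path $\pi$ from $u$.
\item The CNF clauses $(\square,q'')\vee\bigvee C$ show that, at each node of $\pi$, an annotation $C$ is "activated" exactly when all atoms of $C$ fail there.
\item Acceptance from $q$ is thus equivalent to: every path $\pi$ from $u$ is accepted by the universal word automaton when read with the annotation sequence induced by the truth values of the $\varphi_\theta$, $\theta\in B$.
\end{itemize}
Visibility is used precisely here. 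Distinct annotations $C\neq C'$ contain complementary atoms, so at each node at most one annotation can be "witnessed". The annotated letter is therefore determined by the tree: it is expressible as a Boolean combination of state formulae $\varphi_\theta$, which are available by induction. Replace these by fresh propositions $p_C$.

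The key word-level claim is that the exit-augmented linearisation recognises a safety language. A rejecting universal component makes the linearisation a looping \UCW, and an accepting one a looping \UBW, once $q_{\mathrm{exit}}$ is treated as a sink in the right way.
\begin{itemize}
\item The \UCW case is co-B\"uchi with final set $Q_i$, so staying forever in $Q_i$ is rejecting. The language is therefore "eventually exit" in every run, which is a co-safety condition in universal form.
\end{itemize}
So I would first check carefully which of looping \UBWcf/\UCWcf applies in each of the four cases (accepting/rejecting $\times$ universal/existential). Then Proposition~\ref{looping automata and results} gives a \safeLTL or \cosafeLTL formula $\psi$ over the propositions $p_C$, and we set $\varphi_q=\A\psi[p_C\mapsto\chi_C]$. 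For the existential case we dually set $\varphi_q=\E\psi$, using looping \NBWcf/\NCWcf.

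A universal \HWTcf only produces $\A$ and $\A\X$ here. Moreover, the translation of Lemma~\ref{PCTL as CTLs fragment} maps $\A$ to $\A$, which yields the \APCTL claim, and dually for \EPCTL.

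The main obstacle is the correctness of this linearisation step. We must show that the automaton's freedom in a universal run (choosing which CNF clause to satisfy) coincides with the word automaton reading the \emph{canonical} annotation determined by the tree. We must also show that counter-freeness of $\mathcal{A}^q_{\mathrm{exit}}$ survives restriction to these canonical annotated words, so that the looping characterisation applies. Both rely on visibility. I would first prove a lemma that, under visibility, the run-based acceptance equals acceptance of the unique consistent annotation, mirroring \cite[Theorem 5.8]{BBMP24}.
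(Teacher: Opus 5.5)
Your plan follows the paper's proof essentially step by step. Both proceed by induction on the component order, translate atoms as $\E\X\varphi_{q'}$ and $\A\X\varphi_{q'}$, handle transient states by a Boolean combination over letters, and treat non-transient components through the linearisation. There, annotations are replaced by fresh propositions $p_C$ made mutually exclusive by visibility, the correspondence between looping counter-free automata and \safeLTL or \cosafeLTL (Proposition~\ref{looping automata and results}) is applied, the $p_C$ are substituted back under a path quantifier, and Lemma~\ref{PCTL as CTLs fragment} finishes. The only difference is bookkeeping: the paper treats the existential accepting case directly and obtains the universal rejecting one by dualising the automaton and negating the formula, whereas you treat universal components directly and existential ones with the dual quantifier.
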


\begin{proof}
    Fix an \HWTcf $\mathcal{A} = \langle Q, \Sigma, \delta, q_I, F \rangle$. The proof will show that there is a translation $f$ that takes in input a formula $\theta \in \mathcal{B}^+(\{\diamond, \square\} \times Q)$ (or a state $q \in Q$) and outputs a \obbligationCTLs formula $f(\theta)$ (or $f(q)$) such that $\Lang(\mathcal{A}^\theta) = \Lang(f(\theta))$ (or $\Lang(\mathcal{A}^q) = \Lang(f(q))$). This implies that by taking in input $q_I$, $f(q_I)$ yields a formula equivalent to the automaton $\mathcal{A}$. In particular, we will translate the automaton in the negation normal form syntax of \obbligationCTLs, namely:
        \begin{align*}
\varphi ~::= &~
  p
  ~\bigm|~
  \neg p
  ~\bigm|~
  \varphi \lor \varphi
  ~\bigm|~
  \varphi \land \varphi
  ~\bigm|~
  \E \psi
  ~\bigm|~
  \E \chi
  ~\bigm|~
  \A \psi
  ~\bigm|~
  \A \chi
\\
\psi ~::= &~
  \varphi
  ~\bigm|~
  \psi \lor \psi
  ~\bigm|~
  \psi \land \psi
  ~\bigm|~
  \X \psi
  ~\bigm|~
  \psi \U \psi
  \\
  \chi ~::= &~
  \varphi
  ~\bigm|~
  \chi \lor \chi
  ~\bigm|~
  \chi \land \chi
  ~\bigm|~
  \X \chi
  ~\bigm|~
  \chi \R \chi
\end{align*}
We will assume that anytime we use the inductive hypothesis, the resulting formula is generated by the above grammar. Then, by Lemma \ref{PCTL as CTLs fragment}, it follows that there is a \PCTL formula equivalent to $\mathcal{A}$, yielding the theorem.

    We begin by showing how to translate a formula:
\begin{itemize}
    \item $f(\top) = \top$
    \item $f(\bot) = \bot$
    \item $f(\theta_1 \land \theta_2) = f(\theta_1) \land f(\theta_2)$
    \item $f(\theta_1 \lor \theta_2) = f(\theta_1) \lor f(\theta_2)$
    \item $f((\diamond,q))  = \E\X f(q)$
    \item $f((\square,q))  = \A\X f(q)$
\end{itemize}

 The rest of the proof will be devoted to show how to translate a state $q$. For every $\sigma \in \Sigma$, we define a \obbligationCTLs formula $\psi_\sigma$, such that
\begin{center}
    $ \psi_\sigma = \bigwedge_{p \in \sigma} p \land \bigwedge_{p \notin \sigma} \neg p$
\end{center}
    which states that the symbol $\sigma$ is being read.

    The proof is by induction on the order of the components $Q_i$ in the partitioning of Q. We distinguish the cases in which the state $q$ is in a transient, existential (accepting or rejecting) or universal (accepting or rejecting) component:
    \begin{itemize}
        \item If $q$ is in a transient component $Q_i$, then by construction, for every $\sigma \in \Sigma$, $\delta(q, \sigma) = \theta_{q, \sigma}$ is a formula in which appear only states from components of order lower than $Q_i$. Then, we can define the translation of $q$ as follows:
        \begin{center}
            $ f(q_i) = \bigvee_{\sigma \in \Sigma} ( \psi_\sigma \land f(\theta_{q, \sigma})) $
        \end{center}
    where the translation of every state in $\theta_{q, \sigma}$ is obtained by inductive hypothesis and the whole formula is translated according to the translation presented above. The soundness of the translation is easily seen.
    \item If $q$ is in an existential accepting component $Q_i$, we consider the linearisation $\mathcal{A}^{q}_{\mathrm{exit}}$ of $Q_i$ as defined in \cref{automaton and conjecture}. By construction it is an \NBW $\mathcal{A}^{q}_{\mathrm{exit}} = \langle Q_i \uplus \{q_{exit}\}, 2^{AP} \times 2^{B_{Q_i}}, \delta', q, Q_i \rangle$. We recall our assumptions on the existential accepting component $Q_i$ and its linearisation:
        \begin{enumerate}
            \item $\mathcal{A}^{q}_{\mathrm{exit}}$ is counter-free and looping;
            \item $2^{B_{Q_i}} \subseteq 2^{Atoms(\mathcal{A})}$, where $Atoms(\mathcal{A})$ is the set of all atoms possibly appearing in $\mathcal{A}$. Furthermore, every element $C \in 2^{B_{Q_i}}$ is a subset of atoms who can contain as states only states in components of order lower than $Q_i$;
            \item $Q_i$ is visible;
        \end{enumerate}

            \emph{Observation 1.} By (2), we can apply the inductive hypothesis on every atom appearing in an element $C \in 2^{B_{Q_i}}$. This means that, for every $C \in 2^{B_{Q_i}}$ we can obtain a \obbligationCTLs formula $f(\bigwedge C)$ such that $\Lang(\mathcal{A}^{\bigwedge C}) = \Lang(f(\bigwedge C))$, where $\bigwedge C$ is the conjunction of every atom appearing in $C$.

        \emph{Observation 2.} By the above observation and (3), it follows that for every $C, C' \in 2^{B_{Q_i}}$, we have $\Lang(f(\bigwedge C)) \cap \Lang(f(\bigwedge C')) = \emptyset$.

        Now, we reason on (1) to understand how to exploit it in order to obtain our translation. First, for every $C \in 2^{B_{Q_i}}$, define a fresh atomic proposition $p_C$. Let $AP_B$ be the union of $AP$ and of all these new atomic propositions. We then let $\mathcal{A}_B = \langle Q_i \cup \{q_{exit}\}, 2^{AP_B}, \delta_B, q, Q_i \rangle$ be $\mathcal{A}^{q}_{\mathrm{exit}}$ in which the alphabet is replaced by $2^{AP_B}$ and $\delta'$ is modified as we describe below. The intuition here is that we are ``hiding'' the atoms in $2^{B_{Q_i}}$ in the new alphabet, by the introduction of these fresh atomic propositions. Then, we define $\delta_B$ as follows: for every $q' \in Q_i$ and $\sigma' \in 2^{AP_B}$, if $\sigma' = \sigma \cup \{p_C\}$, for some $\sigma \in 2^{AP}$ and $C \in 2^{B_{Q_i}}$, then $\delta_B(q', \sigma') = \delta'(q', (\sigma, C))$, otherwise $\delta_B(q', \sigma') = \emptyset$.

        The automaton $\mathcal{A}_B$ thus obtained is obviously still counter-free, since the transition function mimics the transition function of the by assumption counter-free $\mathcal{A}^{q}_{\mathrm{exit}}$. Moreover, it is a looping \NBWcf over the alphabet $2^{AP_B}$.  By Lemma \ref{looping automata and results}, this implies that we can get a \safeLTL formula $\psi_B$ which is equivalent to $\mathcal{A}_B$. Notice that $\psi_B$ must be in negation normal form, since it is a \cosafeLTL formula.

        \textit{Observation 3.} For every $\Sigma$-tree $t$ and $q \in Q_i$, where $Q_i$ is an existential accepting component, $\mathcal{A}^q$ recognizes $t$ iff there is an infinite path $\pi$ in $t$ and an infinite word $w \in \Lang(\mathcal{A}^{q}_{\mathrm{exit}})$ of the form $w = (t(\pi(0)), C_0), (t(\pi(1)), C_1), \dots$, such that for every $j \geq 0$, $t_{\pi(j)}$ is recognized by $\mathcal{A}^{\bigwedge C_j}$.

        The above observation highlights the fact that if the \HWTcf starts in a state within an existential accepting component $Q_i$, then accepting the input tree reduces to the linearisation of $Q_i$ accepting the path ``selected'' by $Q_i$. We can formally reformulate the above observation as follows:

        \textit{Observation 4.} For every $\Sigma$-tree $t$ and $q \in Q_i$, where $Q_i$ is an existential accepting component, $\mathcal{A}^q$ recognizes $t$ iff there is an infinite path $\pi$ in $t$ and an infinite word $w \in \Lang(\mathcal{A}_B)$ over the alphabet $2^{AP_B}$, such that $w \models \psi_B$, and for all $i \geq 0$, (1) $w(i) \cap AP = t(\pi(i))$, (2) for all $p_C \in w(i)$, $t, (\pi(i)) \models f(\bigwedge C)$, and finally (3) there is a unique $C \in 2^{B_{Q_i}}$ such that $p_C \in w(i)$.

        Finally, we consider the formula $\psi_B$ obtained above, and replace every occurrence of the newly introduced atomic propositions $p_C$ by the \obbligationCTLs formula $f(\bigwedge C)$, that, as already argued above, can be obtained by inductive hypothesis. We denote the formula thus obtained by $f(\psi_B)$. Finally, we set the translation of our starting state $q$ to be:
        \begin{center}
            $f(q) = \E (f(\psi_B) \land \G(\bigvee_{C \in 2^{B_{Q_i}}} f(\bigwedge C)) ) $
        \end{center}
        Notice that $f(q)$ is in negation normal form.

        To conclude, we must prove that a $\Sigma$-tree $t \models f(q)$ iff $t$ is recognized by $\mathcal{A}^q$.

        Suppose $t \models f(q)$. Thus, there is an infinite path $\pi$ such that $t, \pi, 0 \models f(\psi_B)$ and there is a sequence $c = C_0, C_1, ...$ such that for all $j \geq 0, t, \pi(j) \models C_j$. We can then define an infinite word $w$ over $2^{AP_B}$ as follows for every $i \geq 0$: $w(i) = t(\pi(i)) \cup \{p_{C_i}\}$. In words, we label the $i$-th position in $w$ with the label of the $i$-th position in the path $\pi$ and a fresh atomic proposition associated with the element $C_i$. Now, we can consider the formula $\psi_B$ over this word, instead of $f(\psi_B)$. Thanks to Observation 4 above, we can conclude that, in order to prove that $\mathcal{A}^q$ recognizes $t$, we must simply prove that $w \models \psi_B$.

        The proof goes by structural induction on the formula $\psi_B$ (which is in negation normal form), by showing that for every subformula $\psi'_B$ of $\psi_B$ and every $i \geq 0$, if $t, \pi, i \models f(\psi'_B)$ then $w, i \models \psi'_B$. Every case trivializes by inductive hypothesis, except for the cases in which $\psi'_B$ is a literal $p_C$ or $\neg p_C$, associated with an element $C$.

        Suppose $\psi'_B = p_C$. Then, $f(\psi'_B) = f(\bigwedge C)$. Assume that $t, \pi, i \models f(\bigwedge C)$ and $t, (\pi(i)) \models f(\bigwedge C_i)$. By Observation 2, it must follow that $C = C_i$, showing that $p_C = p_{C_i}$. By construction of $w$, it follows that $w, i \models p_{C_i}$.

        Suppose $\psi'_B = \neg p_C$. Then, $f(\psi'_B) = \bigvee_{C' \in 2^{B_{Q_i}} \setminus \{C\}} f(\bigwedge C')$. Assume that $t, \pi, i \models \bigvee_{C' \in 2^{B_{Q_i}} \setminus \{C\}} f(\bigwedge C')$ and $t, (\pi(i)) \models f(\bigwedge C_i)$. By Observation 2, it must follow that $C \neq C_i$, showing that $p_C = \neg p_{C_i}$. By construction of $w$, it follows that $w, j \models \neg p_{C_i}$.

        The other direction of the proof, i.e., the one in which we prove that if $t$ is recognized by $\mathcal{A}^q$, then $t \models f(q)$ is similar thanks to Observation 4.

        \item If $q$ is in a universal rejecting component $Q_i$, we apply a dualization argument to reduce this case to the existential one. Let $\overline{\mathcal{A}} = \langle Q, \Sigma, \overline{\delta}, q_I, \overline{F}\rangle$ be the \HWTcf obtained by defining $\overline{F} = Q \setminus F$ and for every $q \in Q, \sigma \in \Sigma$, $\overline{\delta}(q, \sigma) = g(\delta(q, \sigma))$, where $g: (\{\diamond_k, \square_k\} \times Q) \rightarrow (\{\diamond_k, \square_k\} \times Q)$ is the following translation:

        \begin{itemize}
    \item $g(\top) = \bot$
    \item $g(\bot) = \top$
    \item $g(\theta_1 \land \theta_2) = g(\theta_1) \lor g(\theta_2)$
    \item $g(\theta_1 \lor \theta_2) = g(\theta_1) \land g(\theta_2)$
    \item $g((\diamond_k,q))  = (\square_k, q)$
    \item $g((\square_k,q))  = (\diamond_k, q)$
\end{itemize}
We call $\overline{\mathcal{A}}$ the \emph{dual} automaton of $\mathcal{A}$.
     $\mathcal{A}$ and $\overline{A}$ clearly share the same structural constraints, and every existential (resp., universal) component in $\mathcal{A}$ is a universal (resp., existential) one in $\overline{\mathcal{A}}$. Notice also that every accepting component in $\mathcal{A}$ is rejecting in $\overline{\mathcal{A}}$. Moreover, if $\mathcal{A}$ is counter-free and visible, $\overline{A}$ is, too, since no modification is performed at the structural level of the components. By an easy adapatation of \cite[Lemma 5.2]{kupferman2000automata}, it is easy to see that $\overline{\mathcal{A}}$ defines the complement language of $\Lang(\mathcal{A})$. Thus, we can consider $q$ in $\overline{A}$, that is a state in an \emph{existential accepting} component, and apply our translation $f(q)$ defined above. Since we have that $\Lang(\mathcal{A}^q) = \overline{\Lang (\overline{\mathcal{A}^q})}$, we also have that $\neg f(q)$ is equivalent to $\mathcal{A}^q$, and we are done.
     \item For the existential rejecting case and its dual we refer to
     \cite{BDMMP25}, but it is also quite easy to derive them by adapting the
     above proof.
    \end{itemize}
\end{proof}

\begin{theorem}[The other direction of Theorem \ref{thm:hwtcf&pctl}]
    For every \PCTL formula $\varphi$, there is an equivalent \HWTcf $\mathcal{A}_\varphi$.
\end{theorem}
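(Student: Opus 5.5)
By Lemma~\ref{PCTL as CTLs fragment}, it suffices to build, for every \obbligationCTLs formula $\varphi$ in negation normal form, an equivalent \HWTcf $\mathcal{A}_\varphi$. The construction goes by structural induction on state formulae, keeping a single invariant: every automaton produced is weak, hesitant, visible and counter-free. To get visibility for free, I would first rewrite $\varphi$ into the path-quantifier-isolating normal form of \cite[Section 5.1]{BBMP24}, adapted to the \obbligationCTLs grammar. In that form, every maximal state subformula $\varphi_1,\dots,\varphi_m$ occurring directly under a path quantifier $\E\psi$ or $\A\chi$ is replaced by a fresh proposition. Moreover, the propositions are arranged so that the letters the path automaton can read correspond to pairwise mutually exclusive and exhaustive Boolean types of these subformulae. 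This normalisation must preserve the safety/co-safety shape of the path formulae. It does, because it only substitutes state subformulae by propositions and never touches the temporal skeleton.

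The base cases and the Boolean cases are routine. A literal $p$ or $\neg p$ becomes a single transient state whose transition is $\top$ or $\bot$ depending on the letter. For $\varphi_1\land\varphi_2$ and $\varphi_1\lor\varphi_2$, I would take the disjoint union of the two automata and add a fresh transient initial state whose transition is the conjunction, respectively disjunction, of the initial transitions of the two automata. The components of both automata are untouched, so weakness, hesitancy, visibility and counter-freeness are inherited.

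The main case is $\E\psi$ with $\psi$ a \cosafeLTL formula; the case $\A\chi$ is dual, obtained either by complementation via the dual automaton used above or directly. Abstract the state subformulae into the fresh type propositions $p_C$. By Proposition~\ref{looping automata and results}, $\psi$ is equivalent to a looping \NCWcf $\mathcal{N}$ over the abstracted alphabet. Intuitively, the looping co-Büchi condition means that $\mathcal{N}$ must eventually reach the accepting sink. I would turn $\mathcal{N}$ into an existential component $Q_\psi$ of a tree automaton as follows. Each transition $q\xrightarrow{\sigma\cup\{p_C\}}q'$ becomes the clause $(\diamond,q')\land\bigwedge_{\theta\in C}\theta$, where the atoms $\theta$ call the inductively built automata for the state subformulae of type $C$ (these include negated subformulae, handled through the dual automata), read at the current node. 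Reaching the sink becomes the constant $\top$, which exits the component. The component is rejecting, so a run cannot stay in it forever; this matches the co-safety reading of $\mathcal{N}$. For $\A\chi$ I would use a looping \UBWcf, universal atoms $(\square,q')$, and an accepting component.

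The step I expect to be the main obstacle is checking that the linearisation of the new component coincides, up to renaming the annotations $C$ as letters $p_C$, with $\mathcal{N}$. That identification gives counter-freeness directly. Visibility then follows from the fact that distinct annotations describe incompatible types, so some atom in $C$ and some atom in $C'$ recognise complementary languages. One point needs care: a single atom occurrence per annotation may not literally be the complement of one in another annotation. If this is a problem, I would put each subformula and its negation into the atom set explicitly, so that types differing in $\varphi_j$ contain the pair $(\varphi_j,\neg\varphi_j)$. The complementary pair needed by the visibility definition is then available, and the previously built automata are untouched, so the invariant is maintained.
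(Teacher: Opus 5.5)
There is a genuine gap, and it sits exactly where you say the component's clauses use atoms that ``call the inductively built automata for the state subformulae of type $C$, read at the current node''. In this automaton model there are no $\varepsilon$-moves. Every atom $(\diamond,q)$ or $(\square,q)$ is evaluated at the \emph{children} of the current input node, so $\Lang(\mathcal{A}^\theta)$ never depends on the label of the root. Hence a general maximal state subformula $\varphi_j$ (say $p\land\E\psi'$, or $\E\psi'$ with its path starting at the current node) is not the language of any atom.

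The only way to test $\varphi_j$ at the current node is to inline the initial transition $\delta_j(q^0_j,\sigma)$, or its dual, into the clause. Then the annotations of the new component become arbitrary unions of DNF clauses of these inlined formulae, and visibility fails. For instance, if $\delta_j(q^0_j,\sigma)=(\diamond,q_a)\lor(\diamond,q_b)$, then $\{(\diamond,q_a)\}$ and $\{(\diamond,q_b)\}$ are distinct annotations with no complementary pair. Your remedy of adding ``each subformula and its negation'' to the atom set does not help, because no atom denotes ``$\varphi_j$ holds here''. Nor can you check $\varphi_j$ one step late at the child: $(\diamond,q')$ and $(\diamond,q_j)$ need not be discharged at the same child.

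The missing idea is what the normal form of \cite[Section 5.1]{BBMP24} actually does, which is not the propositional abstraction you describe. The paper replaces $\E\psi$ by formulae built from $\D^1\E\psi$, where $\D^1\zeta$ means ``some child satisfies $\zeta$''. This comes from a one-step unfolding of the path formula. It preserves the safety/co-safety shape, but it does touch the temporal skeleton, contrary to your claim. After this shift, every non-propositional maximal state subformula under a path operator is some $\D^1\zeta_i$. Its language is exactly that of the single atom $(\diamond,q_i)$, and its negation is exactly that of $(\square,\overline{q_i})$ in the dual automaton. Propositions are read off $\sigma$ directly. The transitions of the new component become
$\bigvee_{B\subseteq[1,k]}\bigvee_{q'\in\delta_{\mathcal N}(q,\sigma\cup\{p_i : i\in B\})}\bigl((\diamond,q')\land\bigwedge_{i\in B}(\diamond,q_i)\land\bigwedge_{i\notin B}(\square,\overline{q_i})\bigr)$.

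Two things then follow. Any two distinct annotations contain a complementary pair $(\diamond,q_i)$, $(\square,\overline{q_i})$, which gives visibility. It is also harmless that $(\diamond,q')$ and $(\diamond,q_i)$ may go to different children, since $\D^1\zeta_i$ is a property of the current node. With this in place, the rest of your plan matches the paper's proof: the Boolean cases, the dual automata for negations, and counter-freeness via a bijection between annotations and abstracted letters. (You should also treat $\E\chi$ and $\A\psi$, which occur in the grammar; they are analogous.)
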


\begin{proof}
    Given a \PCTL formula, by Lemma \ref{PCTL as CTLs fragment} we can obtain an equivalent \obbligationCTLs formula. In particular, we follow \cite[From Logics to Automata]{BBMP24} and employ a  \emph{simple form} of \obbligationCTLs, in which state formulae of the form $\E\psi$ in the grammar of \obbligationCTLs are replaced by state formulae of the form $\D^1\E\psi$. The semantics of a state formula $\D^1\E\psi$ is:
    \begin{itemize}
        \item $(t,u) \models_s \D^1 \E\psi$ iff $u$ has a child $v$ such that $(t, v) \models_s \E\psi$,
    \end{itemize}

    Hence the syntax we consider is:
        \begin{align*}
\varphi ~::= &~
  p
  ~\bigm|~
  \neg \varphi
  ~\bigm|~
  \varphi \lor \varphi
  ~\bigm|~
  \D^1\E \psi
  ~\bigm|~
  \D^1\E \chi
\\
\psi ~::= &~
  \varphi
  ~\bigm|~
  \psi \lor \psi
  ~\bigm|~
  \psi \land \psi
  ~\bigm|~
  \X \psi
  ~\bigm|~
  \psi \U \psi
  \\
  \chi ~::= &~
  \varphi
  ~\bigm|~
  \chi \lor \chi
  ~\bigm|~
  \chi \land \chi
  ~\bigm|~
  \X \chi
  ~\bigm|~
  \chi \R \chi
\end{align*}

    This simple form is expressively complete for \obbligationCTLs, as follows by a straightforward adaptation of \cite[Proposition B.4]{BBMP24}, and it is needed to enforce the visibility property (or, more precisely, to be able to use the inductive hypothesis). Hence, the proof goes by structural induction on \obbligationCTLs's simple form. We prove the claim as follows. Fix a \obbligationCTLs state formula $\varphi$ in simple form, and associate to each state subformula $\zeta$ an automaton $\mathcal{A}_\zeta$ as follows:

    \begin{itemize}
        \item $\zeta = p$. An automaton equivalent to an atomic proposition $p$ just needs one state. We can set $\mathcal{A}_\zeta = \langle \{q_p\}, 2^{AP}, \delta, q_p, \emptyset\rangle$, with $\delta (q_p, \sigma) = \top$, if $p \in \sigma$, and $\delta(q_p, \sigma) = \bot$, otherwise.
        \item $\zeta = \neg \zeta_1$. By inductive hypothesis, there is an automaton $\mathcal{A}_{\zeta_1}$ equivalent to $\zeta_1$. Then, it suffices to obtain the dual automaton $\overline{\mathcal{A}_{\zeta_1}}$, as defined in the proof of Theorem \ref{one direction}, which defines the complement language of $\Lang(\mathcal{A}_{\zeta_1})$.
        \item $\zeta = \zeta_1 \lor \zeta_2$. By inductive hypothesis, there are two automata $\mathcal{A}_{\zeta_1} = \langle Q_1, 2^{AP}, \delta_{\zeta_1}, q_{\zeta_1}, F_1 \rangle$ and $\mathcal{A}_{\zeta_2} = \langle Q_2, 2^{AP}, \delta_{\zeta_2}, q_{\zeta_2}, F_2 \rangle$, equivalent to $\zeta_1$ and $\zeta_2$, respectively. We define $\mathcal{A}_\zeta = \langle Q_1 \uplus Q_2 \uplus \{q_\zeta\}, 2^{AP}, \delta_\zeta, q_\zeta, F_1 \uplus F_2 \rangle$. The state set $Q_1 \uplus Q_2 \uplus \{q_\zeta\}$ can be partitioned in the components of the partition of $Q_1$, the components of the partition of $Q_2$ and a singleton transient component \{$q_\zeta$\}. The type of the components in $Q_1$ and $Q_2$ is unchanged from their type in the original automata. The ordering of the components is then arranged hierarchically: the singleton component $\{q_\zeta\}$ is the highest-order component, then come the components of $Q_1$, followed by those of $Q_2$. We define $\delta_\zeta$ as $\delta_{\zeta_1}$ (resp., $\delta_{\zeta_2}$), when applied to states in $Q_1$ (resp., $Q_2$), while, for every $\sigma \in 2^{AP}$, $\delta_\zeta(q_\zeta, \sigma) = \delta_\zeta(q_{\zeta_1}, \sigma) \lor \delta_\zeta(q_{\zeta_2}, \sigma)$. Finally, the set of accepting states is the union of the accepting states of $\mathcal{A}_{\zeta_1}$ and $\mathcal{A}_{\zeta_2}$. It is easy to see that this automaton recognizes the union of the languages of $\mathcal{A}_{\zeta_1}$ and $\mathcal{A}_{\zeta_2}$, while still remaining counter-free and visible.
       \item $\zeta = \D^1 \zeta_1$. By inductive hypothesis, there is an automaton $\mathcal{A}_{\zeta_1} = \langle Q, 2^{AP}, \delta, q_I, F \rangle$ equivalent to $\zeta_1$. We define $\mathcal{A}_\zeta = \langle Q \uplus \{q_\zeta\}, 2^{AP}, \delta_\zeta, q_\zeta, F \rangle$. As above, we partition the state set according to the partitioning of $Q$ and take \{$q_\zeta$\} as a singleton transient component. The components of $Q$ are left unchanged with respect to type and ordering, while \{$q_\zeta\}$ is taken as component of highest order. Notice that $F$ is unchanged from $\mathcal{A}_{\zeta_1}$.  Finally, $\delta_\zeta$ is defined as $\delta$ with respect to states in $Q$, while, for every $\sigma \in 2^{AP}$, we have $\delta_\zeta (q_\zeta, \sigma) = (\diamond, q_I)$.  The newly defined automaton scans the root and immediately sends to one of its children the starting state of the automaton $\mathcal{A}_{\zeta_1}$, that we have by inductive hypothesis. The soundness of this construction is clear.

        \item $\zeta = \E \gamma$, with $\gamma$ generated by the above $\chi$ grammar. While in the above cases we could use the inductive hypothesis, in the case of $\E\gamma$ we cannot, since $\gamma$ is not a \emph{state} subformula, but rather a \emph{path} subformula. We shall then proceed in a by now classical way, following \cite[Theorem 5.3]{kupferman2000automata}. First, let $max(\gamma)$ be the set of all state subformulae in $\gamma$ of the form $\D^1 \zeta_1$, which are not themselves subformulae of a formula of the form $\D^1 \zeta_2$. In a certain sense, they are the \emph{maximal} state subformulae of $\gamma$. Hence, we have $max(\gamma) = \{\D^1\zeta_1, \dots, \D^1\zeta_k\}$. By inductive hypothesis, for every $i \in [1, k]$, we have an automaton $\mathcal{A}_{\zeta_i} = \langle Q_i, 2^{AP}, \delta_i, q_i, F_i \rangle$ equivalent to $\zeta_i$. We also consider the dual versions $\overline{\mathcal{A}_{\zeta_i}} = \langle \overline{Q_i}, 2^{AP}, \overline{\delta_i}, \overline{q_i}, \overline{F_i} \rangle$ of each one of these automata. Specifically, we consider a \emph{renaming} of the dual automata, to enforce that all the state sets of these automata (the ones obtained by inductive hypothesis and their renamed duals) are disjoint.

        To proceed, associate to each formula $\D^1 \zeta_i$ a fresh atomic proposition $p_i$. Then, we call $AP_B$ the union of $AP$ and these newly introduced atomic propositions (the name is chosen only to have continuity with the proof of the above Theorem). Now, consider $\gamma$ and replace each formula $\D^1 \zeta_i$ with its associated fresh atomic proposition. Call the resulting formula $\gamma_B$. Clearly, $\gamma_B$ is a \safeLTL formula. By Proposition \ref{looping automata and results}, there is a looping \NBWcf $\mathcal{A}_{\gamma_B}$ equivalent to $\gamma_B$.

        \emph{Observation 1.} Given a $2^{AP}$-tree $t$, $t \models \E\gamma$ iff there is an infinite path $\pi$ in $t$ and an associated infinite word $w$ over $2^{AP_B}$ such that $w \in \Lang(\mathcal{A}_{\gamma_B})$, and the following holds for every $i \geq 0$, (1) $w(i) \cap AP = t(\pi(i))$, (2) for each $j \in [1, k]$, if $p_j \in w(i), $ then $t, \pi(i) \models \D^1 \zeta_j$ and (3) for each $j \in [1, k]$, if $p_j \notin w(i)$, then $t, \pi(i) \models \neg \D^1 \zeta_j$.

        Basically, the word $w$ captures entirely the information available along $\pi$, and of interest with respect to the formula $\E \gamma$. The automaton for $\E\gamma$ then will simulate the \NBWcf $\mathcal{A}_{\gamma_B}$ along a branch, while sending copies of the automata $\mathcal{A}_{\zeta_i}$ and $\overline{\mathcal{A}_{\zeta_i}}$ (for every $i \in [1, k]$) along this branch. We define formally $\mathcal{A}_\zeta = \langle Q \cup \bigcup\limits_{i=1}^{k} (Q_i \cup \overline{Q_i}), 2^{AP}, \delta_\zeta, q_I, F \rangle$.
        $Q$ forms an existential component, which can be seen as the state set of the \NBWcf $\mathcal{A}_{\gamma_B}$. The other states are just the state set of the automata obtained by inductive hypothesis and their renamed duals. The partitioning of this state set is performed according to the partitioning of the original automata, also with respect to types of the components, while the new states in $Q$ form, as already said, a unique existential component. For what concerns the ordering, $Q$ is the component of highest order, while all the others can preserve their original ordering. Then, we have that $\delta_\zeta$ simulates, for every $i \in [1, k]$, $\delta_i$ (resp., $\overline{\delta_i}$) when in a state of $Q_i$ (resp., $\overline{Q_i}$). For every state $q \in Q$ and $\sigma \in 2^{AP}$, on the other hand, $\delta_\zeta$ is defined as follows:
        \[
        \delta_\zeta(q, \sigma) = \bigvee_{B \subseteq [1, k]} \bigvee_{q' \in \delta_{\gamma_B}(q, B(\sigma))} ((\diamond, q') \land \bigwedge_{i \in B} (\diamond, q_i) \land \bigwedge_{i \in [1,k] \setminus B} (\square, \overline{q_i}))        \]
        where $B(\sigma) = \sigma \cup \bigcup_{i \in B} \{p_i\}$ and $\delta_{\gamma_B}$ is just the transition function of the \NBWcf $\mathcal{A}_{\gamma_B}$. In words, the above transition function states that there is a transition of the \NBWcf $\mathcal{A}_{\gamma_B}$ from the state $q$ to the state $q'$ performed by reading the symbol in $2^{AP_B}$ that is the union of $\sigma$ and the fresh atomic propositions $p_j$, for every $j \in B$. The following conjunction then just exploits the automata obtained by inductive hypothesis and their duals to transform the requirement given over the alphabet $2^{AP_B}$ to a requirement over the alphabet $2^{AP}$. Finally, $F$ is just the union of all the accepting states in the automata $\mathcal{A}_{\zeta_i}$ and $\overline{\mathcal{A}_{\zeta_i}}$, for every $i \in [1, k]$.

        By the inductive hypothesis and Observation 1, we obtain that $\mathcal{A}_\zeta$ is visible and $\Lang(\mathcal{A}_\zeta) = \Lang(\E\gamma)$. To obtain counter-freeness, we just need to show that the linearisation of $Q$ over the alphabet $2^{AP} \times 2^{B_{Q}}$ is counter-free and looping. The looping part is trivial, since we have that no state in $Q$ is in $F$. For the counter-freeness this can be done by showing a straightforward bijection between the alphabet $2^{AP} \times 2^{B_Q}$ of the linearisation of $Q$ and the alphabet $2^{AP_B}$ of the \NBWcf $\mathcal{A}_{\gamma_B}$. In particular, for every $\sigma' \in 2^{AP_B}$ of the form $\sigma' = \sigma \cup \bigcup_{i \in I} p_i$, for some $I \subseteq [1,k]$, we can associate a corresponding symbol in $2^{AP} \times 2^{B_Q}$ that is $(\sigma, \bigcup_{i \in I} (\diamond, q_i) \cup \bigcup_{i \in [1,k] \setminus I} (\square, \overline{q_i}))$. Clearly, we can go also in the other direction. Then, it is easy to see that one can exploit this bijection to show that the transition function of the linearisation of $Q$ behaves as $\delta_{\gamma_B}$ over the state set of $\mathcal{A}_{\gamma_B}$ (i.e., $Q$ again), and derive from the counter-freeness of $\mathcal{A}_{\gamma_B}$ the counter-freeness of the linearisation of $Q$.
        \item $\zeta = \E \gamma$, for $\gamma$ generated by the above $\psi$ grammar. This is done as above, replacing \safeLTL by \cosafeLTL and looping \NBWcf by \NCWcf.
        \end{itemize}
\end{proof}

\begin{corollary} \label{Universal apctl}
    For every universal (resp., existential) \HWTcf, there is an equivalent \APCTL (resp., \EPCTL) formula.
\end{corollary}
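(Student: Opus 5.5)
The plan is to follow the translation $f$ constructed in the proof of Theorem~\ref{one direction}, and to track which path quantifiers it introduces when the input automaton is universal (resp., existential). Then we check that the subsequent passage from \obbligationCTLs back to \PCTL, given by Lemma~\ref{PCTL as CTLs fragment}, preserves the polarity of quantifiers.

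\textbf{Universal case.} Fix a universal \HWTcf $\mathcal{A}$. Every transition is a conjunction of atoms $(\square,q)$, so every non-transient component is universal.

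First, transient states are translated as $\bigvee_\sigma(\psi_\sigma\wedge f(\theta_{q,\sigma}))$. Here $f((\square,q))=\A\X f(q)$, so only $\A$ appears.

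Second, universal components, accepting or rejecting, must be handled directly rather than via the complementation trick used in the proof. That trick introduces an $\E$ and then a negation in front, which is harmless up to negation normal form, since $\neg\E\psi\equiv\A\neg\psi$. Concretely, for a universal component I will build the dual of the existential formula: linearise to a looping \UBWcf or \UCWcf over the alphabet annotated with the propositions $p_C$. By Proposition~\ref{looping automata and results}, this yields a \safeLTL or \cosafeLTL formula $\psi_B$. I then set
\[
f(q)=\A\bigl(\G\bigl(\textstyle\bigvee_C f(\bigwedge C)\bigr)\rightarrow f(\psi_B)\bigr).
\]
By visibility (Observation~2), the formulas $f(\bigwedge C)$ are pairwise exclusive, so the implication can be rewritten in negation normal form without using $\E$. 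Alternatively, one can check that when all atoms are $\square$-atoms, the negation of a universal subformula only ever occurs under $\A$ after pushing negations inward. In all cases $f(\bigwedge C)$ involves only $\A$, by induction on the component order.

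Third, apply Lemma~\ref{PCTL as CTLs fragment} from right to left. Its proof replaces an $\A\beta$, with $\beta$ in \safeLTL or \cosafeLTL, by $\A\G\alpha$ or $\A\F\alpha$ with $\alpha$ pure past, using Theorem~\ref{safeLTL equivalent to safePLTL}. Only the quantifier in front is copied, so no $\E$ is created.

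\textbf{Existential case.} This is symmetric: $(\diamond,q)$ becomes $\E\X f(q)$, the components are existential, and the formula $\E(f(\psi_B)\wedge\G\bigvee_C f(\bigwedge C))$ from the proof is already positive in $\E$.

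\textbf{Main obstacle.} The delicate point is negation inside the translation. The literals $\neg p_C$ in $\psi_B$ become $\bigvee_{C'\neq C} f(\bigwedge C')$, and this step relies on the partition given by visibility. I must check that it introduces no dual quantifier. It does not, because the replacement is a disjunction of formulas of the same polarity rather than a negation. I expect this bookkeeping, together with avoiding the dualisation step of Theorem~\ref{one direction}, to be the only real work.
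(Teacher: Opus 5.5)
Your overall plan (run the translation of Theorem~\ref{one direction}, record which quantifiers it introduces, and check that Lemma~\ref{PCTL as CTLs fragment} does not flip them) is the inspection the paper has in mind, and your existential case is fine. The universal case has a gap, and it comes from your decision to avoid the dualisation.

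In negation normal form your formula reads $\A\bigl(\F\bigwedge_C \neg f(\bigwedge C) \vee f(\psi_B)\bigr)$. Each $f(\bigwedge C)$ is a conjunction of formulas $\A\X f(q')$, so $\neg f(\bigwedge C)$ is a disjunction of formulas $\E\X\neg f(q')$. Visibility does not remove these. Pairwise exclusivity lets you replace $\neg f(\bigwedge C)$ by $\bigvee_{C'\neq C} f(\bigwedge C')$ only at positions where some annotation formula holds. That is why the paper can do this for the literals $\neg p_C$ inside the scope of the guard $\G\bigvee_C f(\bigwedge C)$. The negated guard, however, says precisely that at some position none of them holds. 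Your alternative remark fails for the same reason: pushing a negation through $\A\X$ yields $\E\X$; it does not stay under $\A$. The underlying issue is polarity. Under $\A$ the guard occurs negatively, so it must be built from formulas whose negations are $\A$-formulas, namely refutations $\neg f(\bigvee C)$ of the original clauses, not from $f(\bigwedge C)$.

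The dualisation you set aside supplies guards of exactly this polarity, and your own remark that it is harmless up to NNF was correct. For a universal $\mathcal{A}$, the dual $\overline{\mathcal{A}}$ is an existential \HWTcf with the same components. By your existential case, each $\overline{f}(q)$ uses only $\E$. Set $f(q)=\neg\overline{f}(q)$ and push the negation inward. The result uses only $\A$, for three reasons:
\begin{itemize}
\item negation turns $\E$ into $\A$;
\item each annotation formula $\bigwedge_{(\diamond,q')\in C}\E\X\overline{f}(q')$ becomes $\bigvee_{(\diamond,q')\in C}\A\X\neg\overline{f}(q')$, a translation of the original clause $\bigvee C$;
\item safety and co-safety path formulas swap, so the result stays in the \obbligationCTLs grammar.
\end{itemize}
This is what the paper's ``the same holds for universal \HWTcf'' amounts to.

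Separately, your third step is asserted too quickly. Substituting the state subformulas back into the pure-past formula given by Theorem~\ref{safeLTL equivalent to safePLTL} creates no $\E$ only if the fresh propositions standing for $\A$-subformulas occur positively in it. That theorem does not guarantee this, and the paper's one-line inspection is silent on this point as well.
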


\begin{proof}
    It suffices to inspect the proof of Theorem \ref{one direction}, and notice that when we translate components in which only disjunctions and atoms of the form $(\diamond, q)$ appear in the transition function, we always end up in the fragment that only uses existential path quantification. The same holds for universal \HWTcf.
\end{proof}

\mainthm*

\begin{proof}
    Given an \MPL formula $\varphi$, one can obtain an equivalent \APT. Suppose to have an algorithm to decide \APT $\mapsto$ \AWT. Then, by running the algorithm on the \APT equivalent to $\varphi$, one can decide if the formula is \AWT recognizable. If it is, by the equivalence of \AWT and \WMSO, $\varphi$ is also definable in \WMSO. Thus, $\Lang(\varphi)$ is a language in $\MPL \cap \WMSO$. Then, by Conjecture \ref{con:comfrg}, there is an \HWTcf recognizing $\Lang(\varphi)$, and by Theorem \ref{thm:hwtcf&pctl}, also a \PCTL formula defining it. If $\varphi$ is instead not \WMSO definable, for a similar reasoning it follows that $\Lang(\varphi)$ is not \PCTL definable. The same holds if the input is given as a \CTLs formula, because of the already mentioned \cite[Theorem 3.6]{HT87}.
\end{proof}

\section{Proofs from Section \ref{Common fragment}}

\Wordtotree*

\begin{proof}
    Given $\mathcal{A} = \langle Q, \Sigma, \delta, q_I, F \rangle$ as in the statement of the Lemma, we show how to construct an \HWTcf recognizing $\triangle [\Lang(\mathcal{A})]$.

    First, we construct a UWW $\mathcal{B}$ equivalent to $\mathcal{A}$. We construct $\mathcal{B}$ by taking two copies of $\mathcal{A}$. In particular, define $\mathcal{B} = \langle (Q\times\{1\}) \cup (Q \setminus F \times \{2\}), \Sigma, \delta', (q_I, 1), (Q\times \{1\}) \rangle$. Hence, we take a first copy of $Q$ in its entirety, and a second copy of $Q$ devoid of final states. Moreover, every state in the first copy is deemed accepting, while all the others are not. Finally, we define $\delta'$ as follows:
    \begin{itemize}
        \item if $\delta(q, \sigma) = q' \notin F$, $\delta'((q, 1), \sigma$) = $\{(q',1), (q', 2)\}$ and $\delta'((q,2), \sigma) = \{(q', 2)\}$,
        \item if $\delta(q, \sigma) = q' \in F, \delta' ((q,1), \sigma) = \{(q', 1)\}$ and $\delta'((q,2), \sigma) = \emptyset$.
    \end{itemize}
    
    The intuition behind these constraints is simple: whenever the original automaton goes to a non final state, $\mathcal{B}$ spawns a run in the second copy of $Q$ (or, if it's starting from a state in the second copy, it keeps staying in the second copy); on the other hand, whenever the original automaton goes into a final state, $\mathcal{B}$ either keeps staying in the first copy (which is entirely accepting), or deems an entire run as accepting, if it was in the second copy.

    $\mathcal{B}$ is clearly weak, since ($Q\times \{1\}$) and $(Q\setminus F \times \{2\})$ are two different components, the former entirely accepting, the latter entirely rejecting, and there is no transition from $(Q \setminus F \times \{2\})$ to $(Q \times \{1\})$. Moreover, $\mathcal{B}$ is counter-free. This is because:
    \begin{itemize}
        \item the restriction of $\mathcal{B}$ to the first copy is isomorphic to $\mathcal{A}$ as a transition structure, and so it is counter-free by assumption;
        \item the transitions from the first copy to the second one cannot create counters, because a counter requires a closed path of states. Since there is no transition from the second copy back to the first, any closed path must be entirely contained within either the first copy or the second copy;
        \item the second copy is an induced subgraph of $Q$: since $\mathcal{A}$ is deterministic, a counter is strictly defined by the presence of a non-trivial cyclic path of deterministic transitions. Removing states  removes transitions and paths; it cannot create new paths or new cycles. Therefore, any counter in the second copy would necessarily be a counter in the original automaton $\mathcal{A}$. Since $\mathcal{A}$ is counter-free, the second copy must also be counter-free.
    \end{itemize}

    Finally, we have $\Lang(\mathcal{A}) = \Lang(\mathcal{B})$.
    We show both inclusions:
    \begin{itemize}
        \item $\Lang(\mathcal{A}) \subseteq \Lang(\mathcal{B})$. Let $w \in \Lang(\mathcal{A})$. This means that the run of $\mathcal{A}$ over $w$ visits infinitely often states in $F$. Consider all the possible runs of $\mathcal{B}$ over $w$. First, there is a run staying forever in the first copy, simply mirroring all the transitions of $\mathcal{A}$. This run is clearly accepting, because the first copy is entirely accepting. Then, there is a  possibly infinite number of other runs, which are the runs going into the second copy every time $\mathcal{A}$ goes into a non final state. Nevertheless, all these other runs are also accepting. By assumption, the run of $\mathcal{A}$ on $w$ visits infinitely often states in $F$. Since a state in the second copy of $Q$ in $\mathcal{B}$ deems a run accepting whenever the corresponding state in $\mathcal{A}$ would have gone to a final state, this implies, together with the assumption that $\mathcal{A}$ accepts $w$, that all the runs of $\mathcal{B}$ on $w$ going in the second copy will eventually be deemed accepting.
        \item $\Lang(\mathcal{B}) \subseteq \Lang(\mathcal{A})$. Suppose $w \in \Lang(\mathcal{B})$. This means that every run of $\mathcal{B}$ over $w$ is deemed accepting. By mirroring the reasoning performed above the inclusion follows easily.
    \end{itemize}

    To conclude the proof, we consider the tree automaton $\mathcal{\triangle B}$, which is obtained from $\mathcal{B}$ by transforming every state $(q, n)$ in the output of the transition function of $\mathcal{B}$, for $n \in \{0,1\}$, in an atom $(\square, (q, n))$. Clearly, $\triangle[\Lang(\mathcal{B})]  = \Lang(\mathcal{\triangle B})$. Moreover $\mathcal{\triangle B}$ is :
    \begin{itemize}
        \item weak, because $\mathcal{B}$ is, too,
        \item hesitant, because the first and the second copy satisfy the constraints of the universal components,
        \item counter-free, because the first and the second components are both counter-free, as argued above,
        \item visible, since this constraint trivializes in the absence of disjunction.
    \end{itemize}

    This concludes the proof.
\end{proof}



\msocmn*

\begin{proof}
    Let $\varphi$ be an \MSO sentence. Then, by the equivalence with \APT, it is possible to obtain an \APT automaton recognizing $\mathcal{L}(\varphi)$. It is then easy to check if $\mathcal{L}(\mathcal{A})$ is a language of the form $\triangle[L]$, for some regular $\omega$-word language $L$ (see, e.g., \cite[Theorem 8]{bojanczyk2008common}). If it is not, then $\mathcal{L}(\varphi)$ is not in \LTL $\cap$ \PCTL. Otherwise, it suffices to check if $L$ is recognizable by a \DBWcf. As already argued, this is decidable. Then, if $L$ is recognizable by a \DBWcf, $\mathcal{L}(\varphi)$ is in \LTL $\cap$ \PCTL, otherwise, it is not.
\end{proof}

\section{Proofs from Section \ref{no current}}

\pctlwmtl*

\begin{proof}
    Consider \PCTL syntax as follows:
    \begin{align*}
\varphi ~::= &~
  p
  ~\bigm|~
  \neg \varphi
  ~\bigm|~
  \varphi \lor \varphi
  ~\bigm|~
  \E\X \varphi
  ~\bigm|~
  \E \varphi \U \varphi
  ~\bigm|~
  \A \varphi \U \varphi
  ~\bigm|~
  \Y \varphi
  ~\bigm|~
  \varphi \s \varphi
\end{align*}

This syntax is equivalent to the one provided in the main paper. We inductively define a translation $f_x$, where $x$ is a first order variable, that takes in input a \PCTL formula and outputs an equivalent \WMTL formula with one free first order variable $x$ as follows:
\begin{itemize}
    \item $f_x(p) = p(x)$, for every $p \in AP$,
    \item $f_x(\neg\varphi) = \neg f_x(\varphi)$,
    \item $f_x(\varphi_1 \lor \varphi_2) = f_x(\varphi_1) \lor f_x(\varphi_2)$,
    \item $f_x(\E\X\varphi) = \exists y (child(x,y) \land f_y(\varphi))$,
    \item $f_x(\E(\varphi_1 \U \varphi_2) =  \exists y ( x \le y \land f_y(\varphi_2) \land \forall z ( (x \le z < y) \rightarrow f_z(\varphi_1) ))$
    \item $f_x(\A\varphi_1\U\varphi_2) = \exists X ( x \in X \land \forall y \in X (f_y(\varphi_2)  \lor (f_y(\varphi_1) \land \forall z (child(y, z) \rightarrow z \in X))))$
    \item $f_x (\Y\varphi) = \exists y (child(y,x) \land f_y(\varphi))$
    \item $f_x(\varphi_1 \s \varphi_2) = \exists y ( y \le x \land f_y(\varphi_2) \land \forall z ( (y \le z < x) \rightarrow f_z(\varphi_1) ))$
\end{itemize}

where $child (y, x) = x > y \land \neg \exists w (x > w > y)$.
The soundness of the translation is easy to check. This shows that \PCTL can be translated soundly in \WMTL, and thus it is at most as expressive as the latter.
\end{proof}

\wmtlmpl*

\begin{proof}
    \MPL $\not\subseteq$ \WMTL follows from the fact that \MPL $\not\subseteq$ \WMSO. Consider for example the language ``on every path, $p$ occurs finitely often'', over the alphabet $2^{\{p,q\}}$. By \cite[Lemma 7]{rabin1970weakly}, this language is not \WMSO definable, but it is clearly \MPL definable as follows:
    \begin{center}
        $\neg \exists X \forall x (x \in X \rightarrow \exists y(x < y \land p(y) \land y \in X))$
    \end{center}

    Hence, it remains to show that \WMTL $\not\subseteq$ \MPL.     Consider again the alphabet $2^{\{p,q\}}$. The separating language will be $U =$ ``there is a finite subtree containing the root such that: its leaves are labeled $\{p,q\}$, its internal nodes labeled $\{q\}$ have at least a child and its internal nodes not labeled $\{q\}$ have exactly two children''. This language is definable in \WMTL as follows (notice that in the following formula the ``exactly two children'' requirement is fulfilled under the assumption that we are considering full binary trees, i.e., trees in which every node has exactly two children):
    \[
        \exists X. root \in X \land \forall z \in X \rightarrow \land
        \begin{cases}
            (\neg p(z) \rightarrow \exists y(child(y,z) \land y \in X) \\
            (\neg q(z) \rightarrow \forall y(child(y,z) \rightarrow y \in X)
        \end{cases}
    \]

    Our claim is that this language is not \MPL definable. We will show it exploiting two families of trees. We first introduce these two families of trees using a rather straightforward definition. Then, we provide a more technically involved definition that is specifically tailored for the subsequent proof.
    Hence, let us define a \emph{$\Sigma$-multicontext} as a finite or infinite $\Sigma \cup \{*_1, \dots, *_k\}$-tree, for $k \in \mathbb{N}$, in which every internal node is labeled by a symbol of $\Sigma$ and every leaf (if any) is labeled by $*_i$, for some $i \in [1,k]$. Consider the following $2^{\{p,q\}}$-multicontext $c$:

  \begin{center}
\begin{forest}
for tree={
draw,
circle,         
minimum size=2em,
inner sep=2pt,
l=3em,          
s sep=1.5em     
},
where n children=0{shape=rectangle}{}
[$\{q\}$
[$\{p\}$
[$\ast_1$]
[$\ast_2$]
]
[$\{p\}$
[$\ast_3$]
[$\ast_4$]
]
]
\end{forest}
\end{center}
    Then, let $l$ be the $2^{\{p,q\}}$-tree whose root is labeled by $\{q\}$ and all its other nodes are labeled by $\emptyset$, and $r$ be the $2^{\{p,q\}}$-tree whose root is labeled by $\{p,q\}$ and all its other nodes are labeled by $\emptyset$. In pictures:
\begin{center}

$l =$
\begin{tikzpicture}[
    baseline=(current bounding box.center),
    level distance=1.8cm,
    sibling distance=2cm,
    edge from parent/.style={draw, ->},
    edge from parent path={(\tikzparentnode.south) -- (\tikzchildnode.north)}
  ]
  \node[draw, circle, minimum size=2.5em, inner sep=1pt] {$\{q\}$}
    child { node[draw, regular polygon, regular polygon sides=3, minimum size=3.5em, inner sep=0pt] {$\emptyset$} }
    child { node[draw, regular polygon, regular polygon sides=3, minimum size=3.5em, inner sep=0pt] {$\emptyset$} };
\end{tikzpicture}
\hspace{2cm} 
$r =$
\begin{tikzpicture}[
    baseline=(current bounding box.center),
    level distance=1.8cm,
    sibling distance=2cm,
    edge from parent/.style={draw, ->},
    edge from parent path={(\tikzparentnode.south) -- (\tikzchildnode.north)}
  ]
  \node[draw, circle, minimum size=2.5em, inner sep=1pt] {$\{p,q\}$}
    child { node[draw, regular polygon, regular polygon sides=3, minimum size=3.5em, inner sep=0pt] {$\emptyset$} }
    child { node[draw, regular polygon, regular polygon sides=3, minimum size=3.5em, inner sep=0pt] {$\emptyset$} };
\end{tikzpicture}

\end{center}
 We then define $t_1$ to be $c$ in which the leaf labeled $*_1$ is substituted by $l$ and the leaves labeled $*_2, *_3$ and $*_4$ are substituted by $r$, while $t'_1$ is defined as $c$ in which the leaves labeled $*_1$, $*_3$ are substituted by $l$ while the leaves labeled $*_2$ and $*_4$ are substituted by $r$. In pictures:
 \begin{center}
$t_1 =$
\begin{tikzpicture}[
    baseline=(current bounding box.center),
    level 1/.style={sibling distance=6cm, level distance=2cm},
    level 2/.style={sibling distance=3cm, level distance=2cm},
    edge from parent/.style={draw, ->},
    edge from parent path={(\tikzparentnode.south) -- (\tikzchildnode.north)}
  ]
  \node[draw, circle, minimum size=2.5em, inner sep=1pt] {$\{q\}$}
    child { node[draw, circle, minimum size=2.5em, inner sep=1pt] {$\{p\}$}
        child { node[draw, regular polygon, regular polygon sides=3, minimum size=4.5em, inner sep=0pt] {$l$} }
        child { node[draw, regular polygon, regular polygon sides=3, minimum size=4.5em, inner sep=0pt] {$r$} }
    }
    child { node[draw, circle, minimum size=2.5em, inner sep=1pt] {$\{p\}$}
        child { node[draw, regular polygon, regular polygon sides=3, minimum size=4.5em, inner sep=0pt] {$r$} }
        child { node[draw, regular polygon, regular polygon sides=3, minimum size=4.5em, inner sep=0pt] {$r$} }
    };
\end{tikzpicture}

\vspace{2cm} 

$t'_1 =$
\begin{tikzpicture}[
    baseline=(current bounding box.center),
    level 1/.style={sibling distance=6cm, level distance=2cm},
    level 2/.style={sibling distance=3cm, level distance=2cm},
    edge from parent/.style={draw, ->},
    edge from parent path={(\tikzparentnode.south) -- (\tikzchildnode.north)}
  ]
  \node[draw, circle, minimum size=2.5em, inner sep=1pt] {$\{q\}$}
    child { node[draw, circle, minimum size=2.5em, inner sep=1pt] {$\{p\}$}
        child { node[draw, regular polygon, regular polygon sides=3, minimum size=4.5em, inner sep=0pt] {$l$} }
        child { node[draw, regular polygon, regular polygon sides=3, minimum size=4.5em, inner sep=0pt] {$r$} }
    }
    child { node[draw, circle, minimum size=2.5em, inner sep=1pt] {$\{p\}$}
        child { node[draw, regular polygon, regular polygon sides=3, minimum size=4.5em, inner sep=0pt] {$l$} }
        child { node[draw, regular polygon, regular polygon sides=3, minimum size=4.5em, inner sep=0pt] {$r$} }
    };
\end{tikzpicture}
\end{center}

 It is easy to see that $t_1 \in U$ and $t'_1 \notin U$. We then define infinite such trees as follows: for every $i \ge 2$, $t_i$ is defined to be $c$ in which the leaf labeled $*_1$ is substituted by $t'_{i-1}$ and the leaves labeled $*_2,*_3$ and $*_4$ are substituted by $t_{i-1}$, while $t'_i$ is defined to be $c$ in which the leaves labeled $*_1$ and $*_3$ are substituted by $t'_{i-1}$, while the leaves labeled $*_2$ and $*_4$ are substituted by $t_{i-1}$. Again, for every $i \ge 2$, $t_i \in U$ and $t'_i \notin U$. We will call these two families of trees $T$ and $T'$, respectively. The underlying intuition is that for every path $\pi$ in $t_i$ (and symmetrically for $t'_i$), there exists a corresponding path $\pi'$ in $t'_i$ that not only shares the exact same sequence of labels, but also enters an \emph{isomorphic subtree} starting from (at most) its third position. For instance, suppose a CTL$^*$ formula $\E\alpha$ is satisfied by $t_i$; this implies there is a path $\pi$ such that $t_i, \pi, 0 \models \alpha$. Moreover, suppose that $\pi$ is a path defined by its first two turns (e.g., an initial left turn followed by a right turn). Because the subtrees reached in $t_i$ and $t'_i$ after these first two steps are isomorphic, there must exist a matching path $\pi'$ in $t'_i$ that perfectly mirrors the structure and labeling of $\pi$ from that point onward, guaranteeing that $t'_i, \pi', 0 \models \alpha$ holds as well. For the first two positions of such a path, we will instead rely on the identity of the labels and inductive hypothesis (once an induction is defined). This intuition will be formalized in a claim we will prove after the introduction of the \CTLs syntax we are going to use and a new definition of the families of trees $T$ and $T'$, better suited for the subsequent proof. 

 We consider $\CTLs$, generated by the following syntax:
 \begin{align*}
\varphi ~::= &~
  p
  ~\bigm|~
  \neg \varphi
  ~\bigm|~
  \varphi \lor \varphi
  ~\bigm|~
  \E \psi
\\
\psi ~::= &~
  \varphi
  ~\bigm|~
  \neg \psi
  ~\bigm|~
  \psi \lor \psi
  ~\bigm|~
  \X \psi
  ~\bigm|~
  \psi \U \psi
\end{align*}

 This is equivalent to the syntax provided in the main paper. Given a \CTLs (state or path) formula $\varphi$, we denote by $|\varphi|$ the \emph{size} of $\varphi$, that is considered as the length of the string for representing $\varphi$. 
 To present the claim to prove that \CTLs cannot define $U$, we define two further families of trees $S$ and $S'$, and redefine $T$ and $T'$ in terms of $S$ and $S'$. To do so, we use the following $2^{\{p,q\}}$-multicontexts $c'$ and $d$:

\begin{center}

$c' =$
\begin{tikzpicture}[
    baseline=(current bounding box.center),
    level distance=1.8cm,
    sibling distance=2cm,
    edge from parent/.style={draw, ->},
    edge from parent path={(\tikzparentnode.south) -- (\tikzchildnode.north)}
  ]
  \node[draw, circle, minimum size=2.5em, inner sep=1pt] {$\{q\}$}
    child { node[draw, rectangle, minimum size=3em, inner sep=2pt] {$\ast_1$} }
    child { node[draw, rectangle, minimum size=3em, inner sep=2pt] {$\ast_2$} };
\end{tikzpicture}
\hspace{2cm} 
$d =$
\begin{tikzpicture}[
    baseline=(current bounding box.center),
    level distance=1.8cm,
    sibling distance=2cm,
    edge from parent/.style={draw, ->},
    edge from parent path={(\tikzparentnode.south) -- (\tikzchildnode.north)}
  ]
  \node[draw, circle, minimum size=2.5em, inner sep=1pt] {$\{p\}$}
    child { node[draw, rectangle, minimum size=3em, inner sep=2pt] {$\ast_1$} }
    child { node[draw, rectangle, minimum size=3em, inner sep=2pt] {$\ast_2$} };
\end{tikzpicture}
\end{center}
Notice that $c'$ is simply a truncated version of $c$. Now, we define $s_1$ and $s'_1$ to be $d$ in which we substitute the leaf labeled by $*_1$ with $l$ and $*_2$ with $r$, while $s'_1$ is defined as $d$ in which we substitute both $*_1$ and $*_2$ with $r$. In pictures:
\begin{center}

$s_1 =$
\begin{tikzpicture}[
    baseline=(current bounding box.center),
    level distance=1.8cm,
    sibling distance=2cm,
    edge from parent/.style={draw, ->},
    edge from parent path={(\tikzparentnode.south) -- (\tikzchildnode.north)}
  ]
  \node[draw, circle, minimum size=2.5em, inner sep=1pt] {$\{p\}$}
    child { node[draw, regular polygon, regular polygon sides=3, minimum size=3.5em, inner sep=0pt] {$l$} }
    child { node[draw, regular polygon, regular polygon sides=3, minimum size=3.5em, inner sep=0pt] {$r$} };
\end{tikzpicture}
\hspace{2cm} 
$s'_1 =$
\begin{tikzpicture}[
    baseline=(current bounding box.center),
    level distance=1.8cm,
    sibling distance=2cm,
    edge from parent/.style={draw, ->},
    edge from parent path={(\tikzparentnode.south) -- (\tikzchildnode.north)}
  ]
  \node[draw, circle, minimum size=2.5em, inner sep=1pt] {$\{p\}$}
    child { node[draw, regular polygon, regular polygon sides=3, minimum size=3.5em, inner sep=0pt] {$r$} }
    child { node[draw, regular polygon, regular polygon sides=3, minimum size=3.5em, inner sep=0pt] {$r$} };
\end{tikzpicture}
\end{center}

It follows that we can redefine $t_1$ and $t'_1$ to be:
\begin{center}

$t_1 =$
\begin{tikzpicture}[
    baseline=(current bounding box.center),
    level distance=1.8cm,
    sibling distance=2cm,
    edge from parent/.style={draw, ->},
    edge from parent path={(\tikzparentnode.south) -- (\tikzchildnode.north)}
  ]
  \node[draw, circle, minimum size=2.5em, inner sep=1pt] {$\{q\}$}
    child { node[draw, regular polygon, regular polygon sides=3, minimum size=3.5em, inner sep=0pt] {$s_1$} }
    child { node[draw, regular polygon, regular polygon sides=3, minimum size=3.5em, inner sep=0pt] {$s'_1$} };
\end{tikzpicture}
\hspace{2cm} 
$t'_1 =$
\begin{tikzpicture}[
    baseline=(current bounding box.center),
    level distance=1.8cm,
    sibling distance=2cm,
    edge from parent/.style={draw, ->},
    edge from parent path={(\tikzparentnode.south) -- (\tikzchildnode.north)}
  ]
  \node[draw, circle, minimum size=2.5em, inner sep=1pt] {$\{q\}$}
    child { node[draw, regular polygon, regular polygon sides=3, minimum size=3.5em, inner sep=0pt] {$s_1$} }
    child { node[draw, regular polygon, regular polygon sides=3, minimum size=3.5em, inner sep=0pt] {$s_1$} };
\end{tikzpicture}
\end{center}
i.e., as $c'$, in which the leaf labeled $*_1$ is substituted by $s_1$ and the leaf labeled $*_2$ by $s'_1$, while $t'_1$ is $c'$ in which both leaves are substituted by $s_1$.  

It is easy to predict the following: for every $i \ge 2$, we define $s_i$ to be $d$ in which the leaf labeled $*_1$ is substituted by $t_{i-1}$ and the one labeled $*_2$ by $t'_{i-1}$, while $s'_i$ is defined to be $d$ in which both leaves are substituted by $t_{i-1}$. Hence, for every $i$, $t_i$ can be redefined as $c'$ in which the leaf labeled $*_1$ is substituted by $s'_1$, while the other leaf is substituted by $*_2$, while $t'_i$ is redefined to be $c'$ in which both leaves are substituted by $s_i$.   Again in pictures, for $i \ge 2$, $s_i$ and $s'_i$ are defined as follows:

\begin{center}

$s_i =$
\begin{tikzpicture}[
    baseline=(current bounding box.center),
    level distance=1.8cm,
    sibling distance=2cm,
    edge from parent/.style={draw, ->},
    edge from parent path={(\tikzparentnode.south) -- (\tikzchildnode.north)}
  ]
  \node[draw, circle, minimum size=2.5em, inner sep=1pt] {$\{p\}$}
    child { node[draw, regular polygon, regular polygon sides=3, minimum size=3.5em, inner sep=0pt] {$t'_{i-1}$} }
    child { node[draw, regular polygon, regular polygon sides=3, minimum size=3.5em, inner sep=0pt] {$t_{i-1}$} };
\end{tikzpicture}
\hspace{2cm} 
$s'_i =$
\begin{tikzpicture}[
    baseline=(current bounding box.center),
    level distance=1.8cm,
    sibling distance=2cm,
    edge from parent/.style={draw, ->},
    edge from parent path={(\tikzparentnode.south) -- (\tikzchildnode.north)}
  ]
  \node[draw, circle, minimum size=2.5em, inner sep=1pt] {$\{p\}$}
    child { node[draw, regular polygon, regular polygon sides=3, minimum size=3.5em, inner sep=0pt] {$t_{i-1}$} }
    child { node[draw, regular polygon, regular polygon sides=3, minimum size=3.5em, inner sep=0pt] {$t_{i-1}$} };
\end{tikzpicture}
\end{center}

while $t_i$ and $t'_i$ are as follows:

\begin{center}

$t_i =$
\begin{tikzpicture}[
    baseline=(current bounding box.center),
    level distance=1.8cm,
    sibling distance=2cm,
    edge from parent/.style={draw, ->},
    edge from parent path={(\tikzparentnode.south) -- (\tikzchildnode.north)}
  ]
  \node[draw, circle, minimum size=2.5em, inner sep=1pt] {$\{q\}$}
    child { node[draw, regular polygon, regular polygon sides=3, minimum size=3.5em, inner sep=0pt] {$s_i$} }
    child { node[draw, regular polygon, regular polygon sides=3, minimum size=3.5em, inner sep=0pt] {$s'_i$} };
\end{tikzpicture}
\hspace{2cm} 
$t'_i =$
\begin{tikzpicture}[
    baseline=(current bounding box.center),
    level distance=1.8cm,
    sibling distance=2cm,
    edge from parent/.style={draw, ->},
    edge from parent path={(\tikzparentnode.south) -- (\tikzchildnode.north)}
  ]
  \node[draw, circle, minimum size=2.5em, inner sep=1pt] {$\{q\}$}
    child { node[draw, regular polygon, regular polygon sides=3, minimum size=3.5em, inner sep=0pt] {$s_i$} }
    child { node[draw, regular polygon, regular polygon sides=3, minimum size=3.5em, inner sep=0pt] {$s_i$} };
\end{tikzpicture}
\end{center}

Following standard binary tree conventions, we use $0$ to denote a left turn and $1$ to denote a right turn. We still consider unordered trees, we just use this convention to make the presentation clearer. Thus, a path $\pi$ of the form $01\pi'$ is a path that starts at the root, descends to the left child, moves to its right child, and then continues along the remaining path $\pi'$. Let $\Pi_{t_i}$, $\Pi_{t'_i}$, $\Pi_{s_i}$, and $\Pi_{s'_i}$ denote the sets of paths in trees $t_i$, $t'_i$, $s_i$, and $s'_i$, respectively. 

To proceed, we define four families of mapping functions $f_i, f'_i, g_i,$ and $g'_i$, for all $i \ge 1$, which maps paths between the unprimed and primed trees back and forth. The crucial intuition here is that while a function takes a path in one tree and outputs a path in a completely different tree, the input and output sequences remain absolutely indistinguishable with respect to their labels. To emphasize this identical labeling, we employ a slight but deliberate abuse of notation: we reuse the symbol $\pi$ on both sides of the mapping. Here, $\pi$ acts as the tail of the input path in the source tree, and that exact same sequence of labels $\pi$ constitutes the tail of the output path in the target tree. Formally, for any initial step $\sigma \in \{0,1\}$ and remaining path sequence $\pi$, we define:
\begin{itemize}
    \item $g_i: \Pi_{s_i} \rightarrow \Pi_{s'_i}$ defined as $g_i(0\pi) = \sigma f'_{i-1}(\pi)$ and $g_i(1\pi) = \sigma\pi$;
    \item $g'_i: \Pi_{s'_i} \rightarrow \Pi_{s_i}$ defined as $g'_i(\sigma \pi) = 1 \pi$;
    \item $f_i : \Pi_{t_i} \rightarrow \Pi_{t'_i}$ defined as $f_i(0\pi) = \sigma\pi$ and $f_i(1\pi) = \sigma g'_{i-1}(\pi)$;
    \item $f'_i: \Pi_{t'_i} \rightarrow \Pi_{t_i}$ defined as $f'_i(\sigma\pi) = 0\pi$.
\end{itemize}

It is easy to see that the output path of these functions is identically labeled to the input path. Moreover, let us observe also the following, for every $m \ge 1$: 
\begin{itemize}
    \item for every $i \ge 1$, for every $\pi \in \Pi_{s_m}$, the subtree of $s_m$ rooted at $\pi(i)$ is isomorphic to the subtree of $s'_m$ rooted at $g_m(\pi)(i)$;
    \item for every $i \ge 1$, for every $\pi \in \Pi_{t'_m}$, the subtree of $t'_m$ rooted at $\pi(i)$ is isomorphic to the subtree of $t_m$ rooted at $f'_m(\pi)(i)$;
    \item for every $i \ge 2$, for every $\pi \in \Pi_{s'_m}$, the subtree of $s'_m$ rooted at $\pi(i)$ is isomorphic to the subtree of $s'_m$ rooted at $g'_m(\pi)(i)$;
    \item for every $i \ge 2$, for every $\pi \in \Pi_{t_m}$, the subtree of $t_m$ rooted at $\pi(i)$ is isomorphic to the subtree of $t'_m$ rooted at $g_m(\pi)(i)$.
\end{itemize}

Again, the above follows by a straightforward inspection of the families of trees we considered. 

Finally, we can state and prove the claim that yields the theorem. Notice that a path formula can also be a state one, according to our grammar. Nevertheless, we separate the two cases for clarity.

\begin{claim}
    Let $\alpha$ be a \CTLs formula. Then, for all $m \ge |\alpha|$:
    \begin{enumerate}
        \item if $\alpha$ is a state formula, then $t_m \models \alpha$ iff $t'_m \models \alpha$ and $s_m \models \alpha$ iff $s'_m \models \alpha$;
        \item if $\alpha$ is a path formula, then:
        \begin{itemize}
            \item for all $\pi \in \Pi_{t_m}$, $t_m, \pi, 0 \models \alpha$ iff $t'_m, f_m(\pi), 0 \models \alpha$;
            \item for all $\pi \in \Pi_{t'_m}$, $t'_m, \pi, 0 \models \alpha$ iff $t_m, f'_m(\pi), 0 \models \alpha$;
            \item for all $\pi \in \Pi_{s_m}$, $s_m, \pi, 0 \models \alpha$ iff $s'_m, g_m(\pi), 0 \models \alpha$;
            \item for all $\pi \in \Pi_{s'_m}$, $s'_m, \pi, 0 \models \alpha$ iff $s_m, g'_m(\pi), 0 \models \alpha$.
        \end{itemize}
    \end{enumerate}
\end{claim}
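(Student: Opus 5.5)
I would prove the Claim by strong induction on $|\alpha|$, proving the state and path parts simultaneously, with a case analysis on the outermost constructor of the grammar above. Since the Claim must hold for every $m \ge |\alpha|$, for each subformula $\beta$ of $\alpha$ with $|\beta| < |\alpha|$ the inductive hypothesis is available at the level $m$ and also at $m-1 \ge |\beta|$. The level $m-1$ is where the subtrees $t_{m-1}, t'_{m-1}$ sit inside $s_m, s'_m$, and $s_m,s'_m$ sit inside $t_m,t'_m$ at level $m$.

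The Boolean cases ($\neg$, $\lor$, for state and path formulae) are immediate from the hypothesis. For an atomic $p$ the roots of $t_m,t'_m$ carry $\{q\}$ and those of $s_m,s'_m$ carry $\{p\}$. Viewed as a path formula, $p$ only looks at position $0$, and all four mappings preserve labels.

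For the state case $\E\psi$ I would use the path part for $\psi$ (size $<|\alpha|$). If $t_m \models \E\psi$ via $\pi$, then $f_m(\pi)$ witnesses $t'_m\models\E\psi$. Conversely, $f'_m$ transports witnesses back, and $g_m,g'_m$ do the same for $s_m,s'_m$. The state formula viewed as a path formula reduces to this case.

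The core of the argument is the path cases $\X\psi$ and $\psi_1\U\psi_2$. Here I would first prove an auxiliary fact. If $\pi$ and $h(\pi)$ (for $h$ one of the four mappings) have isomorphic subtrees rooted at position $i$ and onward, as listed in the observations preceding the Claim, then every path formula has the same truth value at position $i$ on $\pi$ and on $h(\pi)$. The reason is that \CTLs has no past operators, so $(t,\pi,i)\models\beta$ depends only on the subtree at $\pi(i)$ and the suffix of the path. The isomorphism is compatible with the path tails, since the mappings copy the tail $\pi$. This handles all positions $i\ge 1$ or $i\ge2$. The remaining initial positions, $0$ and possibly $1$, carry identical labels. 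For these positions one needs state subformulae evaluated at the node $\pi(1)$, which is a root of an $s_m$ or $s'_m$ inside $t_m$. When $\pi(1)$ and $h(\pi)(1)$ are roots of $s_m$ versus $s'_m$ (or $t_{m-1}$ versus $t'_{m-1}$), I would invoke the state part of the hypothesis for the subformulae. For path subformulae evaluated from position $1$, I would invoke the path part with the appropriate mapping ($g_m$ or $g'_m$), checking that the tail of $h(\pi)$ from position $1$ is exactly the image of the tail of $\pi$ under that mapping. This is how $f_m$ and $g_m$ are defined recursively.

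Then $\X\psi$ is settled by applying this to position $1$. For $\psi_1\U\psi_2$, the truth value of each $\psi_j$ agrees at position $0$ by the hypothesis at level $m$, at position $1$ via the sub-mapping argument, and at positions $\ge2$ by isomorphism. Hence the witness position and all intermediate positions transfer.

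The main obstacle will be the bookkeeping at positions $0$ and $1$. For each of the four mappings I must verify that at position $1$ the two nodes are either roots of isomorphic subtrees or roots of a pair covered by the hypothesis ($s_m/s'_m$, or $t_{m-1}/t'_{m-1}$, using $m-1\ge|\beta|$). I must also verify that the induced tail correspondence is precisely the mapping the hypothesis speaks about. The mappings as written are somewhat sloppy (the sign $\sigma$, and $g_i$ referring to $f'_{i-1}$), so I would first fix them concretely: choose for each path a label-preserving partner whose divergence from isomorphism is confined to the first two levels. The base level $m=1$, with $l$ and $r$, would then be checked directly.
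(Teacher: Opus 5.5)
Your proposal is correct and follows essentially the paper's argument: induction on $|\alpha|$ for all $m \ge |\alpha|$, transporting witness paths through $f_m, f'_m, g_m, g'_m$, using isomorphism of subtrees (with no past operators) from position $2$ onward, and using the hypothesis at positions $0$ and $1$. You also make explicit the use of the hypothesis at level $m-1$, which the paper leaves implicit under ``analogous'' cases. One small correction there: for the pair $t'_{m-1}/t_{m-1}$ inside $s_m$, the mapping to invoke is $f'_{m-1}$, not $g_m$ or $g'_m$ as your parenthetical suggests.
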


\begin{claimproof}
    By induction on $|\alpha|$. 
    \begin{itemize}
        \item Base: $|\alpha| = 1$. Hence $|\alpha|$ is an atomic proposition in both items' case.
        \begin{enumerate}
            \item Clearly, for every $m > 1$, the roots of $t_m$ and $t'_m$, as well as those of $s_m$ and $s'_m$, are identically labeled, thereby being indistinguishable by an atomic proposition. 
            \item As above. 
        \end{enumerate}
        \item Step: $|\alpha| > 1$. The inductive hypothesis states that the claim holds for all formulas $\beta$ such that $|\beta| < |\alpha|$. 
        \begin{enumerate}
            \item If $\alpha$ is a state formula, boolean cases are trivially handled by inductive hypothesis. The only interesting case is the existential path quantifier:
            \begin{itemize}
                \item $\alpha = \E\beta$. We show only the case assuming $t_m \models \E\beta$, for some $m \ge |\alpha|$, all the others being analogous. Suppose $t_m \models \E\beta$. Then, for some path $\pi \in \Pi_{t_m}$, $t_m, \pi, 0 \models \beta$. Since $|\beta| < |\alpha|$, we can apply the inductive hypothesis, and by item 2 of the claim it follows that $t_m, \pi, 0 \models \beta$ iff $t'_m, f_m(\pi), 0 \models \beta$. Hence $t_m, \pi(0) \models \E\beta$ iff $t'_m, f_m(\pi)(0) \models \E\beta$, implying $t_m \models \E\beta$ iff $t'_m \models \E\beta$, since $\pi(0)$ and $f_m(\pi)(0)$ denote the roots of $t_m$ and $t'_m$, respectively. 
            \end{itemize}
            \item If $\alpha$ is a path formula, again boolean cases are trivial by inductive hypothesis. Hence, we handle temporal cases:
            \begin{itemize}
                \item $\alpha = \X \beta$. Suppose $t_m, \pi, 0 \models \X\beta$. By $\X$ semantics, this implies $t_m, \pi, 1 \models \beta$. We distinguish two cases: one in which $\pi$ is of the form $0\pi'$, and one in which $\pi$ is of the form $1\pi'$:
                \begin{itemize}
                    \item if $\pi$ is of the form $0\pi'$, then $f_m(\pi)$ is just the identity function. Moreover, the subtree rooted at $\pi(1)$ is isomorphic to the subtree rooted at $f_m(\pi)(1)$. It follows that $t_m, \pi, 1 \models \beta$ iff $t'_m, f_m(\pi), 1 \models \beta$ (if $\beta$ is a path formula as well as a state one), implying that $t_m, \pi, 0 \models \X\beta$ iff $t'_m, f_m(\pi), 0 \models \X\beta$. 
                    \item if $\pi$ is of the form $1\pi'$, then $f_m(\pi)$ = $\sigma g'_m(\pi')$. Clearly, $t_m, \pi, 1 \models \beta$ is equivalent to $s_m, \pi', 0 \models \beta$, for such a path. By inductive hypothesis, $s_m, \pi', 0 \models \beta$ iff $s'_m, g'_m(\pi'), 0 \models \beta$. This is equivalent to $t_m, \pi, 1 \models \beta$ iff $t'_m, f_m(\pi), 1 \models \beta$, by construction of the trees, and we can conclude $t_m, \pi, 0 \models \X\beta$ iff $t'_m, f_m(\pi), 0 \models \X\beta$. 
                \end{itemize}
                The other cases are either easier or analogous, and always follow by the properties of our construction and of the mappings. 
                \item $\alpha = \beta_1 \U \beta_2$. Suppose $t_m, \pi, 0 \models \beta_1 \U \beta_2$. By $\U$ semantics, this is equivalent to $t_m, \pi, 0 \models \beta_2 \lor (\beta_1 \land \X(\beta_1\U\beta_2)$. Again we distinguish two cases:
                \begin{itemize}
                    \item if $\pi$ is of the form $0\pi'$, then $f_m(\pi)$ is just the identity function. If $t_m, \pi, 0 \models \beta_2$ then by inductive hypothesis $t_m, \pi, 0 \models \beta_2$ iff $t'_m, f_m(\pi), 0 \models \beta_2$, implying $t_m, \pi, 0 \models \beta_1 \U \beta_2$ iff $t'_m, f_m(\pi), 0 \models \beta_1 \U \beta_2$. On the other hand, if $t_m, \pi, 0 \models \beta_1 \land \X(\beta_1\U\beta_2)$, then $t_m, \pi, 0 \models \beta_1$ and $t_m, \pi, 0 \models \X(\beta_1\U\beta_2)$. By inductive hypothesis, $t_m, \pi, 0 \models \beta_1$ iff $t'_m, f_m(\pi), 0 \models \beta_1$.                Moreover, the subtree rooted at $\pi(1)$ is isomorphic to the subtree rooted at $f_m(\pi)(1)$. It follows that $t_m, \pi, 1 \models \beta_1 \U \beta_2$ iff $t'_m, f_m(\pi), 1 \models \beta_1 \U \beta_2$, implying that $t_m, \pi, 0 \models \X(\beta_1\U\beta_2)$ iff $t'_m, f_m(\pi), 0 \models \X(\beta_1\U\beta_2)$, yielding the claim. 
                    \item if $\pi$ is of the form $1\pi'$, then $f_m(\pi) = \sigma g'_m(\pi')$. Unfolding further the semantics of $\U$, notice that $t_m, \pi, 0 \models \beta_1 \U \beta_2$  is equivalent to $t_m, \pi, 0 \models \beta_2 \lor (\beta_1 \land \X\beta_2) \lor (\beta_1 \land \X\beta_1 \land \X\X(\beta_1 \U\beta_2))$. Then:
                    \begin{itemize}
                        \item if $t_m, \pi, 0 \models \beta_2$, the case is handled easily by inductive hypothesis;
                        \item if $t_m, \pi, 0 \models \beta_1 \land \X\beta_2$, the $\beta_1$ part is handled by inductive hypothesis and the $\X\beta_2$ by using the already proved case of $\X$;
                        \item if $t_m, \pi, 0 \models \beta_1 \land \X\beta_1 \land \X\X(\beta_1 \U \beta_2))$, the $\beta_1$ and $\X\beta_1$ parts are proved as immediately above, while to prove that $t_m, \pi, 0 \models \X\X(\beta_1 \U \beta_2)$, we simply notice that the subtree rooted at $\pi(2)$ and $f_m(\pi)(2)$ are isomorphic, hence $t_m, \pi, 2 \models \beta_1 \U \beta_2$ iff $t'_m, f_m(\pi), 2 \models \beta_1 \U \beta_2$, and we are done. 
                    \end{itemize}
                \end{itemize}
                All the remaining cases are proved by similar reasoning.
            \end{itemize}
        \end{enumerate}
        \end{itemize} 
\end{claimproof}
This claim shows that \CTLs, and analogously \MPL, by their equivalence, cannot define a \WMTL-definable language, namely $U$.
\end{proof}

\end{document}